\newcommand{\dd}{\partial}
\def\begeqar{\begin{eqnarray}}
\def\endeqar{\end{eqnarray}}
\def\begeq{\begin{equation}}
\def\endeq{\end{equation}}
\def\wgta#1#2#3#4{\hbox{\rlap{\lower.35cm\hbox{$#1$}}
\hskip.2cm\rlap{\raise.25cm\hbox{$#2$}}
\rlap{\vrule width1.3cm height.4pt}
\hskip.55cm\rlap{\lower.6cm\hbox{\vrule width.4pt height1.2cm}}
\hskip.15cm
\rlap{\raise.25cm\hbox{$#3$}}\hskip.25cm\lower.35cm\hbox{$#4$}\hskip.6cm}}
\def\wgtb#1#2#3#4{\hbox{\rlap{\raise.25cm\hbox{$#2$}}
\hskip.2cm\rlap{\lower.35cm\hbox{$#1$}}
\rlap{\vrule width1.3cm height.4pt}
\hskip.55cm\rlap{\lower.6cm\hbox{\vrule width.4pt height1.2cm}}
\hskip.15cm
\rlap{\lower.35cm\hbox{$#4$}}\hskip.25cm\raise.25cm\hbox{$#3$}\hskip.6cm}}
\def\begeqar{\begin{eqnarray}}
\def\endeqar{\end{eqnarray}}
\newcommand{\tensor}{\otimes}
\newcommand{\q}{\mathfrak{q}}
\newcommand{\ffrac}[2]{\mbox{\footnotesize$\displaystyle\frac{#1}{#2}$}}
\newcommand{\TL}[1]{\mathsf{TL}_{#1}}
\newcommand{\JTL}[1]{\mathsf{JTL}^{au}_{#1}}
\newcommand{\rJTL}[1]{\mathsf{JTL}_{#1}}
\newcommand{\ATL}[1]{\mathsf{T}^a_{#1}}
\newcommand{\poless}{\preceq}
\newcommand{\pomore}{\succeq}
\newcommand{\LQG}{U_{\q} s\ell(2)}
\newcommand{\LQGodd}{U^{\text{odd}}_{\q} s\ell(2)}
\newcommand{\half}{%
  \mathchoice{\ffrac{1}{2}}{\frac{1}{2}}{\frac{1}{2}}{\frac{1}{2}}}
\newcommand{\Hilb}{\mathcal{H}}
\newcommand{\veven}[1]{|v^{\text{even}}\rangle}
\newcommand{\vodd}[1]{|v^{\text{odd}}\rangle}
\newcommand{\oN}{\mathbb{N}}
\newcommand{\oC}{\mathbb{C}}
\newcommand{\oZ}{\mathbb{Z}}
\newcommand{\Endo}{\mathrm{End}}
\newcommand{\Hom}{\mathrm{Hom}}
\newcommand{\Vir}{\mathcal{V}}
\newcommand{\VirN}{\mathfrak{vir}\oplus\overline{\mathfrak{vir}}}
\newcommand{\gl}{g\ell}
\newcommand{\sll}{s\ell}
\newcommand{\ExtJTL}{\mathrm{Ext}_{\rule{0pt}{7.5pt}%
{\rJTL{N}}}^1}
\newcommand{\rep}{\mathcal}
\newcommand{\repF}{\rep{F}} 
\newcommand{\repJ}{\rep{J}}
\newcommand{\scrBox}{{\scriptstyle\Box}}
\newcommand{\chVv}{\Hilb_{N}}
\newcommand{\chV}{V}
\newcommand{\repgl}{\pi_{m,n}}
\newcommand{\PrTL}[1]{\mathcal{P}_{#1}}
\newcommand{\StTL}[1]{\mathcal{W}_{#1}}
\newcommand{\IrrTL}[1]{\mathcal{X}_{#1}}
\newcommand{\IrJTL}[2]{[#1,#2]}
\newcommand{\AIrrTL}[2]{\IrJTL{#1}{#2}}
\newcommand{\IrrJTL}[2]{\mathcal{X}_{#1,#2}}
\newcommand{\PrJTL}[2]{\mathcal{P}_{#1,#2}}
\newcommand{\StJTL}[2]{\mathcal{W}_{#1,#2}}
\newcommand{\bStJTL}[2]{\overline{\mathcal{W}}_{#1,#2}}
\newcommand{\TilJTL}[2]{\mathcal{T}_{#1,#2}}
\newcommand{\AStTL}[2]{\mathcal{W}_{#1,#2}}
\newcommand{\bAStTL}[2]{\overline{\mathcal{W}}_{#1,#2}}
\newcommand{\IrCA}[1]{\mathcal{X}_{#1}}
\newcommand{\PrCA}[1]{\mathcal{P}_{#1}}
\newcommand{\StCA}[1]{\mathcal{W}_{#1}}
\newcommand{\atyp}[1]{\{#1\}}
\newcommand{\slPr}{P}
\newcommand{\HXXZ}{H_{\mathrm{XXZ}}(K)}
\newcommand{\PXXZ}{P_{\mathrm{XXZ}}(K)}
\newcommand{\Ket}[1]{\left|#1  \right>}
\newtheorem{Thm}[subsection]{Theorem}
\newtheorem{conj}[subsubsection]{Conjecture}
\theoremstyle{definition}
\newtheorem{rem}[subsubsection]{Remark}
\begin{document}
\begin{center}

\Large{
The periodic $s\ell(2|1)$ alternating spin chain and its continuum limit  as a bulk Logarithmic Conformal Field Theory at $c=0$}

\vskip 1cm

{\large A.M. Gainutdinov $^{a,b}$, N. Read $^{c}$,
H. Saleur $^{d,e}$ and R. Vasseur $^{f,g}$}

\vspace{1.0cm}

{\sl\small $^a$  Fachbereich Mathematik, Universit\"at Hamburg, Bundesstra\ss e 55,
20146 Hamburg, Germany\\}
{\sl\small $^b$
DESY, Theory Group, Notkestrasse 85, Bldg. 2a, 22603 Hamburg, Germany\\}
{\sl\small $^c$\ Department of Physics, Yale University, P.O. Box 208120, New Haven, CT 06520-8120, USA\\}
{\sl\small $^d$  Institut de Physique Th\'eorique, CEA Saclay,
Gif Sur Yvette, 91191, France\\}
{\sl\small $^e$ Department of Physics and Astronomy,
University of Southern California,
Los Angeles, CA 90089, USA\\}
{\sl\small $^f$ Department of Physics, University of California, Berkeley,
 Berkeley, CA 94720, USA\\}
{\sl\small $^g$ Materials Science Division, Lawrence Berkeley National Laboratory,
 Berkeley, CA 94720, USA\\}

\end{center}

\begin{abstract}
The periodic $s\ell(2|1)$ alternating spin chain encodes  (some of) the properties of hulls of percolation clusters, and  is described in the continuum limit by a logarithmic conformal field theory (LCFT) at central charge $c=0$. This theory corresponds to the strong coupling regime of a sigma model on the complex projective superspace $\mathbb{CP}^{1|1} = \mathrm{U}(2|1) / (\mathrm{U}(1) \times \mathrm{U}(1|1))$, and the spectrum of critical exponents can be obtained exactly. In this  paper we push  the analysis further, and  determine the main representation theoretic (logarithmic) features of   this continuum limit by extending to the periodic case the approach of~\cite{ReadSaleur07-2} [N. Read and H. Saleur, Nucl. Phys. B {\bf 777} 316 (2007)]. We first focus on determining the representation theory of the finite size spin chain with respect to the algebra of local energy densities provided by a representation of the affine Temperley-Lieb algebra at fugacity one. We then analyze how these algebraic properties carry over to the continuum limit to deduce the structure of the space of states as a representation over the product of left and right Virasoro algebras. Our main result is the full structure of the vacuum module of the theory, which exhibits Jordan cells of arbitrary rank for the Hamiltonian.
\end{abstract}

\section{Introduction}

Many  applications of conformal field theory to statistical mechanics and condensed matter physics are related to the case of  central charge $c=0$. These applications include
the  statistical properties of critical geometrical models such as self-avoiding walks (polymers) or percolation, and the critical properties of non interacting $2+1$ dimensional disordered electronic systems -- for instance, at the  transition between plateaux in the integer quantum Hall effect.

Unfortunately, apart from some supergroup WZW models, the only well understood $c=0$ conformal field theory is the trivial, minimal and unitary,  one, which contains  no field but the identity. This corresponds to the existence of the trivial modular invariant partition function $Z=1$ at $c=0$, which is the result obtained by calculating the  $n\to 0$ limit of the partition function of $O(n)$ models or replicated disordered systems, or the $Q\to 1$ limit of the Q-state Potts model (see the recent review~\cite{Cardy-rev}). Of course, these geometrical or disordered models have non trivial observables and critical exponents. But to observe them, one needs to understand what is happening  ``outside'' the minimal trivial theory. While this issue was identified rather early~\cite{Saleurold}, it has proven surprisingly hard to control entirely. For instance, despite the huge progress realized in determining exponents for polymers and percolation, including rigorous work and connections with the SLE formalism, there is to this day no agreement -- let alone a consistent proposal -- on what ``the'' proper conformal field theoretic description of say percolation clusters could be. This means in particular that very little is known about four point functions of geometrical observables
in the bulk, despite the well established existence of measurable, universal quantities~\cite{SaleurDerrida}.

An aspect that got rather quickly understood is that the introduction of non trivial observables in say percolation forces one to consider a non unitary conformal field theory. While this is, on general grounds, a rather small price to pay for the description of observables which are most of the time non local in geometrical terms, the non-unitarity of the $c=0$ theory one is after is bound to be rather violent (in contrast say to what happens in the Lee--Yang singularity), and leads in particular to the emergence of \textit{logarithmic features}~\cite{Gurarie}. The adjective logarithmic refers to logarithmic dependence of the four point functions on the harmonic ratio, and to logarithmic terms in the OPEs. It does not  imply that the  field theory is in any way non local,
but rather that the representations of the Virasoro algebra which are involved in the description of the Hilbert space of the theory are not fully reducible. In other words, the action of the dilatation operator $L_0$ is not always diagonalizable~\cite{RozanskySaleur}.

It turns out to be pretty hard to deal with logarithmic conformal field theories in general, and for many years the field has seen but little progress. The difficulty can be traced back to the complexity of Virasoro algebra representation theory once the criterium of unitarity -- or semi-simplicity -- has been relaxed. In fact, the study of LCFTs appears at first sight at least as difficult as the study of non semi-simple Lie algebras, which is proverbially intricate indeed - and plagued most of the time by wilderness issues~\cite{wilderness}.

Nevertheless, the study of WZW models on super groups~\cite{SaleurSchomerus,QSch-rev}, the construction of restricted classes of indecomposable modules~\cite{Rohsiepe,GK1,MatRid}, and the discovery
of deep relations with the theory of associative algebras~\cite{Pearceetal, ReadSaleur07-1,ReadSaleur07-2} have suggested that the problem, however hard, might not be impossible. In the last few years, based in part on the analysis of lattice regularizations and of the deeper role played by quantum groups at roots of unity~\cite{[FGST],[FGST4]}, a lot has been understood about \textit{boundary} LCFTs. There are now reasonable conjectures about the general structure of Virasoro indecomposable modules~\cite{VGJS} and the fusion rules~\cite{[FHST],ReadSaleur07-1,ReadSaleur07-2,MatRid,RP1,GRW,[BGT],GVfusion}, methods to determine the logarithmic couplings (indecomposability parameters)~\cite{DubJacSal, VJSbnumbers}, \textit{etc.}, see the recent reviews~\cite{LCFTlatticereview,Rid-rev}. An important role in these recent developments has been played by algebra, and concepts such as projective and tilting modules, which we review below.

The case of \textit{bulk} LCFT remains however less understood. Indecomposable features now occur not only within the chiral and antichiral sectors, but also in the way they are mixed, and there is evidence that the relationship between bulk and boundary LCFTs is considerably  more intricate than for unitary CFTs \cite{GabRunW,VGJSletter,Ridout}.  This paper is the fourth in a series~\cite{GRS1,GRS2,GRS3} aimed at extending in  the bulk case  a lattice approach that was quite successful in the boundary case. Rather than try to build a $c=0$ LCFT abstractly, we focus on a well defined, local, lattice model, the $s\ell(2|1)$ Heisenberg spin chain with alternating fundamental and conjugate fundamental representations. This chain is closely related -- this is discussed in details below -- with the properties of the \textit{hulls} of percolation clusters. It is gapless, and can be argued to have a conformal invariant continuum limit indeed,
which must have $c=0$ and be logarithmic. In a nutshell, our strategy is to infer as many properties of this LCFT as possible from our analysis of the lattice model.
Despite the fact that we focus on what is, after all, a very simple model, the endeavor remains highly difficult, and our results will be presented in a series of papers.

One of the key ideas at the root of our strategy is that most indecomposable aspects of the LCFT are already present on the lattice, in finite size, the algebra of local energy densities playing the role of a lattice version of the Virasoro (or product of left and right Virasoro) algebra. For the model we consider here (as well as in our previous papers), this algebra is the Temperley Lieb algebra in the boundary case, and the Jones Temperley Lieb algebra in the bulk case. A representation theoretic analysis of these algebras acting on our models is thus a perquisite to our understanding of the logarithmic properties of their continuum limit, and will occupy us in a great deal of the present paper.

We start in section 2 by discussing the percolation problem and  how it is related with conformal field theory. We then focus on properties of cluster boundaries - so called hulls. We recall how they are formulated in terms of an alternating $s\ell(2|1)$ super spin chain, which is expected to flow, in the continuum limit, to a super projective sigma model~\cite{SQHE,ReadSaleur01}. We also remind the reader of basic considerations about modular invariance, observables, and logarithmic features. Section 3 provides reminders on the algebraic description of the $s\ell(2|1)$ spin chain, and some basic representation theoretic properties of the corresponding algebras $\JTL{2L}(m)$ and  $\rJTL{2L}(m)$, both in the generic case, and in the special case $m=1$ of interest here.

In section 4 - which is the most important of this paper - we discuss  the  decomposition of the $s\ell(2|1)$ spin chain over indecomposable representations of  $\rJTL{2L}(1)$. This involves several technical aspects. First, we make the crucial observation --  discussed in more detail in  appendix  \ref{sec:TL-faith} -- that the $s\ell(2|1)$ spin chain provides a \textit{faithful} representation of the algebra $\rJTL{2L}(1)$, in sharp contrast with the $\gl(1|1)$ spin chain studied in~\cite{GRS1,GRS2}, where the corresponding representation was not faithful. The continuum limit of this $\gl(1|1)$ spin chain is the ubiquitous $c=-2$ symplectic fermions theory, a rather simple example of LCFT, which is quite well understood now: accordingly, the indecomposable modules appearing in this spin chain are of very manageable form. We believe this is due to the non faithfulness, and the fact that the $\gl(1|1)$ spin chain sees only a small part of the complexity of the full $\rJTL{2L}(0)$ algebra: this complexity would be revealed in the $\gl(2|2)$ spin chain, whose continuum limit, although also having $c=-2$, is considerable more involved than symplectic fermions. Note that, in contrast with the periodic version, open $\gl(1|1)$ and $s\ell(2|1)$ spin chains both provide faithful representations of the  the ordinary Temperley-Lieb algebra, and the modules appearing in their continuum limit are of similar complexity.

Faithfulness gives us access to powerful tools in the analysis of the spin chain, especially when combined with the fact that $\rJTL{N}(m)$ is a cellular algebra.  This requires manipulating several key concepts of the theory of associative algebras, which are explained in section \ref{sec:indecgen}. In section \ref{sec:proj}, we discuss the structure of projective modules over $\rJTL{N}(1)$. In section \ref{sec:self dual}, we discuss why, the spin chain admitting a non degenerate bilinear form, the representation of $\rJTL{N}(1)$ should be self-dual. This allows us to argue that the building blocks of our spin chains are a special kind of modules -- called \textit{tilting} -- which we introduce in section \ref{sec:tilt}. In section  \ref{secsl21resuts}, we put all these ingredients together to obtain the decomposition of the spin chain. Details on the structure of the tilting modules are provided in section \ref{sec:Jcells}, together with a discussion of the rank of Jordan cells. A remarkable result is that arbitrarily large ranks for the Hamiltonian are encountered as the length of the chain increases.

We then turn to conformal field theory. This is considerably more involved than in the $\gl(1|1)$ case since the $s\ell(2|1)$ spin chain is not free. To start, we focus  in  section \ref{sec:takecontlim} on the generating functions of energy levels, which contain information about the left-right Virasoro $\VirN$ content in the scaling limit. While the chain is not solvable in the usual sense, these generating functions can be obtained by using known results for twisted XXZ spin chains where the same standard modules appear: results for the $s\ell(2|1)$ chain itself are then obtained by using the algebraic analysis. In section \ref{sec:opconsimmod} we then turn to the discussion of the operator content of simple JTL modules at $c=0$. While in the case of the $\gl(1|1)$ spin chain, simples of JTL led, in the scaling limit, to direct sums of simples of $\VirN$, we find  that it is not the case here. We provide the essential features of the left-right Virasoro structure of simple JTL modules in the scaling limit, and reach  in particular the conclusion that the size of Jordan cells in the continuum theory can be even larger than those observed on the lattice. This is a new feature, compared with the boundary and $\gl(1|1)$ cases. We also discuss the field content of our theory up to level $(2,2)$.
In section~\ref{sec:indextiltmod}, we finally come to the discussion of the indecomposable content of the scaling limit of the tilting modules.
We discuss in particular the identity or vacuum module, and the appearance of Jordan cells for $L_0+\bar{L}_0$ of arbitrarily large rank. The last section contains conclusions and directions for future work.

\begin{figure}
\begin{equation*}
  \xymatrix@R=28pt@C=12pt
   { &&&&(2,2)\ar[dl]\ar[dll]\ar[dlll]\ar[dr]\ar[drr]\ar[drrr]&&&&\\
   &(2,7)\ar[dl]\ar[drr]\ar[dr]&(2,5)\ar[d]\ar[dl]\ar[dr]&(2,0)\ar[d]\ar[dl]\ar[dr]\ar[drr]\ar[drrrr]\ar[drrrrr]&&(0,2)\ar[d]\ar[dl]\ar[dll]\ar[dllll]\ar[dlllll]\ar[dr]&(5,2)\ar[d]\ar[dl]\ar[dr]&(7,2)\ar[dl]\ar[dr]\ar[dll]&\\
   (0,7)\ar[dr]\ar[drrrrr]&(0,5)\ar[dr]\ar[drrrr]&(2,1)\ar[d]\ar[dr]\ar[dl]&(2,2)\ar[d]\ar[dl]\ar[dll]\ar[drr]&(0,0)\ar[dl]\ar[dr]&(2,2)\ar[d]\ar[dr]\ar[dll]\ar[drr]&(1,2)\ar[d]\ar[dr]\ar[dl]&(5,0)\ar[dl]\ar[dllll]&(7,0)\ar[dl]\ar[dlllll]\\
   &(2,7)\ar[drrr]&(2,5)\ar[drr]&(2,0)\ar[dr]&&(0,2)\ar[dl]&(5,2)\ar[dll]&(7,2)\ar[dlll]&\\
   &&&&(2,2)&&&&
   }
\end{equation*}
\caption{The ``kernel'' part of the vacuum module of our $c=0$ bulk theory considered as a $\VirN$-subquotient (see section~\ref{sec:indextiltmod}). The nodes $(h,\bar{h})$ denote irreducible $\VirN$ subquotients with conformal weight  $(h,\bar{h})$. The arrows represent the ``irreversible'' action of the algebra $\VirN$: if two nodes $A$ and $B$ are connected by an arrow $A \longrightarrow B$, this means that one can go from $A$ to $B$ by acting with $\VirN$, but not the other way around. We show only the positive and negative modes action, {\it i.e.}, the action of $L_0 + \bar{L}_0$ is not shown explicitly.}\label{Vir-tilt}
\end{figure}

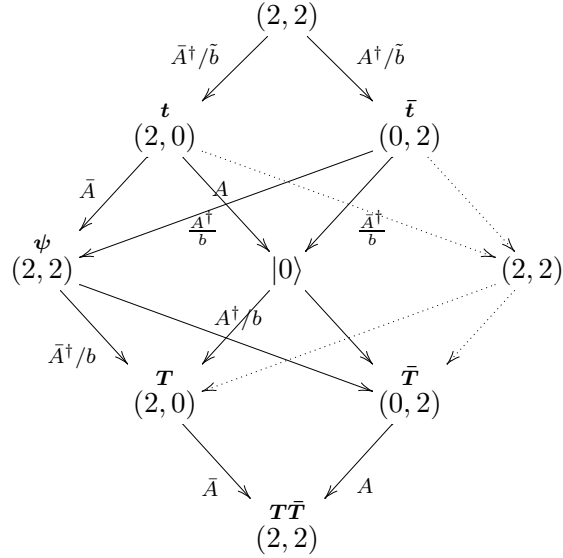
\begin{figure}
\begin{equation}
  \xymatrix@R=26pt@C=18pt@W=2pt@M=2pt
   { &&(2,2)\ar_{\bar{A}^\dagger/\tilde{b}}[dl]\ar^{A^\dagger/\tilde{b}}[dr]&&\\
   &\stackrel{\boldsymbol{t}}{(2,0)}\ar_{\frac{A^\dagger}{b}}[dr]\ar@{.>}[drrr]\ar_{\bar{A}}[dl]&&\stackrel{\boldsymbol{\bar{t}}}{(0,2)}\ar@{.>}[dr]\ar^{\frac{\bar{A}^\dagger}{b}}[dl]\ar_{A}[dlll]&\\
   {\stackrel{\boldsymbol{\psi}}{(2,2)}}\ar_{\bar{A}^\dagger/b}[dr]\ar^{A^\dagger/b}[drrr]&&|0\rangle\ar[dl]\ar[dr]&&(2,2)\ar@{.>}[dl]\ar@{.>}[dlll]\\
   &\stackrel{\boldsymbol{T}}{(2,0)}\ar_{\bar{A}}[dr]&&\stackrel{\boldsymbol{\bar{T}}}{(0,2)}\ar^{A}[dl]&\\
   &&\stackrel{\boldsymbol{T}\boldsymbol{\bar{T}}}{(2,2)}&&
   }
\end{equation}
\caption{The structure of the vacuum module up to level $(h,\bar{h})=(2,2)$. The operators $A$ and $\bar{A}$ are defined as $A=L_{-2}-\frac{3}{2} L_{-1}^{2}$ and $\bar{A}=\bar{L}_{-2}-\frac{3}{2} \bar{L}_{-1}^{2}$, with the corresponding indecomposability parameters $b=\bar{b}=-5$~\cite{VGJSletter}. The irreducible subquotients are simply represented by their conformal weight $(h,\bar{h})$, except for the vacuum $\Ket{0}$ state which has $(h,\bar{h})=(0,0)$. We also show some of the corresponding quantum fields, including the stress energy tensor $\boldsymbol{T}$ and its logarithmic partner  $\boldsymbol{t}$. Note that the vacuum is unique, and satisfies in particular $L_{-1} \Ket{0} = \bar{L}_{-1} \Ket{0}  = 0$.
}\label{fig:vac-tilt-22}
\end{figure}

To help the reader we now provide a summary of our main results, with the indication of where they can be found in the text.

\bigskip

\noindent First, for the lattice:

\begin{itemize}

\item{} The Hilbert space of the spin chain decomposes onto tilting modules $\TilJTL{j}{P}$ that are glueings of standard modules as in Figs.~\ref{fig:tilt-thirdroot} and~\ref{fig:tilt-JTL-vac}. The multiplicities can be obtained, see Sec.~\ref{secsl21resuts} and an example in Sec.~\ref{sec:N6-ex}, but are complicated and not particularly illuminating at this stage. They correspond to combinations of many representations of
$s\ell(2|1)$. For a given tilting, these multiplicities quickly stabilize as $N$ increases, and are the same in the scaling limit.

\item{} The structure of the tilting modules in terms of JTL simples can be obtained. It is also complicated, and depends on $N$. But patterns as $N$ increases can easily be understood. The most detailed analysis is provided  in Fig.~\ref{fig:tilt-N18} for the vacuum ``tilting'' module.

\item{} A consequence of the structure of the modules is the likely appearance of Jordan cells of arbitrarily large size for the
Hamiltonian as $N$ increases. The analysis is described in Sec.~\ref{sec:Jcells}.

\end{itemize}

\noindent Second, in the scaling limit:

\begin{itemize}

\item{}In contrast with the $\gl(1|1)$ case of the open case, simple representations of JTL do not become direct sum of simple representations of $\VirN$ in the scaling limit. In particular, the scaling limit should exhibit Jordan cells which are not present on the lattice. Examples are given in equations~\eqref{VirN-21} and~\eqref{VirN-3q}.

\item{}The indecomposable structure of the tilting modules under $\VirN$ is thus particularly cumbersome. It is worked out in full detail for the vacuum module with the final structure given in Fig.~\ref{Vir-tilt} and with explicit left-right Virasoro action up to conformal weights $(2,2)$ in Fig.~\ref{fig:vac-tilt-22}, where for notations see also Sec.~\ref{sec:indextiltmod}. The complete description of the operator content (including the multi hulls operators) of our theory is in Sec.~\ref{sec:fields}. Several important conclusions follow, among which:

\item{} The identity field occurs with multiplicity one, and satisfies all the properties expected from a well-defined vacuum of a bulk field theory -- in particular, it is invariant under translations.

\item{}There is a Jordan cell of rank two for the fields in $(0,2)$ and $(2,0)$ -- the stress energy tensor has a single logarithmic partner, with indecomposability parameter $b=-5$.

\item{}There is a Jordan cell of rank three for the field $(2,2)$.

\item{}Jordan cells of arbitrarily high rank occur in the scaling limit for large enough conformal weights. These ranks can be calculated, and examples are given in section~\ref{sec:conjrank}.

\end{itemize}

\subsection{Notations}

To help the reader navigate throughout the paper, we provide a partial list of notations (consistent with all other papers in the series):

\begin{itemize}

\item[\mbox{}]$\TL{N}$ --- the (ordinary) Temperley--Lieb algebra,

\item[\mbox{}]$\ATL{N}$ --- the periodic Temperley--Lieb algebra with the translation $u$, or the algebra of affine \\\mbox{}\qquad\quad diagrams,

\item[\mbox{}]$\rJTL{N}$ --- the Jones--Temperley--Lieb algebra,

\item[\mbox{}]$\JTL{2L}(m)$ --- the augmented Jones--Temperley--Lieb algebra,

\item[\mbox{}]$\AStTL{j}{e^{2iK}}$ --- the standard modules over $\rJTL{N}$,

\item[\mbox{}]$\StJTL{j}{P}$ --- the same, with $P=e^{2iK}$,

\item[\mbox{}]$\bAStTL{0}{\q^2}$ --- the standard module over $\rJTL{N}$ for $j=0$,

\item[\mbox{}] $\PrJTL{j}{P}$ ---  projective modules,

\item[\mbox{}] $\TilJTL{j}{P}$  --- tilting modules,

\item[\mbox{}] $\left[ j, e^{2iK}\right]$ or $\IrrJTL{j}{P}$  --- simple modules over $\rJTL{N}$,

 \item[\mbox{}] $F^{(0)}_{j,P}$  --- characters of JTL simples,

\item[\mbox{}] $(h,\bar{h})$ --- simple Virasoro modules,

\item[\mbox{}] $\chi_{h,\bar{h}}$  --- characters of Virasoro simples,

\item[\mbox{}] $\VirN$  --- the direct sum of the left and right Virasoro algebras of central charge $c=0$.

\end{itemize}

Finally, we stress that we adopt the convention that when a module is called `indecomposable module', it  is not irreducible.

\section{The model and some general observations.}

\subsection{Remarks about percolation}

It is important to start by some generalities about geometrical models and conformal field theory. In the case of a critical statistical mechanic model which is defined locally in terms of spins and their interaction, the definition of the conformal field theoretic limit is a priori obvious: what one needs to do is  simply  consider the scaling limit of local observables in order to obtain scaling fields \cite{DiFrancescoMathieuSenechal}. Considering for instance the Ising model, the most natural such observables are the spin and the energy, which, together with the identity, constitute in fact the full primary operator content of what is usually considered as `the' CFT for the Ising model, which is simply the minimal, unitary $c=\frac{1}{2}$ CFT. If one tries to apply the same approach to the problem of percolation, one is forced to recognize that the only local observable is the presence or absence of a bond on a given edge. In percolation however, edges are occupied independently of each other. There is thus no trivial scaling limit to this local observable, and the CFT description of the problem is bound to be the trivial one, with only the identity field, and $c=0$. The obvious point is that, in order to obtain interesting quantities in percolation, one needs to consider observables which are in fact non local in terms of the basic edges occupancy. Part of the history of the field has been to recognize that these observables could also be described by a local field theory, via maneuvers which trade non-locality for non-unitarity - this will be discussed more below. The important point here is that, once non local observables are introduced, it is not at all clear where to stop, nor is it clear which set of such observables one can hope to describe within a consistent CFT.

In percolation, one can first consider the connectivity of clusters, that is, define correlation functions via the probability that a given set of points belongs to the same cluster. The associated conformal weight is known to be $h=\bar{h}={5\over 96}$, and can be formally reproduced using the Kac formula with half integer labels: $h=h_{3/2,3/2}$. Three point functions have been studied recently, and found in numerical agreement with a continuation  of the formula for Liouville theory to the imaginary (time like) domain \cite{DelV,Santachiara}. Nothing is known so far for the four point functions, and there  is no evidence that the continuation of the Liouville theory itself makes sense as a CFT.

Meanwhile, one can consider refined connectivities, for instance via the probability that a set of points not only belongs to the same cluster, but are connected via two non intersecting  paths on this cluster. The corresponding exponent (related to the so called backbone fractal dimension) is not known in closed form. It has been determined numerically to be $h=\bar{h}=.1784\pm 0.0003$ by transfer matrix calculation. It can also be obtained, in principle, within the SLE formalism  and the numerical solution of a differential equation \cite{Lawler}.

To add to the confusion, there are many more geometrical observables one can consider. Of particular interest to us are the properties of percolation hulls. One can indeed consider the probability that two points belong to the {\sl boundary} of the same cluster: the associated conformal weight is known to be $h=\bar{h}={1\over 8}$. It is then easy to generalize this to observables ${\cal O}_k$ where $k$ cluster boundaries come together, with the exponent $h_k=\bar{h}_k={4k^2-1\over 24}$ \cite{DuplSal}.

When discussing the possible LCFT description of a problem such as percolation, it is crucial to consider which set of observables one would like to describe, with the understanding that there is probably no single LCFT encompassing them all, since different observables may well not be mutually local. Attempts to define abstractly {\sl the} fusion algebra of percolation without specifying which observables one is interested in appear as pure nonsense to us.

Although the cluster connectivity is the most natural observable to consider, we have unfortunately not been able to make much progress in its bulk conformal description, for technical reasons that we will discuss later. Instead, this paper will focus mostly on attempts at constructing a LCFT describing the properties of hulls, and in particular the observables ${\cal O}_k$.  Before getting into details, we would like to stress one more confusing fact: while in the bulk, the properties of the insides and hulls of clusters are profoundly different, they coincide near  a boundary. This is simply because, near a boundary, a point which belongs to a cluster necessarily also belongs to its hull. Hence, the set of possible geometrical observables in percolation is smaller in the boundary case than in the bulk case, a deep indication of the fact that, for LCFTs, bulk and boundary are not as tightly related as for unitary, rational CFTs \cite{Fjelstad}.

\subsection{The SUSY spin chain}\label{sec:susy}

As anticipated in the foregoing discussion, we will discuss in this paper the LCFT description of the hulls of percolation cluster. The main reason for this is that the hulls, whose definition is initially non local, can be described by a fully local lattice model involving spins with nearest neighbor interaction. The drawback of this model is that it is non unitary -- the Boltzmann weights are not positive definite. The spins take values in representations of the superalgebra $\gl(2|1)$, and the model enjoys the corresponding (target space) supersymmetry.

\begin{figure}[ht]
\begin{center}
    \includegraphics[width=6.6cm]{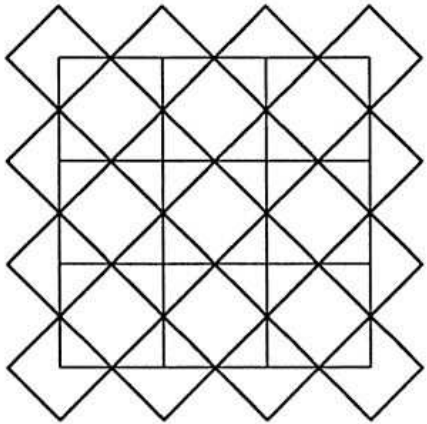}
     \caption{The medial lattice of the square lattice. }\label{medial-lat}
\end{center}
\end{figure}

The first step to obtain our model is to trade the description of percolation in terms of clusters for a description in terms of loops. Geometrically, these loops are obtained via the so called polygon decomposition of the medial  lattice in Fig.~\ref{medial-lat}. For bond percolation on the square lattice - to which we restrict here - the loops live on another square lattice, which they cover entirely.  There is a one to one correspondence between loops and clusters, see Fig.~\ref{loop-cl}.
\begin{figure}
\begin{center}
    \includegraphics[width=7cm]{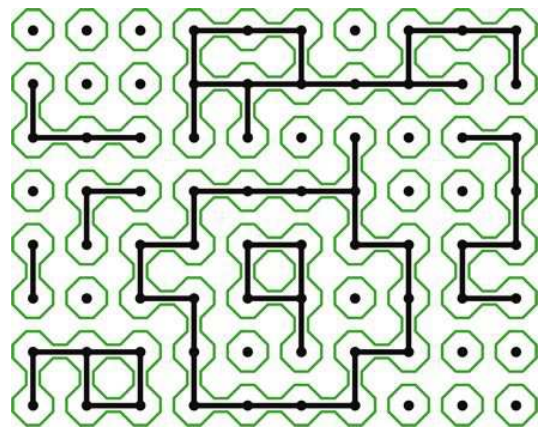}
     \caption{The one to one correspondence between clusters of occupied edges and loop configurations.}\label{loop-cl}
\end{center}
\end{figure}
At criticality, occupied and empty edges occur with probability $p={1\over 2}$. This translates into the fact that all loop configurations are equiprobable, and that the loops must all be counted with a fugacity equal to one. The partition function of the model with doubly periodic boundary conditions is trivial, and can be taken to be  $Z=1$ with the proper normalizations. Observe in particular that the operators ${\cal O}_k$ have a natural interpretation in terms of $k$
 loops joining two points, see Fig.~\ref{Ok}.
\begin{figure}
\begin{center}
    \includegraphics[width=5.8cm]{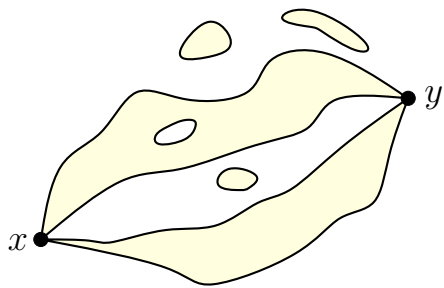}
     \caption{A sketch of the multi hulls operators.}\label{Ok}
\end{center}
\end{figure}

The next step is to consider the loops as  Feynman diagrams expressing contraction of what will turn out to be supergroup variables. The idea is that vertices of the medial lattice represent interactions, while the fugacity of the loops is obtained by a simple counting argument, the resulting weight being the number of bosonic minus the number of fermionic colors (or degrees of freedom).

To give details, it is more convenient to imagine calculating the partition function or some correlation function using a transfer matrix formalism~\cite{SQHE, ReadSaleur01}. We  consider for instance vertical propagation on our figures, and imagine that every edge carries a color $c_i$ which can be bosonic or fermionic. When two edges meet, either there is no interaction, or a contraction takes place. This is described by the two possible diagrams in Fig.~\ref{F-diag}
\begin{figure}
\begin{center}
    \includegraphics[width=5.8cm]{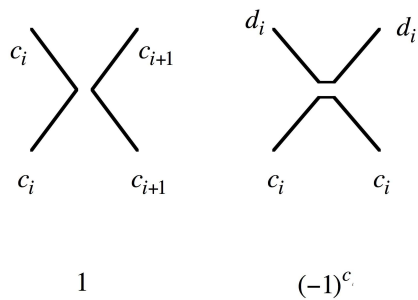}
     \caption{The elementary Boltzmann weights in the interpretation of loops as Feynman diagrams.}\label{F-diag}
\end{center}
\end{figure}
and in the second case, the `Boltzmann weight' is a sign factor $(-1)^{|c_i|}$, where $|c_i|$ is zero if the color is bosonic, and unity if the color is fermionic. It is easy to see that this reproduces the required statistics of the  loops on the medial graph. The elementary interaction encoded by the second diagram in Fig.~\ref{F-diag} corresponds to the action of a Temperley--Lieb algebra generator denoted by $e_i$. This will be discussed in more detail below. First, we need to reformulate the problem slightly in order to make the underlying supersymmetry manifest.
To do so,  we associate with each edge of the medial lattice  a $\mathbb{Z}_2$ graded vector space of dimension $2|1$,
that is, a bosonic ({\it resp.} fermionic) space of dimension $2$ ({\it resp.} $1$).
We choose these vector spaces to {\sl alternate}: we choose   the fundamental $V$ of the Lie superalgebra
$\gl(2|1)$ for even
 edges,
and the dual $V^*$ on odd
 ones, see App.~\ref{appSl21} for definitions. The transfer matrix
then acts on the graded tensor product $\mathcal{H} = (V \otimes V^*)^{\otimes L}$. The distinction between odd and even edges can be interpreted as a choice of {\sl orientation} for the loops, see Fig.~\ref{fig_ConfigPerco}.
\begin{figure}[ht]
\begin{center}
    \includegraphics[width=8cm]{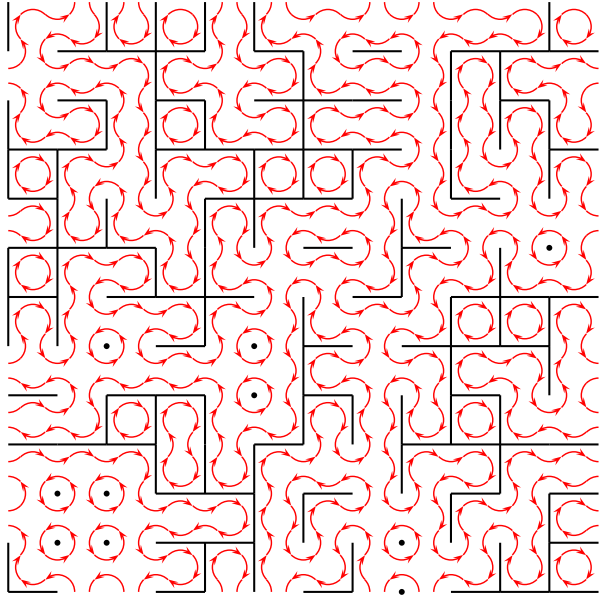}
     \caption{\label{fig_ConfigPerco}
     Example of dense loop configuration obtained as an expansion of the partition
     function of supersymmetric vertex models. We also show the equivalent percolating clusters.
     The lattice consists of alternating arrows going up for $i$ odd and down for $i$ even,
  where $i=1,\dots,N=2L$ corresponds to the horizontal (space) coordinate.
   The alternating $\square,\bar{\square}$ representations
   correspond to a lattice orientation, conserved along each loop.
   The system has periodic boundary conditions in both spacial and imaginary time directions. Each closed loop carries a weight
   $\mathrm{str} \ \mathbb{I} = 1$.
  }
\end{center}
\end{figure}

To proceed, it is convenient to describe the representations $V$ and $V^*$ using oscillators.
For $i$ even we introduce  boson operators $b_i^a$, $b_{ia}^\dagger$, satisfying
$[b_i^a,b_{jb}^\dagger]=\delta_{ij}\delta_b^a$ ($a$, $b=1$, $2$), and fermion operators $f_i^\alpha$,
$f_{i\alpha}^\dagger$,
$\{f_i^\alpha,f_{j\beta}^\dagger\}=\delta_{ij}\delta_\beta^\alpha$, with $\alpha=1$, $\beta=1$, and $i,j = 1,\dots,N$.


For $i$ odd, we have similarly boson
operators $\overline{b}_{ia}$, $\overline{b}_i^{a\dagger}$,
$[\overline{b}_{ia},\overline{b}_j^{b\dagger}]=\delta_{ij}\delta^b_a$
($a$, $b=1$, $2$), and fermion operators
$\overline{f}_{i\alpha}$, $\overline{f}_i^{\alpha\dagger}$,
$\{\overline{f}_{i\alpha},\overline{f}_j^{\beta\dagger}\}=
-\delta_{ij}\delta^\beta_\alpha$ ($\alpha$, $\beta=1$). Notice the minus sign in the last anticommutator;  our
convention is that the $\dagger$ stands for the adjoint, this
minus sign implies that the norm-square of any two states that are
mapped onto each other by the action of a single
$\overline{f}_{i\alpha}$ or $\overline{f}_i^{\alpha\dagger}$ have
opposite signs, and the ``Hilbert'' space has an indefinite inner
product (with the respect to the adjoint operation). The space $V$ is now defined as the subspace of states
that obey the ``one-particle per cite'' constraints
\begin{eqnarray}
\sum_ab_{ia}^\dagger b_i^a+\sum_\alpha f_{i\alpha}^\dagger
f_i^\alpha &=& 1 \quad (i\hbox{ even}),\\
\sum_a\overline{b}_i^{a\dagger}
\overline{b}_{ia}-\sum_\alpha\overline{f}_i^{\alpha\dagger}
\overline{f}_{i\alpha} &=& 1 \quad (i\hbox{ odd}).
\end{eqnarray}
The sums here and below are over $a=1$, $2$, and
$\alpha=1$.

The generators of the Lie superalgebra $\gl(2|1)$ acting on each
site of the chain are the bilinear forms $b_{ia}^\dagger
b_i^b$, $f_{i\alpha}^\dagger f_i^\beta$, $b_{ia}^\dagger
f_i^\beta$, $f_{i\alpha}^\dagger b_i^b$ for $i$ even, and correspondingly
$-\overline{b}_i^{b\dagger} \overline{b}_{ia}$,
$\overline{f}_i^{\beta\dagger} \overline{f}_{i\alpha}$,
$-\overline{f}_i^{\beta\dagger} \overline{b}_{ia}$,
$-\overline{b}_i^{b\dagger} \overline{f}_{i\alpha}$ for $i$ odd,
which for each $i$ have the same (anti-)commutators as those for
$i$ even.

The next step is to build, in this language, the proper interaction to reproduce the statistical properties of percolation hulls. First, we note that
for any two sites $i$ (even), $j$ (odd), the combinations
\begin{equation}
\sum_a  \overline{b}_{ja} b_i^a + \sum_\alpha
\overline{f}_{j\alpha}f_i^\alpha,\quad
\sum_ab_{ia}^\dagger\overline{b}_j^{a\dagger} + \sum_\alpha
f_{i\alpha}^\dagger\overline{f}_j^{\alpha\dagger}
\end{equation}
are invariant under $\gl(2|1)$.  Introduce now
\begin{eqnarray}
e_i=\left(\sum_ab_{ia}^\dagger\overline{b}_{i+1}^{a\dagger} +
\sum_\alpha
f_{i\alpha}^\dagger\overline{f}_{i+1}^{\alpha\dagger}\right)\left(\sum_a \overline{b}_{i+1,a}b_i^a  + \sum_\alpha
\overline{f}_{i+1,\alpha}f_i^\alpha\right)\quad (i\hbox{ even}),\label{ej-1}\\
e_i=\left(\sum_ab_{i+1,a}^\dagger\overline{b}_i^{a\dagger} +
\sum_\alpha
f_{i+1,\alpha}^\dagger\overline{f}_i^{\alpha\dagger}\right)\left(\sum_a \overline{b}_{ia}b_{i+1}^a  + \sum_\alpha
\overline{f}_{i\alpha}f_{i+1}^\alpha\right)\quad (i\hbox{ odd})\label{ej-2}
\end{eqnarray}
The interaction at a vertex where edges $i$ and $i+1$ meet is then simply given,  by the elementary transfer matrix
\begin{equation}
T_i=1+e_i
\end{equation}
The complete transfer matrix is
\begin{equation}
T\equiv T_1T_3\ldots T_0T_2\ldots
\end{equation}

Finally, we note that, although we have formulated everything so far in terms of an isotropic system, it is well known that the same universality class is obtained by choosing different probabilities of occupancy for horizontal and vertical edges. This corresponds to the more general choice of elementary transfer matrices
\begin{eqnarray}
T_i=(1-p_A)+p_A e_i,\quad (i\hbox{ even})\nonumber\\
T_i=p_B+(1-p_B)e_i, \quad (i\hbox{ odd})
\end{eqnarray}
with $p_A+p_B=1$. It is in particular possible to chose a very anisotropic limit  $p_A\rightarrow 0$ where the transfer matrix description is replaced by a hamiltonian description according to $T\approx e^{-p_AH}$ with
\begin{equation}
H=-\sum_i e_i\label{hamil}.
\end{equation}

\subsection{Algebra}

The interpretation of the elementary vertices in terms of contractions suggests the simple algebraic nature of the interaction in our model. Indeed, the elementary generators $e_i$ satisfy special relations, and provide in fact a representation of a well known algebra, the Jones Temperley Lieb algebra. We assume the reader is familiar with the basics: more details will be given in the next section.

An important point is that, while we have restricted so far to $\gl(2|1)$, a more general $\gl(n+1|n)$ model can be introduced, simply by allowing everywhere the labels $a,b$ now to run from $1$ to $n+1$ and $\alpha,\beta$ from $1$ to $n$. Because each loop comes weighted with the super trace of the fundamental (or dual fundamental), it still will come with a factor 1. There is thus, in fact, a multiplicity of spin chains related with percolation. As discussed in~\cite{ReadSaleur07-1} these chains all describe the same geometrical objects, but the associated field theoretic observables come with different multiplicities. This means that  there is in fact an infinite family of LCFTs at $c=0$, with larger and larger degeneracies as $n$ increases. We will, in this paper, restrict to the case $n=1$.

\subsection{The sigma model description.}

This spin chain formulation can be used to obtain a sigma model description of the low energy excitations: see references \cite{Affleck,ReadSachdev} for non supersymmetric examples, and \cite{Zirnbauer,Gruzberg} for supersymmetric examples in the context of disordered systems.   The target space  - which appears in the construction of the coherent state path integral - should be the symmetry supergroup (here $U(2|1)$) modulo the isotropy supergroup of the highest weight state: in our case, this gives  the complex projective superspace
$\mathbb{CP}^{1|1} = \mathrm{U}(2|1) / (\mathrm{U}(1) \times \mathrm{U}(1|1)) $. The mapping could be fully controlled by using a `higher spin' generalization of the spin chain, obtained by taking larger irreducible representations, represented by the value of the highest weight ${\cal S}$, with $2{\cal S}$ integer (with ${\cal S}={1\over 2}$ for the fundamental)  and their duals on alternate sites. For an appropriate family of such representations, the sigma model can be obtained with a bare coupling constant $g_\sigma^2\approx 1/{\cal S}$.  On top of this, for the hamiltonian (\ref{hamil}), there is a topological angle with bare value $\theta=2\pi {\cal S} \hbox{ mod }2\pi$.
 The Lagrangian of the sigma model
involves a multiplet of fields with complex bosonic components $z^a$ ($a=1,2$)
and fermionic components  $\xi^\alpha$ ($\alpha=1$). These fields satisfy
the constraint equation $z_a^{\dag} z^a + \xi_\alpha^{\dag} \xi^\alpha = 1$,
modulo $U(1)$ gauge transformations, so they provide a parametrization of  $\mathbb{CP}^{1|1}$.
In terms of these fields, the euclidian Lagrangian density reads
\begin{equation}
\displaystyle  \mathcal{L} = \frac{1}{2 g_{\sigma}^2} \left[ (D_\mu z_a)^\dag D_\mu z^a +(D_\mu \xi_\alpha)^\dag D_\mu \xi^\alpha \right] + \frac{i \theta}{2 \pi} \varepsilon^{\mu \nu} \partial_{\mu} a_{\nu},
\end{equation}
where $a_\mu = \frac{i}{2}\left(z_a^{\dag} \partial_{\mu} z^a + \xi_\alpha^{\dag} \partial_{\mu} \xi^\alpha -  \partial_{\mu} z_a^{\dag} z^a -  \partial_{\mu} \xi_\alpha^{\dag}  \xi^\alpha \right)$ is a gauge potential and $D_\mu = \partial_\mu + i a_\mu$ is the covariant derivative.
Finally, the beta function for the model is
\begin{equation}
{dg_\sigma^2\over dl}\equiv \beta(g_\sigma^2)=g_\sigma^4+O(g_\sigma^6)
\end{equation}
where $l=\ln L$ is the logarithm of the length scale at which the coupling is defined. The beta function is independent of $\theta$, and the beta function for $\theta$ is zero, in perturbation theory. The beta function being positive at weak coupling, it is expected that the same fixed point CFT will be reached in the universal, large length scale, low energy limit for all spin chains with $2{\cal S}$ an odd integer, and in particular for ${\cal S}={1\over 2}$. This fixed point theory is the LCFT we are trying to build in this paper. While this theory should  have $U(2|1)$ symmetry, it is not expected to be a WZW theory, as confirmed by the early analysis of the spectrum in~\cite{ReadSaleur01}. This is because the general arguments promoting conformal invariance plus continuous group symmetry into a current algebra symmetry fail in non unitary cases, where, typically, logarithms can appear in the OPE of the currents~\cite{Troost}.

\subsection{A note on modular invariance}

The partition function of our model on a torus is $Z=1$. This is the only modular invariant we will associate with our model. This does not mean that the operator content is trivial of course. The point is, that doubly periodic boundary conditions for the geometrical model corresponds to periodic boundary conditions for the bosonic and fermionic degrees of freedom. Turning to a transfer matrix description, the partition function is thus the {\sl supertrace} of the appropriate power of the  transfer matrix, which itself acts on a system of periodic bosons and fermions. While the Hilbert space in which the transfer matrix acts is non trivial, since it is of dimension $\left[(2n+1)^2\right]^L$, this same space has a simple superdimension
\begin{equation}
\hbox{Sdim} {\cal H}=\left[\hbox{Sdim}(V)\hbox{Sdim}(V^*)\right]^L=1.
\end{equation}
This means, as we shall see below, that all operators but the identity appear in non trivial representations of the supersymmetry~\cite{EsslerFS}.

Now, in order to identify the operators present in the model together with their associated representation content, it is convenient to consider a {\sl modified} partition function~\cite{ReadSaleur01}. This partition function is defined simply by taking, instead of the supertrace, the ordinary trace. Of course one has
\begin{equation}
\hbox{dim} {\cal H}=\left[\hbox{dim}(V)\hbox{dim}(V^*)\right]^L=3^{2L}
\end{equation}
and now {\sl each level} of the transfer matrix/hamiltonian will be counted with multiplicity one. The modified partition function is {\sl not modular invariant}, and there is no reason why it should be. There is also no reason why it should be part of a bigger theory encompassing both periodic and antiperiodic boundary conditions in the (imaginary)  time and space directions. The model as we define it on the lattice is perfectly local with periodic boundary conditions, and should lead, as it is,  to a local field theory in the scaling limit.

\subsection{Observables}

\begin{figure}
\begin{center}
    \includegraphics[scale=0.67]{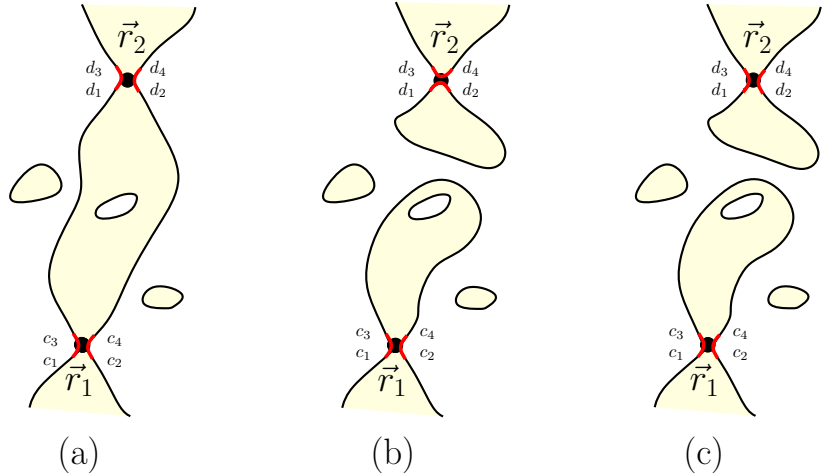}
     \caption{Two-point correlation functions in percolation (see text).}\label{twoptfct}
\end{center}
\end{figure}

Having introduced the supersymmetric formulation of the loop model, it is fair to ask what kind of observables we can now describe. The answer, ultimately, will be provided by the detailed analysis of the spectrum and the operator content given below. It is easy however to see that one should expect first all the multihull operators ${\cal O}_k$. The two point function of the ${\cal O}_2$ operators for instance is expressed geometrically as the probability that there exists (at least) two lines connecting a pair of neighboring edges in $\vec{r}_1$ and a pair of neighboring edges in $\vec{r}_2$. In order to select the appropriate diagrams in the sum over all configurations, all one has to do is to insert the proper terms to prevent contractions of the lines. In other words, the two-point function can be expressed as
\begin{multline}
\Bigl\langle {\cal O}_2(\vec{r}_1){\cal O}_2(\vec{r}_2)\Bigr\rangle=\Bigl\langle \Bigl(\sum_{c_i} \delta(c_1,c_3)\delta(c_2,c_4)-(-1)^{|c_1|}\delta(c_1,c_2)\delta(c_3,c_4)\Bigr)\\
\times\Bigl(\sum_{d_i}\delta(d_1,d_3)\delta(d_2,d_4)-(-1)^{|d_1|}\delta(d_1,d_2)\delta(d_3,d_4)\Bigr)\Bigr\rangle
\end{multline}
where the labels are shown on Fig.~\ref{twoptfct}.
 Consider for instance a diagram such as the one shown on figure \ref{twoptfct}(b). The contraction of the lines forces $d_1=d_2$ (while of course $c_3=c_1$ and $c_4=c_2$). For each  such diagram, the insertion of $1-\delta\delta$ in the two point function subtracts the diagram where the lines are contracted in $\vec{r}_2$. The sum over the labels with the $(-1)^{|d_1|}$ inserted gives the loop thus formed a weight equal to one, so, summing over the rest of the system, the contributions arising from Figs. \ref{twoptfct}(b) and \ref{twoptfct}(c) exactly cancel out. Similar reasonings show that the only diagrams that survive the sum are those where the lines in $\vec{r}_1,\vec{r}_2$ are never contracted as in~\ref{twoptfct}(a)  but simply go through the system. This means in turn that the two points belong to the hull of the infinite percolation cluster.

\begin{figure}[t]
\begin{center}
    \includegraphics[scale=0.25]{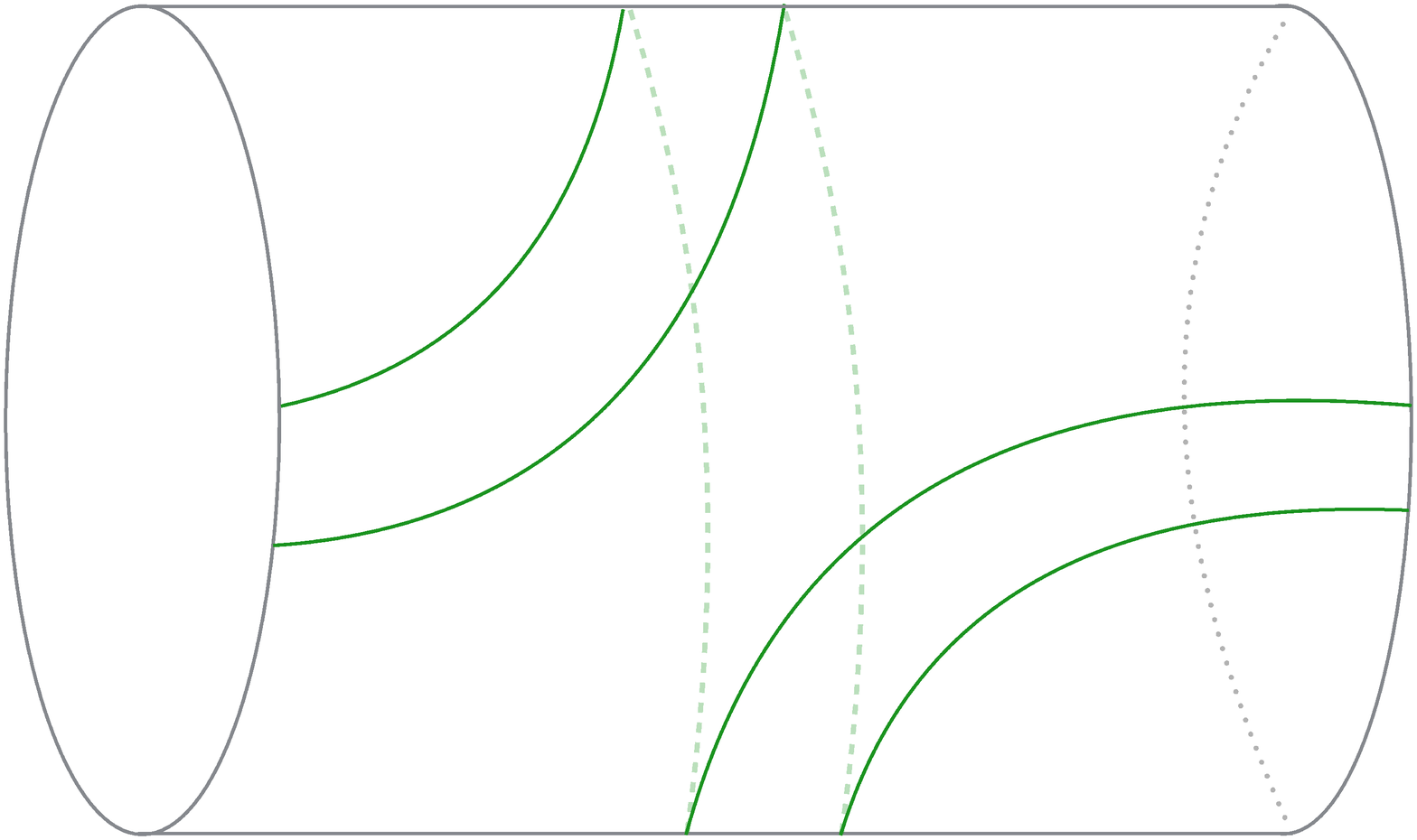}
     \caption{A pair of non contractible loops (a percolation `hull') winding around the axis of the cylinder.}\label{winding}
\end{center}
\end{figure}

It is convenient to think of this  after a conformal map onto the cylinder. Two point functions are then evaluated  in the transfer matrix language. By translation invariance, we see that the eigenvalue in the sector with $2k$ through lines should then give the exponent $(h_k,h_k)$, i.e., zero spin and $h_k=\frac{4k^2-1}{24}$. It is also clear that lines connecting $\vec{r}_1,\vec{r}_2$ can obviously wind around the axis. On the other hand, it is well known, within the Coulomb gas formalism, that primary fields can be obtained, for a given number of through lines, by inserting at either extremity of the cylinder additional charges, whose effect is to give an additional phase factor $z^{\pm 1}$ for every line going clockwise/counterclockwise like in Fig.~\ref{winding}.

 Setting
\begin{equation}
z = e^{i\pi \alpha}
\end{equation}
the critical exponents for configurations of $2k$ through lines are then
\begin{equation}
h_k(\alpha)={(3\alpha-2k)^2-1\over 24},\qquad\bar{h}_k(\alpha)={(3\alpha+2k)^2-1\over 24}.
\end{equation}
A given value of $z$ leads to many choices for $\alpha\hbox{ mod }1$ (the sign of $z$ is not relevant since in the model only pairs of lines propagate), and all the corresponding operators are present in the spectrum.

Another important fact, which occurs only for the percolation model, is that it is impossible to define higher ${\cal O}_k$ operators without some amount of `mixing' with lower ones. This is because of the fact that the object preventing contractions among three lines:
\begin{equation}
X_3=e_i+e_{i+1}-e_ie_{i+1}-e_{i+1}e_i
\end{equation}
while it obeys indeed
\begin{eqnarray}
X_3 e_i=e_i X_3=0,\nonumber\\
X_3 e_{i+1}=e_{i+1}X_3=0
\end{eqnarray}
cannot be normalized to become a projector. In other words, one has
\begin{equation}
X_3^2=0.
\end{equation}
It follows that mixtures between the various sectors with different numbers of non contractible lines must occur producing in the end complicated representations of the periodic TL algebra. This part will be discussed more below.

\section{The algebraic description: generalities}

The following two subsections contain material  discussed already in our earlier work on the subject, which we prefer to reproduce here for clarity, completeness and in order to establish notations. The reader familiar with one of our previous papers~\cite{GRS1,GRS2,GRS3} can go directly to Sec.~\ref{sec:4} where the  crucial aspect of faithfulness is discussed.

\subsection{The Temperley--Lieb algebras in the periodic case}
\label{sec:TL-alg-def}

We define here finite-dimensional quotients of the affine Temperley--Lieb algebra~\cite{MartinSaleur,Jones,MartinSaleur1,GL,Green} denoted here by $\ATL{N}(m)$ and spanned by particular diagrams
on an annulus.  A general basis element in the algebra of
diagrams corresponds to a diagram of $N$ sites on the inner and $N$ on
the outer boundary of the annulus (we will restrict in what follows to the case $N$ even, and parametrize $N=2L$.) The sites are connected in pairs,
and only configurations that can be represented using lines inside the
annulus that do not cross are allowed. We will often call all such
diagrams  \textit{affine diagrams}.  Examples of affine diagrams are shown in Fig.~\ref{fig:aff-diag} for $N=4$, where we draw them in a slightly different geometry: we cut the annulus and transform it to a rectangle, which we call \textit{framing}, with the sites labeled from left to right. An important parameter is the number of
 through-lines, which we denote by $2j$, with $j=0,1,\ldots, L$, connecting
 $2j$ sites on the inner and $2j$ sites on the outer boundary of the
 annulus; the $2j$ sites on the inner boundary we call free or
 non-contractible.   Multiplication $a\cdot b$ of two affine diagrams $a$ and $b$ is defined in a natural
way, by joining an inner boundary of $a$ to an outer boundary of the annulus of $b$, and
removing the interior sites.
Whenever a closed contractible loop is
produced when diagrams are multiplied together, this loop must be
replaced by a numerical factor~$m$. We also note that the  diagrams
in this algebra allow winding of through-lines around the annulus any
integer number of times, and different windings result in independent
algebra elements. Moreover, in the ideal of zero through-lines, any number of
non-contractible loops (like in the fourth diagram in Fig.~\ref{fig:aff-diag}) is allowed.

\begin{figure}
\begin{equation*}
 \begin{tikzpicture}
 	\draw[thick, dotted] (-0.05,0.5) arc (0:10:0 and -7.5);
 	\draw[thick, dotted] (-0.05,0.55) -- (2.65,0.55);
 	\draw[thick, dotted] (2.65,0.5) arc (0:10:0 and -7.5);
	\draw[thick, dotted] (-0.05,-0.85) -- (2.65,-0.85);
	\draw[thick] (0.3,0.5) arc (0:10:20 and -3.75);
	\draw[thick] (2.3,-0.81) arc (0:10:-20 and 3.75);

	\draw[thick] (0.9,0.5) arc (0:10:40 and -7.6);
	\draw[thick] (1.55,0.5) arc (0:10:40 and -7.6);
	\draw[thick] (2.2,0.5) arc (0:10:40 and -7.6);

	\end{tikzpicture}\;\;,
	\qquad\qquad
 \begin{tikzpicture}
 	\draw[thick, dotted] (-0.05,0.5) arc (0:10:0 and -7.5);
 	\draw[thick, dotted] (-0.05,0.55) -- (2.65,0.55);
 	\draw[thick, dotted] (2.65,0.5) arc (0:10:0 and -7.5);
	\draw[thick, dotted] (-0.05,-0.85) -- (2.65,-0.85);
	\draw[thick] (0,0) arc (-90:0:0.5 and 0.5);
	\draw[thick] (0.9,0.5) arc (0:10:0 and -7.6);
	\draw[thick] (1.65,0.5) arc (0:10:0 and -7.6);
	\draw[thick] (2.6,0) arc (-90:0:-0.5 and 0.5);

	\draw[thick] (0.5,-0.8) arc (0:90:0.5 and 0.5);
	\draw[thick] (2.1,-0.8) arc (0:90:-0.5 and 0.5);
	\end{tikzpicture}\;\;,
	\qquad\qquad
 \begin{tikzpicture}
 	\draw[thick, dotted] (-0.05,0.5) arc (0:10:0 and -7.5);
 	\draw[thick, dotted] (-0.05,0.55) -- (2.65,0.55);
 	\draw[thick, dotted] (2.65,0.5) arc (0:10:0 and -7.5);
	\draw[thick, dotted] (-0.05,-0.85) -- (2.65,-0.85);
	\draw[thick] (0,0.1) arc (-90:0:0.5 and 0.4);
	\draw[thick] (0,-0.1) arc (-90:0:0.9 and 0.6);
	\draw[thick] (2.6,-0.1) arc (-90:0:-0.9 and 0.6);
	\draw[thick] (2.6,0.1) arc (-90:0:-0.5 and 0.4);
	
	\draw[thick] (0.5,-0.8) arc (0:90:0.5 and 0.5);
	\draw[thick] (1.8,-0.8) arc (0:180:0.5 and 0.5);
	\draw[thick] (2.1,-0.8) arc (0:90:-0.5 and 0.5);
	\end{tikzpicture}\;\;,
	\qquad\qquad
 \begin{tikzpicture}
 	\draw[thick, dotted] (-0.05,0.5) arc (0:10:0 and -7.5);
 	\draw[thick, dotted] (-0.05,0.55) -- (2.65,0.55);
 	\draw[thick, dotted] (2.65,0.5) arc (0:10:0 and -7.5);
	\draw[thick, dotted] (-0.05,-0.85) -- (2.65,-0.85);
	\draw[thick] (0,0.05) arc (-90:0:0.5 and 0.45);
	\draw[thick] (0.8,0.5) arc (-180:0:0.5 and 0.45);
	\draw[thick] (2.6,0.05) arc (-90:0:-0.5 and 0.45);

	\draw[thick] (0.0,-0.15) arc (-180:0:1.3 and 0.0);

	\draw[thick] (0.5,-0.8) arc (0:90:0.5 and 0.45);
	\draw[thick] (1.8,-0.8) arc (0:180:0.5 and 0.45);
	\draw[thick] (2.1,-0.8) arc (0:90:-0.5 and 0.45);

	\end{tikzpicture}
\end{equation*}
\caption{Examples of affine diagrams for $N=4$, with the left and right sides of the framing rectangle identified. The first diagram represents the translation generator $u$ while the second diagram is for the generator $e_4\in\ATL{4}(m)$. The third and fourth ones are examples of $j=0$ diagrams.
}
\label{fig:aff-diag}
\end{figure}
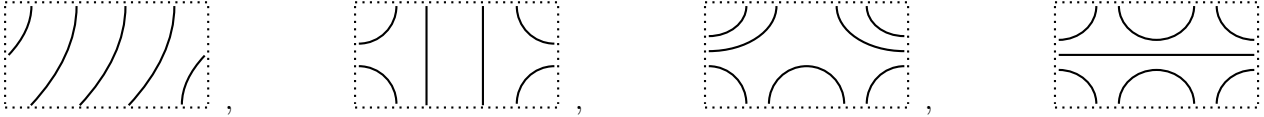

In terms of generators and relations,
the algebra $\ATL{N}(m)$ is generated by the $e_i$'s together with the identity, subject to the usual relations~\cite{MartinSaleur,Jones,Green}
\begin{eqnarray}
e_j^2&=&me_j,\nonumber\\
e_je_{j\pm 1}e_j&=&e_j,\nonumber\\
e_je_k&=&e_ke_j\qquad(j\neq k,~k\pm 1),\label{TL}
\end{eqnarray}
where $j=1,\ldots,N$ and the indices are interpreted modulo $N$,
and
 by generators $u$ and $u^{-1}$ of translations by one site to the right
and to the left, respectively.  The following additional defining
relations are then obeyed,
\begin{eqnarray}
ue_ju^{-1}&=&e_{j+1},\nonumber\\
u^2e_{N-1}&=&e_1\ldots e_{N-1},\nonumber
\end{eqnarray}
and $u^{\pm N}$ is a central element.
 The algebra generated by the $e_i$ and $u^{\pm1}$ together with these  relations is usually called the \textit{affine} Temperley--Lieb algebra $\ATL{N}(m)$.

We will consider translations by an even number of sites only,
{\it i.e.}, restrict to powers of $u^2$, and  replace a
non-contractible loop by a numerical factor $m$, as for the contractible loops. This
constraint (see 4.2.2 and 4.3.1 in~\cite{MartinSaleur}) together with
taking a quotient by the ideal generated by $u^N-1$ makes the algebra
finite dimensional.
We call the resulting object the  \textit{augmented Jones--Temperley--Lieb} algebra
$\JTL{2L}(m)$, where recall we have set   $N=2L$. This
algebra is slightly bigger than the one used
in~\cite{ReadSaleur07-1,GRS2}, called the \textit{Jones--Temperley--Lieb} algebra, which we denote by $\rJTL{2L}(m)$. The  difference is entirely in the sector/ideal
with zero through-lines. The algebra $\JTL{2L}(m)$ introduced here
contains in this ideal all affine diagrams
while the algebra $\rJTL{2L}(m)$ has only planar (or usual TL) diagrams in
this sector\footnote{The last algebra
is also known as oriented Jones annular subalgebra in the Brauer
algebra~\cite{Jones}.}.
In other words, in $\rJTL{2L}(m)$, one only keeps track of which points are connected to which in the diagrams, while in $\JTL{2L}(m)$, one also keeps information of how the connectivities wind around the annulus (the ambiguity does not arise when there are through-lines propagating). Formally, we have a covering homomorphism (surjection) of algebras
\begin{equation}\label{psi-hom}
\psi: \JTL{2L}(m) \longrightarrow \rJTL{2L}(m)
\end{equation}
which acts non-trivially only in the zero through-lines subalgebra and maps the diagrams as
\begin{equation}\label{psi-ex}
  \xymatrix@C=8pt@R=1pt@M=-5pt@W=-2pt{
  &&	\mbox{}\quad\xrightarrow{{\mbox{}\quad\psi\quad} }\quad &\\
  & {
 \begin{tikzpicture}
 	\draw[thick, dotted] (-0.05,0.5) arc (0:10:0 and -7.5);
 	\draw[thick, dotted] (-0.05,0.55) -- (2.65,0.55);
 	\draw[thick, dotted] (2.65,0.5) arc (0:10:0 and -7.5);
	\draw[thick, dotted] (-0.05,-0.85) -- (2.65,-0.85);
	\draw[thick] (0,0.1) arc (-90:0:0.5 and 0.4);
	\draw[thick] (0,-0.1) arc (-90:0:0.9 and 0.6);
	\draw[thick] (2.6,-0.1) arc (-90:0:-0.9 and 0.6);
	\draw[thick] (2.6,0.1) arc (-90:0:-0.5 and 0.4);
	\draw[thick] (0.5,-0.8) arc (0:90:0.5 and 0.5);
	\draw[thick] (1.8,-0.8) arc (0:180:0.5 and 0.5);
	\draw[thick] (2.1,-0.8) arc (0:90:-0.5 and 0.5);
	\end{tikzpicture}
\quad}&
	& {	\quad
  \begin{tikzpicture}
 	\draw[thick, dotted] (-0.05,0.5) arc (0:10:0 and -7.5);
 	\draw[thick, dotted] (-0.05,0.55) -- (2.65,0.55);
 	\draw[thick, dotted] (2.65,0.5) arc (0:10:0 and -7.5);
	\draw[thick, dotted] (-0.05,-0.85) -- (2.65,-0.85);
	\draw[thick] (0.5,0.5) arc (-180:0:0.8 and 0.56);
	\draw[thick] (0.8,0.5) arc (-180:0:0.5 and 0.4);
	\draw[thick] (2.1,-0.8) arc (0:180:0.8 and 0.56);
	\draw[thick] (1.8,-0.8) arc (0:180:0.5 and 0.4);
	\end{tikzpicture}}
  }
\end{equation}

The first
algebra  $\JTL{2L}(m)$ has  dimension (as follows from dimensions of generically irreducible modules described below)
\begin{equation}
\dim\bigl(\JTL{N}(m)\bigr) = \binom{2L}{L}^2 + \sum_{j=1}^L j\binom{2L}{L-j}^2,
\end{equation}
while the second algebra $\rJTL{N}(m)$
 has  dimension
\begin{equation}\label{dim-rJTL}
\dim\bigl(\rJTL{N}(m)\bigr) = \left(\binom{2L}{L}-\binom{2L}{L-1}\right)^2 + \sum_{j=1}^L j\binom{2L}{L-j}^2.
\end{equation}

We will only be concerned in this paper with the case $m=1$ for which
 the algebra $\JTL{2L}(m)$ is non semi-simple; in the following we usually  suppress all
 reference to $m$.

\subsection{Standard and co-standard modules}\label{sec:st-mod}
It is useful to  go back for a little while to the case of the full affine
 Temperley Lieb algebra $\ATL{N}(m)$. Set $m=\q+\q^{-1}$. For generic
 $\q\neq 1$, the irreducible representations we shall need are parametrized by two
 numbers. In terms of diagrams, the first is the number of
 through-lines $2j$, with $j=0,1,\ldots, L$. Using the natural action (by stacking affine diagrams) of the algebra  discussed in the previous subsection,
 we now decide that the result of  this action is zero whenever the affine diagrams  obtained have a number of
 through lines less than $2j$. Furthermore, for a given
 non-zero value of $j$, it is possible using the  action of the algebra,  to cyclically
 permute the free sites: this gives rise to the introduction of a
  pseudomomentum $K$. Whenever $2j$ through-lines wind counterclockwise around
 the annulus $l$ times, we can decide to unwind them at the price of a factor
 $e^{2ijlK}$; similarly, for clockwise winding, the phase is $e^{-i
 2jlK}$ \cite{MartinSaleur,MartinSaleur1}. This action gives rise to a generically
irreducible module, which we denote by
$\AStTL{j}{e^{2iK}}$.
 In the parametrization $(t,z)$ chosen
in~\cite{GL}, this corresponds to $t=2j$ and the twist parameter $z^2=e^{2iK}$.

 The dimensions of these modules $\AStTL{j}{e^{2iK}}$ over $\ATL{2L}(m)$  are then given by
 \begin{equation}\label{eq:dj}
 \hat{d}_{j}=
 \binom{2L}{L+j},\qquad j>0.
 \end{equation}
Note that the numbers do not depend on $K$ (but representations with
different $e^{iK}$ are not isomorphic).
These generically irreducible modules
$\AStTL{j}{e^{2iK}}$ are known also as
\textit{ standard (or cell)} $\ATL{N}(m)$-modules~\cite{GL}.

 Keeping $\q$ generic, degeneracies in the standard modules appear whenever
 \begin{eqnarray}\label{deg-st-mod}
 e^{2iK}&=&\q^{2j+2k},\qquad
k\hbox{ is a strictly positive integer.}
 \end{eqnarray}
 The representation $\AStTL{j}{\q^{2j+2k}}$ then becomes reducible, and contains a submodule isomorphic to
 $\AStTL{j+k}{\q^{2j}}$. The quotient is generically irreducible, with
 dimension $\hat{d}_j-\hat{d}_{j+k}$. The degeneracy
~\eqref{deg-st-mod} is well-known \cite{MartinSaleur1,GL}~\footnote{Note that the twist term in~\cite{PasquierSaleur}, which was denoted there
   $q^{2t}$, reads in these notations as $e^{2iK}$. It  corresponds to $z^2$ in
   the Graham--Lehrer work~\cite{GL}, and to the parameter $x$ in the work of  Martin--Saleur \cite{MartinSaleur1}. The case where $k=1$ is special, and related with braid translation of the blob algebra theory.
We note that in the $\JTL{N}$ case, $2j$ lines going around the cylinder pick up a phase $e^{i 2jK}=1$. In \cite{MartinSaleur1}, this corresponds to $\alpha_h=x^h=1$.}.  When $\q$ is a root of unity, there are infinitely many solutions to the equation \eqref{deg-st-mod}, leading to a complex pattern of degeneracies to which we turn below.


The case $j=0$ is a bit special. There is no pseudomomentum, but representations are still characterized by another parameter, related with  the weight given to  non contractible loops. Parametrizing this weight as $z+z^{-1}$, the corresponding standard module of  $\ATL{2L}(m)$ is denoted  $\AStTL{0}{z^2}$ and has  dimension $\binom{2L}{L}$.

\bigskip

We now specialize to the Jones--Temperley--Lieb algebra $\rJTL{N}(m)$.
 In this case, the rule that winding through-lines can simply be unwound means that the pseudomomentum must satisfy
$jK\equiv 0~\hbox{mod}~\pi$ \cite{Jones}.
 All possible values of the parameter $z^2=e^{2iK}$ are thus $j$-th
 roots of unity ($z^{2j}=1$,~\cite{Green}). The kernel of the
 homomorphism $\psi$ described by~\eqref{psi-hom} and~\eqref{psi-ex} (and the ideal  in $\ATL{N}(m)$ generated by $u^N-1$, in
 particular) acts trivially on these modules if $j>0$. In what
 follows, we will thus  use
 the same notation $\AStTL{j}{z^2}$, with $j>0$, for the standard $\rJTL{N}(m)$-modules.
 We note that two standard $\rJTL{N}$-modules having only different signs in the $z$ parameter are isomorphic.

\medskip

If $j=0$, requiring the weight of the non contractible loops to be $m$ as well
leads to the standard $\JTL{N}(m)$-module $\AStTL{0}{\q^2}$ which is reducible
even for generic $\q$ -- it contains a submodule isomorphic to
$\AStTL{1}{1}$. Meanwhile, on the standard module $\AStTL{0}{\q^2}$ the kernel of the homomorphism
$\psi$ is non-trivial: the standard module over $\rJTL{N}(m)$ for $j=0$ is obtained
precisely by taking the quotient $\AStTL{0}{\q^2}/\AStTL{1}{1}$ as
in~\cite{GL}. This module is now simple for generic $\q$, has the
dimension
$$\hat{d}_0=\binom{2L}{L}-\binom{2L}{L-1}$$
 and is denoted by~$\bAStTL{0}{\q^2}$. The   standard $\JTL{N}$-module
$\AStTL{0}{\q^2}$ is of  dimension $\binom{2L}{L}$.

 In the full  construction of direct summands of our spin-chains -- the tilting modules -- we shall also need a concept of so-called \textit{co}-standard modules. They are defined as the duals $\bigl(\StJTL{j}{P}\bigr)^*$ or vector spaces of linear functionals on $\StJTL{j}{P}$ endowed with the $\JTL{N}$ action by
\begin{equation}
\bigl(\StJTL{j}{P}\bigr)^*:\qquad\qquad  A v^*(\cdot) = v^*(A^* \cdot),\qquad \text{with}\quad A\in\JTL{N}(m), \quad v^*\in\bigl(\StJTL{j}{P}\bigr)^*,
\end{equation}
where $\cdot^*$ is an anti-automorphism on $\JTL{N}(m)$ defined by interchanging the inner and outer boundaries of the affine diagram.
Equivalently, we can say that the basis in co-standards $\bigl(\StJTL{j}{P}\bigr)^*$ is defined by reflecting the framing in a horizontal line. Then, the value of $v^*$ on $\StJTL{j}{P}$ is given by the bilinear form defined in~\cite[Sec 2.6]{GL}. It was shown in~\cite{GL} that $\bigl(\StJTL{j}{z}\bigr)^*$ is generically isomorphic to $\StJTL{j}{z^{-1}}$. At critical values of the parameters, special pseudomomenta $K$ and roots of unity values of $\q$, there is a non-trivial homomorphism from $\bigl(\StJTL{j}{z}\bigr)^*$ to $\StJTL{j}{z^{-1}}$ controlled by the bilinear form.

\medskip

\subsection{The structure of the standard $\JTL{N}(m)$ modules}\label{sec:st-mod-str}
A central point in  representation theory of the algebras $\JTL{N}(m)$ and $\rJTL{N}(m)$ is
 their cellular-algebra structure and the full power of cellular algebras technique~\cite{GL0,GL}
can be used for studying many important indecomposable modules -- including the so-called
projective modules.
We now review shortly these important concepts.

\newcommand{\setW}{\mathcal{S}}
\newcommand{\KK}{P}

\subsubsection{Cellular algebras and (co)cell modules}
We first give a brief review of cellular algebras and general aspects
of their representation theory. These algebras were initially introduced in~\cite{GL0}.
Roughly, cellular algebras are defined as those with a special basis having particularly nice properties under multiplication.
To be more precise, a \textit{cellular} algebra $A$ over $\oC$ is an associative algebra equipped with a finite partially ordered set $\setW$ (the set of weights with an order on it)
and a finite set $W(\lambda)$, for any $\lambda\in\setW$, such that the algebra has a basis $C^{\lambda}_{i,j}$, where $\lambda\in\setW$ and $i$ and $j$ run through all elements in $W(\lambda)$, with the following two properties: (i) for
each $a \in A$ the product $a C^{\lambda}_{i,j}$
can be written as $\bigl(\sum_{k\in W(\lambda)}c_a(i,k) C^{\lambda}_{k,j}\bigr) + b$, where $b$
is a linear combination of basis elements with upper index
$\mu$ strictly smaller than $\lambda$
 and where the coefficients $c_a(i,k)$
 do not depend on $j$; (ii) there exists an anti-automorphism $\cdot^*$
 on $A$ such that its square is the identity, and  that  it sends $C^{\lambda}_{i,j}$ to $C^{\lambda}_{j,i}$.
The diagram algebras such that the ordinary Temperley-Lieb algebras $\TL{N}(m)$ are examples of cellular algebras. The special basis $C^{\lambda}_{i,j}$  in the latter case is given by   Temperley-Lieb diagrams  of arcs and through lines without crossings. Elements from the set $\setW$ of weights are just numbers of through lines in the diagrams and the partial order is just the ordinary order on natural numbers. The algebras $\JTL{N}(m)$ are just another example of a class of cellular algebras. Now the weights $\lambda$ are the pairs of numbers $(j,P)$, with  $2j$ is the number of through lines and $P$ is the exponent of the pseudomomentum. The anti-automorphims $\cdot^*$ from the definition is just the reflection of a diagram through the horizontal axis. The proof of cellularity is rather straightforward following the definition of the multiplication and the basis in $\JTL{N}(m)$ given in the previous section.

A cornerstone of cellular algebras theory is the notion of
{\textit cell modules}. These modules are parametrized by elements $\lambda$ from the set $\setW$ of weights. A cell (or standard) module $\StTL{\lambda}$ over a cellular algebra $A$
is a vector space with a basis $\left\{C_j \, |\, j\in W(\lambda)\right\}$, with  the action of any $a\in A$ given by
$aC_{j} = \sum_{k\in M(\lambda)}c_a(j,k) C_{k}$,  where  the numbers $c_a(j,k)$ are those from the definition of the cellular algebras. Similarly, we introduce \textit{co-cell} modules as duals to the cell modules, and defined with the use of the algebra anti-automorphism $\cdot^*$. The modules $\StJTL{j}{P}$, with $P$ being all $j$-th
roots of unity, introduced in the previous section give all the cell modules for
$\JTL{N}(m)$. The co-cell modules are given by the corresponding costandards described in the end of Sec.~\ref{sec:st-mod}.

A cellular algebra itself and its projective modules   are
filtered by the cell modules, or in other words, can be constructed as
appropriate glueings or extensions between cell modules.
Precise statements about projective modules will be given below. Now, we go into details
for the class of the diagram algebras $\JTL{N}(m)$.

\subsubsection{The partial order on cell $\JTL{N}$-modules weights}
We now turn to the details of the representation theory of $\rJTL{N}(1)$, which
we will then apply to the study of our super-spin chain.
We first analyze  the representation theory for the augmented algebra $\JTL{N}(1)$ - which is  technically easier -  and then  restrict ourselves to the $\rJTL{N}(1)$ algebra.

The subquotient structure of the cell
$\JTL{N}(1)$ modules can be easily obtained using results of~\cite{GL}.
The set of weights for the cell $\JTL{N}(m)$-modules $\StJTL{j}{P}$ is
\begin{equation}\label{def-set-weight-st}
\setW = \bigl\{(0,\q^2),(j,\KK)|\;1\leq j\leq N,\, \KK=e^{\frac{2i\pi}{j}l}, 1\leq
l\leq j\bigr\}.
\end{equation}
We introduce a (weak) partial order $\poless$ (due to~\cite{GL}) on this set of weights such that $(j_1,P_1)\poless(j_2,P_2)$ if $j_2-j_1=k$ for a non-negative integer $k$ and the pairs $(j_1,P_1)$ and $(j_2,P_2)$ satisfy
\begin{equation}
P_1 = \q^{2\epsilon j_2}\qquad \text{and}\qquad P_2 = \q^{2\epsilon j_1},\qquad \epsilon=\pm1.
\end{equation}

Note that the partial order $\poless$ generates equivalence classes on the set of the weights $(j,P)$ -- two weights are in the same equivalence class if and only if they are in the relation $\poless$. The result of~\cite{GL} is that there  exist non-trivial homomorphisms only between  cell (or standard) modules having weights from the same equivalence class.
The idea thus is that simple JTL modules can be glued with each other (or extended  by each other) non-trivially only if their weights are from the same equivalence class.  There are non-trivial classes (containing two or more weights) only when $\q$ is a root of unity. In this case, many cell modules are reducible but indecomposable.

We give now several examples of the partial order on the set of weights.
For the case $\q=i$, the equivalence classes (generated by the partial order $\preceq$) are given by the disjoint
oriented graphs on the diagram on Fig.~\ref{fig:part-ord-secondroot}, where it is apparent  that there are two non-trivial classes, denoted by arrows of different types.
 \begin{figure}\centering
\begin{equation*}
\xymatrix@C=15pt@R=15pt@W=2pt@M=2pt{%
    {\IrJTL{0}{-1}}   &\IrJTL{1}{1}\ar[l]
    &\IrJTL{2}{1}
    &\IrJTL{3}{1} \ar[dl]
    &\IrJTL{4}{1} \ar@{-->}@/_1pc/[ll]
    &\IrJTL{5}{1} \ar@[][ddl]& \IrJTL{6}{1} \ar@{-->}@/_1pc/[ll]&\quad\dots\ar@[][dddl]&\ar@{-->}@/_1pc/[ll]\\
    &&\IrJTL{2}{-1} \ar@[][ul] &\IrJTL{3}{e^{\frac{2i\pi}{3}}}
    &\IrJTL{4}{i} &\IrJTL{5}{e^{\frac{2i\pi}{5}}}&\IrJTL{6}{e^{\frac{i\pi}{3}}}&\quad\dots&\\
    &&&\IrJTL{3}{e^{\frac{4i\pi}{3}}}
    &\IrJTL{4}{-1} \ar@[][uul] &\IrJTL{5}{e^{\frac{4i\pi}{5}}}&\IrJTL{6}{e^{\frac{2i\pi}{3}}}&\quad\dots&\\
    &&&
    &\IrJTL{4}{-i} &\IrJTL{5}{e^{\frac{6i\pi}{5}}}&\IrJTL{6}{-1}\ar@[][uuul]&\quad\dots&\\
    &&&
    &&\IrJTL{5}{e^{\frac{8i\pi}{5}}}&\IrJTL{6}{e^{\frac{4i\pi}{3}}}&\quad\dots&\\
    &&&
    &&&\IrJTL{6}{e^{\frac{5i\pi}{3}}}&\quad\dots&
    }
\end{equation*}
     \caption{The partial order
$\preceq$ on the set of weights at $\q=i$. Two
nodes $a$ and $b$ are connected by an arrow $a\to b$ if and only if $a\succeq
b$.}
    \label{fig:part-ord-secondroot}
    \end{figure}
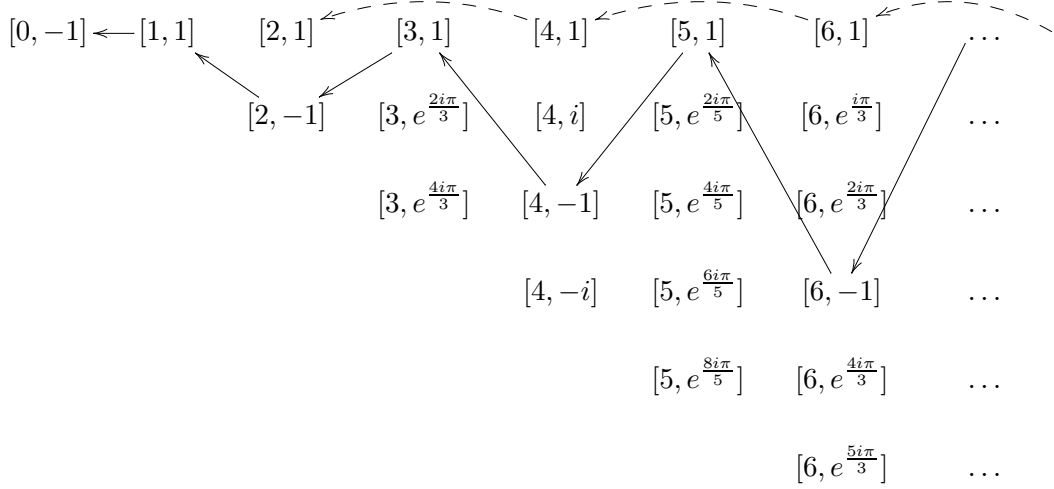
For the case $\q=e^{\frac{i\pi}{3}}$, the equivalence classes  are given in the diagram on Fig.~\ref{fig:part-ord-thirdroot}.
 \begin{figure}\centering
\begin{equation*}
  \xymatrix@C=12pt@R=15pt@W=2pt@M=2pt{%
    {\IrJTL{0}{e^{\frac{2i\pi}{3}}}} &\IrJTL{1}{1}\ar@[]@/^1pc/[l]
    &\IrJTL{2}{1} \ar@[]@/_1pc/[ll]
    &\IrJTL{3}{1}
    &\IrJTL{4}{1} \ar@[][dl]\ar@[][ddl]
    &\IrJTL{5}{1} \ar@[][dll]\ar@[][ddll]
    & \IrJTL{6}{1}  \ar@{-->}@/_1pc/[lll] &\quad\dots\ar@[][ddl]\ar@[][ddddl]&\quad& \ar@{-->}@/_1pc/[lll] \\
    &&\IrJTL{2}{-1} &\IrJTL{3}{e^{\frac{2i\pi}{3}}} \ar@[][ull] \ar@[][ul]
    &\IrJTL{4}{i} &\IrJTL{5}{e^{\frac{2i\pi}{5}}}&\IrJTL{6}{e^{\frac{i\pi}{3}}}&\quad\dots&\quad&\\
    &&&\IrJTL{3}{e^{\frac{4i\pi}{3}}}\ar@[][uul] \ar@[][uull]
    &\IrJTL{4}{-1}
    &\IrJTL{5}{e^{\frac{4i\pi}{5}}}&\IrJTL{6}{e^{\frac{2i\pi}{3}}}
    \ar@[][uull] \ar@[][uul]  &\quad\dots&\quad&\\
    &&&
    &\IrJTL{4}{-i} &\IrJTL{5}{e^{\frac{6i\pi}{5}}}&\IrJTL{6}{-1}&\quad\dots&\quad&\\
    &&&
    &&\IrJTL{5}{e^{\frac{8i\pi}{5}}}&\IrJTL{6}{e^{\frac{4i\pi}{3}}}\ar@[][uuuull] \ar@[][uuuul]
    &\quad\dots&\quad&\\
    &&&
    &&&\IrJTL{6}{e^{\frac{5i\pi}{3}}}&\quad\dots&\quad&
    }
\end{equation*}
      \caption{The partial order
$\preceq$ on the set of weights at $\q=e^{\frac{i\pi}{3}}$. Two
nodes $a$ and $b$ are connected by an arrow $a\to b$ if and only if $a\succeq
b$.}
    \label{fig:part-ord-thirdroot}
    \end{figure}
In this case, we also have only two non-trivial equivalence classes
containing two and more
weights, while  all the other classes (that is,  all the nodes without in- and
out-going arrows) are trivial, and  contain only one weight. For a trivial class, the corresponding cell
module is simple. The non-trivial class with  dashed arrows on the figure contains all weights $(j,1)$ with $j\;\textrm{mod}\;3=0$ and each node has a single in-going or out-going arrow. We call this class \textit{singly critical}. The second class (with  solid arrows)  involves the weights $(0,\q^{2})$ and  $(j,1)$ with $j\;\textrm{mod}\;3\in\{1,2\}$
and $(j,\q^{\pm2})$ whenever $j\;\textrm{mod}\;3=0$ and $j>0$. Nodes in this class have double arrows and we call it \textit{doubly critical}. As we shall see below, the cell modules from the singly critical class are of chain type while the cell modules having weights from the doubly critical class have a ``braid-type'' subquotient structure.

Each cell module $\StJTL{j}{P}$ is known to be indecomposable: we  denote its top simple subquotient -- the  quotient by its maximal submodule -- by $\IrrJTL{j}{P}$; we will also use the short hand notation $\AIrrTL{j}{P}$.  Using the partial order $\poless$ on $\setW$  described earlier, the simple-module content of the cell $\JTL{N}(1)$-modules
can be deduced~\cite{GL}
\begin{equation}
\left[\StJTL{j}{P}\right] = \bigoplus_{(j',P')\pomore(j,P)} \IrJTL{j'}{P'},
\end{equation}
where the notation $\left[\StJTL{j}{P}\right]$ means that the corresponding module, as a vector space, is given by the right hand side, and that it has the simple subquotients denoted by $\AIrrTL{j}{P}$ that appear in the sum. We have thus described the structure of $\StJTL{j}{P}$ up to glueings or arrows between simple subquotients. The information about arrows can be also deduced from results of~\cite{GL}:
the cell modules over $\JTL{N}(1)$ have  a sequence of {\sl embeddings} (see the description of injective
 homomorphisms from Thm.~3.4 and also the proof of Thm.~5.1 in~\cite{GL})
\begin{equation}\label{cell-emb}
\StJTL{j_1}{P_1}\hookrightarrow\StJTL{j_2}{P_2}\hookrightarrow\dots\hookrightarrow\StJTL{j_n}{P_n}
\end{equation}
for any sequence of weights
$(j_1,P_1)\pomore(j_2,P_2)\pomore\dots\pomore(j_n,P_n)$ in Fig.~\ref{fig:part-ord-thirdroot}.
This sequence of embeddings tells us that modules corresponding to the doubly-critical  class have a braid-type subquotient structure, that can be extracted from
Fig.~\ref{fig:cell-thirdroot}, showing the subquotient structure for
the $\JTL{N}(1)$-module $\StJTL{0}{\q^2}$. The other $\StJTL{j}{P}$,
with $P=1$ for $j\;\textrm{mod}\;3\in\{1,2\}$ and $P=\q^{\pm2}$ when
$j\;\textrm{mod}\;3=0$, are obtained as the submodules `emanating' from
the corresponding simple subquotient. The cell modules $\StJTL{j}{1}$,
with $j\;\textrm{mod}\;3=0$, from the singly critical class are also presented
in Fig.~\ref{fig:cell-thirdroot} and are of chain type.
 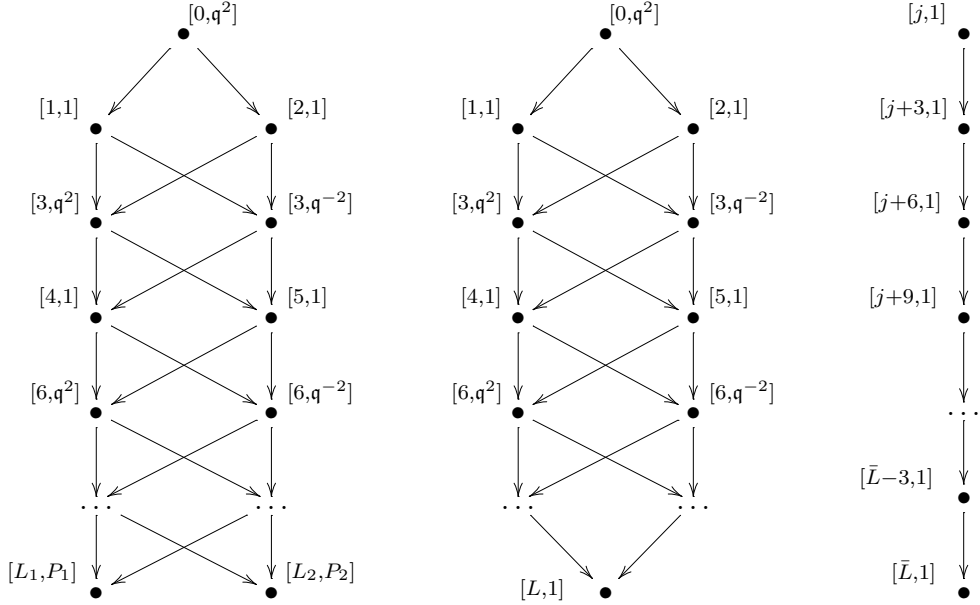
\begin{figure}\centering
 \begin{equation*}
   \xymatrix@R=25pt@C=18pt
   {{}&\bullet\ar@{}|{\substack{\AIrrTL{0}{\q^2}}\kern-7pt}[]+<15pt,15pt>\ar[dr]\ar[dl]&&\\
     {\bullet}\ar@{}|{\substack{\AIrrTL{1}{1}}\kern-7pt}[]+<-35pt,15pt>\ar[d]\ar[drr]
     &&{\bullet}\ar@{}|{\substack{\AIrrTL{2}{1}}\kern-7pt}[]+<20pt,15pt>\ar[d]\ar[dll]\\
     {\bullet}\ar@{}|{\substack{\AIrrTL{3}{\q^2}}\kern-7pt}[]+<-38pt,15pt>\ar[d]\ar[drr]
     &&{\bullet}\ar@{}|{\substack{\AIrrTL{3}{\q^{-2}}}\kern-7pt}[]+<30pt,15pt>\ar[d]\ar[dll]\\
     {\bullet}\ar@{}|{\substack{\AIrrTL{4}{1}}\kern-7pt}[]+<-35pt,15pt>\ar[d]\ar[drr]
     &&{\bullet}\ar@{}|{\substack{\AIrrTL{5}{1}}\kern-7pt}[]+<20pt,15pt>\ar[d]\ar[dll]\\
     {\bullet}\ar@{}|{\substack{\AIrrTL{6}{\q^2}}\kern-7pt}[]+<-38pt,15pt>\ar[d]\ar[drr]
     &&{\bullet}\ar@{}|{\substack{\AIrrTL{6}{\q^{-2}}}\kern-7pt}[]+<30pt,15pt>\ar[d]\ar[dll]\\
     {\dots}\ar[d]\ar[drr]
     &&{\dots}\ar[d]\ar[dll]\\
     {\bullet}\ar@{}|{\substack{\AIrrTL{L_1}{P_1}}\kern-7pt}[]+<-48pt,15pt>
     &&{\bullet}\ar@{}|{\substack{\AIrrTL{L_2}{P_2}}\kern-7pt}[]+<30pt,15pt>
     } \qquad
   \xymatrix@R=25pt@C=18pt
   {{}&\bullet\ar@{}|{\substack{\AIrrTL{0}{\q^2}}\kern-7pt}[]+<15pt,15pt>\ar[dr]\ar[dl]&&\\
     {\bullet}\ar@{}|{\substack{\AIrrTL{1}{1}}\kern-7pt}[]+<-35pt,15pt>\ar[d]\ar[drr]
     &&{\bullet}\ar@{}|{\substack{\AIrrTL{2}{1}}\kern-7pt}[]+<20pt,15pt>\ar[d]\ar[dll]\\
     {\bullet}\ar@{}|{\substack{\AIrrTL{3}{\q^2}}\kern-7pt}[]+<-38pt,15pt>\ar[d]\ar[drr]
     &&{\bullet}\ar@{}|{\substack{\AIrrTL{3}{\q^{-2}}}\kern-7pt}[]+<30pt,15pt>\ar[d]\ar[dll]\\
     {\bullet}\ar@{}|{\substack{\AIrrTL{4}{1}}\kern-7pt}[]+<-35pt,15pt>\ar[d]\ar[drr]
     &&{\bullet}\ar@{}|{\substack{\AIrrTL{5}{1}}\kern-7pt}[]+<20pt,15pt>\ar[d]\ar[dll]\\
     {\bullet}\ar@{}|{\substack{\AIrrTL{6}{\q^2}}\kern-7pt}[]+<-38pt,15pt>\ar[d]\ar[drr]
     &&{\bullet}\ar@{}|{\substack{\AIrrTL{6}{\q^{-2}}}\kern-7pt}[]+<30pt,15pt>\ar[d]\ar[dll]\\
     {\dots}\ar[dr]
     &&{\dots}\ar[dl]\\
     &{\bullet}\ar@{}|{\substack{\AIrrTL{L}{1}}\kern-7pt}[]+<-55pt,2pt> &
     } \quad
   \xymatrix@R=25pt@C=18pt
   {{}&\bullet\ar@{}|{\substack{\AIrrTL{j}{1}}\kern-7pt}[]+<-35pt,15pt>\ar[d]&&\\
     &{\bullet}\ar@{}|{\substack{\AIrrTL{j+3}{1}}\kern-7pt}[]+<-45pt,15pt>\ar[d]&&\\
     &{\bullet}\ar@{}|{\substack{\AIrrTL{j+6}{1}}\kern-7pt}[]+<-50pt,15pt>\ar[d]&&\\
     &{\bullet}\ar@{}|{\substack{\AIrrTL{j+9}{1}}\kern-7pt}[]+<-53pt,15pt>\ar[d]&&\\
     &{\dots}\ar[d]&&\\
     &{\bullet}\ar@{}|{\substack{\AIrrTL{\bar{L}-3}{1}}\kern-7pt}[]+<-58pt,15pt> \ar[d]&&\\
     &{\bullet}\ar@{}|{\substack{\AIrrTL{\bar{L}}{1}}\kern-7pt}[]+<-45pt,15pt> &&
     }
 \end{equation*}
      \caption{The structure of the cell $\JTL{N}$-modules
      $\StJTL{j}{P}$ with $2j$ through lines at
      $\q=e^{\frac{i\pi}{3}}$. The two modules on the left are cell
      modules from the doubly critical class: on the first one $L_i=L-2 + i$ and
      $P_i=1$ when $L\;\textrm{mod}\;3=2$, and $L_i=L$, $P_1=\q^2$,
      and $P_2=\q^{-2}$ when $L\;\textrm{mod}\;3=0$; the second
      module is for the case $L\;\textrm{mod}\;3=1$; the rightmost
      module corresponds to the doubly critical class (with
      $j\;\textrm{mod}\;3=0$ and $\bar{L}=L-L\;\textrm{mod}\;3$). We denote simple subquotients by their weights $[j,P]$.
}
    \label{fig:cell-thirdroot}
    \end{figure}

We note  now that the cell modules  over the quotient $\rJTL{N}(1)$ (and not $\JTL{N}(1)$)  for $j>0$ are
given by the same diagrams in Fig.~\ref{fig:cell-thirdroot}  and we
use the same notation $\StJTL{j}{P}$ for them. The only difference  is
for $j=0$. The cell $\rJTL{N}(1)$-module with the weight $(0,\q^2)$
is given by the quotient $\StJTL{0}{\q^2}/\StJTL{1}{1}$, as we discussed previously, and has only two irreducible subquotients: $\bAStTL{0}{\q^2} = \AIrrTL{0}{\q^2}\to\AIrrTL{2}{1}$.

\medskip

 We conclude  this subsection with a comment on the structure of the costandard  modules $\bigl(\StJTL{j}{P}\bigr)^*$ introduced at the end of Sec.~\ref{sec:st-mod}. Generically, these modules are isomorphic to $\StJTL{j}{P^{-1}}$ while for our choice of $\q$ and parameters $P$ we have only homomorphisms $\psi_{j,P}:\StJTL{j}{P^{-1}}\to\bigl(\StJTL{j}{P}\bigr)^*$ that map the irreducible head $\IrrJTL{j}{P^{-1}}$ to the (unique) irreducible submodule in  $\bigl(\StJTL{j}{P}\bigr)^*$ and the kernel of $\psi_{j,P}$ can be studied using the bilinear form in~\cite[Sec 2.6]{GL}. We thus have that the diagrams for $\StJTL{j}{1}^*$ (with $j\;\textrm{mod}\;3$ equal $1$ or $2$) have in the bottom the irreducible submodule $\IrrJTL{j}{1}$ and all the arrows reversed while modules $\StJTL{j}{\q^{\pm2}}^*$ (with $j\;\textrm{mod}\;3=0$) have in the bottom the irreducible submodule $\IrrJTL{j}{\q^{\mp2}}$.

\subsection{Dimensions of simple JTL modules}

The subquotient structure of the standard modules  allows us to compute  the dimensions of all simple JTL modules.
It will be convenient in what follows to introduce the notation
\begin{equation}
\hat{d}_0'= \binom{2L}{L}
 \end{equation}
 and to recall the notation $\hat{d_j}$ from \eqref{eq:dj} for dimensions of the cell modules. Dimensions of the corresponding simples will be denoted as
 \begin{equation*}
\hat{d}_{j,P}^{(0)}\equiv \dim \IrrJTL{j}{P}.
 \end{equation*}
Using the subquotient structure from Fig. \ref{fig:cell-thirdroot}, we can write immediately the dimension of the simple associated with $j=0$ and $e^{2iK}=e^{2i\pi/3}$:
\begin{eqnarray}
\dim \IrrJTL{0}{\q^2} \equiv \hat{d}_{0,e^{2i\pi/3}}^{(0)}=\hat{d}_0'-\hat{d}_1-\hat{d}_2+2\hat{d}_3-\hat{d}_4-\hat{d}_5\ldots\nonumber\\
=\sum_{n=0}^\infty (d_n-d_{n+2}),
\end{eqnarray}
where $d_j=\hat{d}_j-\hat{d}_{j+1}$, with $j>0$, and $d_0=\hat{d}_0'-\hat{d}_1=\hat{d}_0$ are the dimensions of the cell modules for the ordinary Temperley--Lieb algebra.  We thus see that $\hat{d}_0^{(0)}$ coincides with the dimension of the simple for $j=0$ in the open or TL case:
\begin{equation}
\hat{d}_{0,e^{2i\pi/3}}^{(0)}=d_0^0=1.
\end{equation}
From the diagrams in Fig. \ref{fig:cell-thirdroot} one can obviously derive the following more general results:
\begin{eqnarray}
 \hat{d}_{2+3n,1}^{(0)}=\hat{d}_{2+3n}-\hat{d}_{3+3n}-\hat{d}_{3+3n,e^{2i\pi/3}}^{(0)},\nonumber\\
\hat{d}_{3+3n,e^{\pm 2i\pi/3}}^{(0)}=\hat{d}_{3+3n}-\hat{d}_{4+3n}-\hat{d}_{5+3n,1}^{(0)}.\nonumber
\end{eqnarray}
The final result can be thus obtained recursively.
Comparing these relations with those in the open TL case \cite{ReadSaleur07-2} we see that
\begin{eqnarray}
\hat{d}_{2+3n,1}^{(0)}&=&d_{2+3n}^0,\nonumber\\
\hat{d}_{3+3n,e^{\pm 2i\pi/3}}^{(0)}&=&d_{3+3n}^0,\nonumber
\end{eqnarray}
where $d_j^0$ are dimensions of simple TL modules corresponding to tops of the cell modules with $2j$
through lines.

\section{The periodic $s\ell(2|1)$ spin chain and its decomposition}\label{sec:4}
 Recall that in Sec. \ref{sec:susy} we introduced a family of periodic supersymmetric spin chains with nearest neighbour interaction $e_j$ given by \eqref{ej-1} and \eqref{ej-2} and the Hamiltonian is defined in \eqref{hamil}.
 In the rest of the paper, we will consider only the first member of this family -- the $s\ell(2|1)$ (or $g\ell(2|1)$) spin-chain which is an alternating tensor product of the fundamental representation $\oC^{2|1}$ on even sites and its dual on odd sites. We refer to App. \ref{appSl21} for the definition of $s\ell(2|1)$ and basics of its representation theory. The interaction operators
$e_j$ together with a translation operator give a representation of the JTL algebra $\rJTL{N}(1)$, see an explicit matrix representation in more general context of periodic
$s\ell(n+m|n)$ spin chains in App. \ref{sec:TL-faith}.

In this section, we describe a decomposition of the periodic $s\ell(2|1)$ spin chain onto indecomposable $\rJTL{N}(1)$ modules. The important difference from the $\gl(1|1)$ spin chain, as a representation of the JTL algebra $\rJTL{N}(0)$, studied in~\cite{GRS1,GRS2} is that the $s\ell(2|1)$ one turns out to be a \textsl{faithful} representation of $\rJTL{N}(1)$, i.e. the kernel of the representation is trivial. We give a proof of this fact including higher-rank cases, {\it i.e.}, all $s\ell(n+m|n)$ spin-chains with $n,m\geq1$, in App.~\ref{sec:TL-faith}. The faithfulness of our spin chains motivates the study of projective modules for $\rJTL{N}(1)$: because of the triviality of the kernel, all projective covers have to appear as submodules in the periodic $s\ell(2|1)$ spin chain.
 We give a brief review of this important concept (projectivity and projective covers). We then describe the structure of projective covers using general results in the theory of cellular algebras.

\subsection{Indecomposable modules: general definitions}\label{sec:indecgen}

We collect in this subsection the definitions of several important mathematical concepts that are needed in order to fully appreciate the rest of this paper, such as projectiveness, self-duality, and tilting modules.

We begin with a definition of what is called an injective hull of a simple module. It is the maximal indecomposable that can contain this simple module as a submodule. Then, any injective module is by definition a direct sum of injective hulls and if a submodule in a bigger module is injective then it is a direct summand. In
 contrast, projective modules are defined as direct sums of projective covers, where the projective cover of a simple module is the unique (for finite-dimensional algebras) indecomposable module of maximal dimension that can cover the simple module, {\it i.e.}, the projective cover contains it as a top subquotient. Then, if a subquotient of a bigger module is projective then it is a direct summand. Note that the distinction between the notions \textit{subquotient} and \textit{submodule} is crucial here. Therefore, the projectiveness property does not necessarily imply injectiveness and {\it vice-versa}.

Projective modules appear as direct summands of free modules like the regular representation of an algebra, but for spin chains -- and thus presumably LCFTs -- the direct summands are, more generally, \textit{tilting} modules which are defined and discussed in details below. In some cases, the tiltings are also indecomposable projective, but this is not necessarily the case. In general, there are tilting modules which are not projective, and projective modules which are not tilting.

In the theory of cellular algebras~\cite{GL0,GL1}, there is a general theorem that allows one  to obtain the subquotient structure of  projective covers knowing the subquotient structure of the standard (cell) modules. The essential part of this theorem can be expressed as a reciprocity property of projectives.  Let $[\StTL{}:\IrrTL{}]$ and  $[\PrTL{}:\StTL{}]$ denote the number of appearance of $\IrrTL{}$ in a diagram for a standard module $\StTL{}$ and the number of appearance of $\StTL{}$ in a diagram for the projective cover $\PrTL{}$, respectively. Then, the reciprocity property reads
\begin{equation}\label{recip}
[\PrTL{}:\StTL{}] = [\StTL{}:\IrrTL{}],
\end{equation}
{\it i.e.}, the projective cover
 $\PrTL{}$ that covers $\IrrTL{}$ is composed of those standard modules $\StTL{}$ that have the simple module
$\IrrTL{}$ as a subquotient. This statement can be
deduced from the proof of Thm. 3.7 in~\cite{GL0}.

\medskip
Having an indecomposable (and reducible) module $M$, we call {\it socle} its maximum semisimple submodule -- in terms of nodes and arrows in the subquotient diagram for $M$, the socle is the direct sum of all nodes having only ingoing arrows. Similarly, the \textit{top} of a module $M$ is the maximal subquotient with respect to the property that a quotient of $M$ is a semisimple module, {\it i.e.}, it is the subquotient of $M$ having only outgoing arrows.

\medskip

The spin-chains we consider have a non-degenerate bilinear form given explicitly, for example, in terms of spins. These spin-chains provide a special class of representations with two essential properties: (i) they are filtered by standard modules and (ii) they are self-dual, {\it i.e.}, $\Hilb_N^*\cong\Hilb_N$.
Direct summands in such representations are called tilting modules.

We define a \textit{tilting} module over a cellular algebra
as a module  that has a filtration by standard modules -- these are $\StJTL{j}{P}$ in our case -- and an inverse
filtration by the corresponding duals  -- the costandard  modules $\bigl(\StJTL{j}{P}\bigr)^*$, which have reversed arrows in their subquotient diagram. Note that both standard and costandard modules are introduced in the context of the JTL algebras in Sec.~\ref{sec:st-mod} and their structure is described in Sec.~\ref{sec:st-mod-str}. We recall that a
  filtration of an $A$-module $M$ by $A$-modules $W_i$, with $0\leq
  i\leq n-1$, is a sequence
  of embeddings
  \begin{equation}\label{eq:filtr}
  0=M_0\subset M_{1}\subset\dots \subset M_i
  \subset\dots \subset M_{n-1}\subset M_n=M
   \end{equation}
   such that the quotient $M_{i+1}/M_i$ of ``neighbor'' submodules (called the $i$th section) is isomorphic to $W_i$, or in simple words we can say that $M$ is a glueing of $W_i$'s.
The tilting modules are thus self-dual by our definition.
Several explicit examples will be given below.

We will also show that  tilting JTL modules satisfy a uniqueness property:
one can introduce the tilting module $\TilJTL{j}{P}$ \textit{generated} from a standard module $\StJTL{j}{P}$
as the indecomposable tilting module containing this standard module as a submodule. This property of having a standard submodule
uniquely defines the tilting module, up to an isomorphism. One should replace each simple subquotient of this standard module by a costandard module
having this simple module in its socle --- the unique simple subquotient that has only incoming arrows. The result is then
automatically a tilting module, by construction.

Before discussing  spin-chains and the reasons why tilting modules are more important objects for applications, we give some results about projective modules for the JTL algebra.

\subsection{The projective modules over $\rJTL{N}(1)$}\label{sec:proj}

We  first describe the cell content of projective covers
for all simple modules over $\rJTL{N}(1)$.
Let $\poless$ be the partial order on the set $\setW$ of weights
of cell modules introduced above and let greek letters ($\lambda,\mu,\nu$) denote the weights $[j,P]$ for simplicity.
The projective cover $\PrCA{\lambda}$
of a simple module $\IrCA{\lambda}$ has  the cell content
\begin{equation}\label{cell-cont-proj}
\left[\PrCA{\lambda}\right] = \bigoplus_{\nu\poless\lambda}\left[\StCA{\nu}:\IrCA{\lambda}\right]\StCA{\nu},
\end{equation}
where by $\left[\StCA{\nu}:\IrCA{\lambda}\right]$ we denote the
multiplicity of the appearance of the simple module $\IrCA{\lambda}$ in
$\StCA{\nu}$, i.e. $\left[\StCA{\nu}:\IrCA{\lambda}\right] =
\dim\Hom(\PrCA{\lambda},\StCA{\nu})$. The cell content~\eqref{cell-cont-proj} of the projective covers is actually
 a consequence of the general reciprocity
result $\left[\StCA{\nu}:\IrCA{\lambda}\right] =
\left[\PrCA{\lambda}:\StCA{\nu}\right]$ discussed above.
Note that using the
subquotient structure of the cell modules given in Fig.~\ref{fig:cell-thirdroot}
we see that all the numbers $\left[\StCA{\nu}:\IrCA{\lambda}\right]$ are zero or one.

Moreover, the projective module $\PrCA{\lambda}$ has a filtration by cell
modules respecting the cell-filtration of the cellular algebra, i.e. the
cell modules with the lowest weight (w.r.t to the order $\poless$)
are submodules; quotienting by them gives a module with cell
submodules corresponding to the next-to-lowest weight and so on.
This general property of the cell filtrations of projective covers allows now to describe
subquotient structure of  the projective covers $\PrCA{\lambda}$ in terms of  cell modules.
 We
assume  for simplicity that $N\;\textrm{mod}\;3=0$. Then,
the subquotient structure of $\PrJTL{L}{\q^{\pm 2}}$, where we set $N=2L$, in terms of cell modules, is as simple as on the
left of Fig.~\ref{fig:proj-thirdroot}.
Further, the cell filtration of the
projective covers $\PrJTL{j}{P}$, with $j\ne 0,2$,
is obtained from Fig.~\ref{fig:proj-thirdroot}
(taking $N=2L\geq M$)
as the  submodule generated from $\StJTL{j}{P}$.

Now, it is easy to see that the projective covers have  a
sequence of embeddings (opposite of the one in~\eqref{cell-emb})
\begin{equation}
\PrJTL{j_1}{P_1}\hookleftarrow\PrJTL{j_2}{P_2}\hookleftarrow\dots\hookleftarrow\PrJTL{j_n}{P_n}
\end{equation}
for any sequence of weights
$(j_1,P_1)\pomore(j_2,P_2)\pomore\dots\pomore(j_n,P_n)$ in
Fig.~\ref{fig:part-ord-thirdroot} with $j_k\ne 0,2$.
The projective
$\rJTL{N}$-modules $\PrJTL{1}{1}$ and $\PrJTL{0}{q^2}$ are isomorphic
to the cell modules $\StJTL{1}{1}$ and $\StJTL{0}{q^2}$, respectively;
the module $\PrJTL{2}{1}$ is given in the middle diagram in
Fig.~\ref{fig:proj-thirdroot}.

 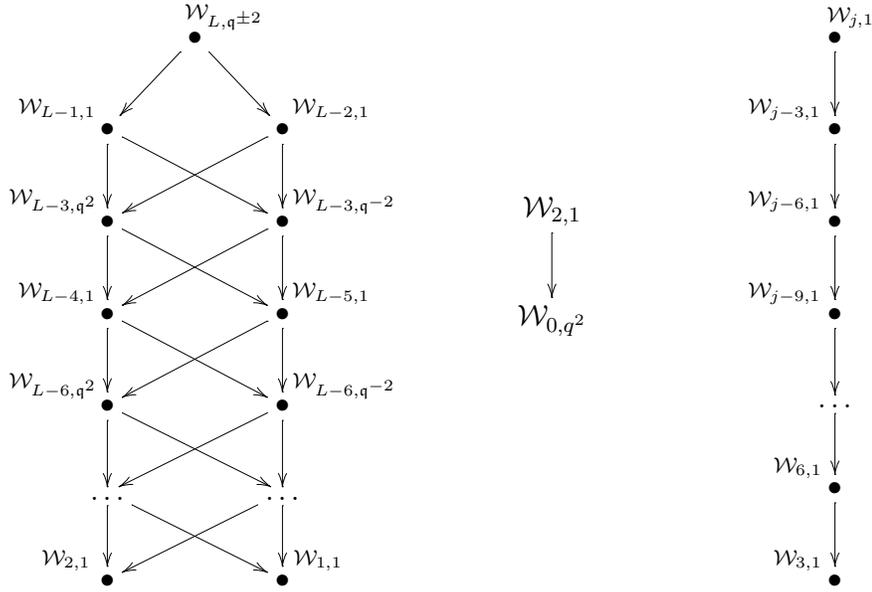
\begin{figure}\centering
 \begin{equation*}
   \xymatrix@R=24pt@C=18pt
   {{}&\bullet\ar@{}|{\substack{\StJTL{L}{\q^{\pm 2}}}\kern-7pt}[]+<15pt,15pt>\ar[dr]\ar[dl]&&\\
     {\bullet}\ar@{}|{\substack{\StJTL{L-1}{1}}\kern-7pt}[]+<-45pt,15pt>\ar[d]\ar[drr]
     &&{\bullet}\ar@{}|{\substack{\StJTL{L-2}{1}}\kern-7pt}[]+<30pt,15pt>\ar[d]\ar[dll]\\
     {\bullet}\ar@{}|{\substack{\StJTL{L-3}{\q^2}}\kern-7pt}[]+<-48pt,15pt>\ar[d]\ar[drr]
     &&{\bullet}\ar@{}|{\substack{\StJTL{L-3}{\q^{-2}}}\kern-7pt}[]+<40pt,15pt>\ar[d]\ar[dll]\\
     {\bullet}\ar@{}|{\substack{\StJTL{L-4}{1}}\kern-7pt}[]+<-45pt,15pt>\ar[d]\ar[drr]
     &&{\bullet}\ar@{}|{\substack{\StJTL{L-5}{1}}\kern-7pt}[]+<30pt,15pt>\ar[d]\ar[dll]\\
     {\bullet}\ar@{}|{\substack{\StJTL{L-6}{\q^2}}\kern-7pt}[]+<-48pt,15pt>\ar[d]\ar[drr]
     &&{\bullet}\ar@{}|{\substack{\StJTL{L-6}{\q^{-2}}}\kern-7pt}[]+<40pt,15pt>\ar[d]\ar[dll]\\
     {\dots}\ar[d]\ar[drr]
     &&{\dots}\ar[d]\ar[dll]\\
     {\bullet}\ar@{}|{\substack{\StJTL{2}{1}}\kern-7pt}[]+<-38pt,15pt>
     &&{\bullet}\ar@{}|{\substack{\StJTL{1}{1}}\kern-7pt}[]+<20pt,15pt>
     } \qquad
   \xymatrix@R=24pt@C=18pt
   {&&\\
     &&\\
     &\StJTL{2}{1}\ar[d]&\\
     &\StJTL{0}{q^2}&\\
     &&\\
     &&\\
     &&
     } \qquad
   \xymatrix@R=24pt@C=18pt
   {{}&\bullet\ar@{}|{\substack{\StJTL{j}{1}}\kern-7pt}[]+<5pt,15pt>\ar[d]&&\\
     &{\bullet}\ar@{}|{\substack{\StJTL{j-3}{1}}\kern-7pt}[]+<-45pt,15pt>\ar[d]&&\\
     &{\bullet}\ar@{}|{\substack{\StJTL{j-6}{1}}\kern-7pt}[]+<-45pt,15pt>\ar[d]&&\\
     &{\bullet}\ar@{}|{\substack{\StJTL{j-9}{1}}\kern-7pt}[]+<-45pt,15pt>\ar[d]&&\\
     &{\dots}\ar[d]&&\\
     &{\bullet}\ar@{}|{\substack{\StJTL{6}{1}}\kern-7pt}[]+<-35pt,15pt> \ar[d]&&\\
     &{\bullet}\ar@{}|{\substack{\StJTL{3}{1}}\kern-7pt}[]+<-35pt,15pt>
     }
 \end{equation*}
      \caption{The cell-filtraion of projective $\rJTL{N}(1)$-modules
      $\PrJTL{j}{P}$. The two modules on the left are projective
      covers from the doubly critical class: a projective cover
      $\PrJTL{j}{P}$, with $j\ne 0,2$, is the submodule generated from
      $\StJTL{j}{P}$ on the left-most diagram given for
      $2L\;\textrm{mod}\;3=0$; the second module is $\PrJTL{2}{1}$;
      the right-most projective module corresponds to the
      singly critical class with $j\;\textrm{mod}\;3=0$.}
    \label{fig:proj-thirdroot}
    \end{figure}

The diagrams allow us to conclude that the indecomposable projectives
$\PrJTL{j}{P}$ (excepting $\PrJTL{2}{1}$ for $N=6$) are not self-dual
because their socles contain the direct sum
$\IrJTL{L}{1}\oplus\IrJTL{L}{1}$ when $L\;\textrm{mod}\;3=1$ or
$2\IrJTL{L}{1}\oplus2\IrJTL{L-1}{1}$ when $L\;\textrm{mod}\;3=2$ or
the sum $2\IrJTL{L}{\q^2}\oplus2\IrJTL{L}{\q^{-2}}$ when
$L\;\textrm{mod}\;3=0$. Therefore, the projective modules are not injective hulls
and can be  embedded into larger  (of course decomposable) modules.

We will see below that the spin-chain representation is a self-dual JTL module and
is decomposed onto a special class of so-called tilting modules. On the other hand
 it is a faithful representation and thus all projective covers should appear in the spin-chain,
though not as direct summands. It turns out that the projective covers are embedded into  a direct
sum of tilting (self-dual) modules.

\newcommand{\repgll}{\pi}

\subsection{Self-duality of the spin chain representation $\chVv$}\label{sec:self dual}

In this section, we show that the faithful representation
of $\rJTL{N}(1)$ on the periodic $s\ell(2|1)$ spin chain $\chVv$ is in addition self-dual.
 To show this we use a non-degenerate bilinear form $(\cdot,\cdot)$ on
$\chVv\times\chVv$, which can be  given explicitly in the Fock space
realization~\cite{ReadSaleur07-1}.
Recall that the representation and even a large family of representations $\repgl$ of
$\rJTL{N}(m)$ are explicitly defined in App.~\ref{sec:TL-faith} using this Fock realization. Our case corresponds to $\pi_{1,1}$ which we will denote simply by $\pi$.
The generators $e_j$ of
$\rJTL{N}(1)$ are self-adjoint with respect to the bilinear form,
i.e. $\repgll(e_j)^{\dagger}=\repgll(e_j)$. The
adjoint of the translation operator $\repgll(u^2)$ is the inverse
$\repgll(u^{N-2})$. Together with non-degeneracy of the bilinear form,
this means that the representation $\repgll$ is isomorphic to the dual
one on the space $\chVv^*$ of linear functionals. Indeed an
isomorphisms $\psi$ between $\rJTL{N}(1)$-modules $\chVv$ and $\chVv^*$
is given by
\begin{equation}
\psi: \chVv\to \chVv^*, \qquad \psi(v)(\cdot) = (v,\cdot),
\end{equation}
where the $\rJTL{N}(1)$-action on $\chVv^*$ is defined as
\begin{equation*}
A v^*(\cdot)
= v^*(\repgll(A)^{\dagger}\,\cdot),\quad \text{with}\quad A\in\rJTL{N}(1),
\; v^*\in\chVv^*,
\end{equation*}
 and
$\cdot$ stands for an argument. The non-degeneracy of the bilinear
 form implies that the kernel of $\psi$ is zero. The statement on the self-duality is obviusly true in the general case of $\repgl$ representations.

Self-duality of the module $\chVv$ implies that the subquotient
structure (with simple subquotients) is not affected by reversing all the arrows representing the
$\rJTL{N}$-action. On the other hand, the faithfullness of the
representation of $\rJTL{N}(1)$ implies that all the projective covers of simples
should be present in the spin-chain decomposition. As we saw above, the projective $\rJTL{N}(1)$-modules (those which are not simple) are not self-dual and
therefore they are not injective modules and can in principle be non-direct-summand
submodules in some `bigger' modules. These bigger and self-dual $\rJTL{N}$-modules
indeed exist: they turn up to be  \textsl{tilting} modules, and we show below that the projectives can be
embedded into a direct sum of tilting modules.
We can thus say that, for our needs at least,  the tilting
modules are more
fundamental objects than projective modules in the sense that tilting modules
are the  building blocks (direct summands)
of the spin-chains.

\subsection{Tilting modules over $\rJTL{N}(1)$}\label{sec:tilt}
Recall that a tilting module $\mathcal{T}$ is a module (over a cellular algebra) with a
filtration by cell modules $\mathcal{W}$ and with the dual filtration by the duals to the  cell
modules -- the co-cell (or costandard)
modules $\mathcal{W}^*$. The tilting modules are thus
self-dual modules by this definition, i.e. $\mathcal{T}^*\cong\mathcal{T}$.

 For any weight
$(j,P)\in\setW$, we define a tilting $\JTL{N}$-module $\TilJTL{j}{P}$ \textit{generated}
from the cell module $\StJTL{j}{P}$ as \textsl{the indecomposable} tilting module
containing this cell module as a submodule. We will see that if such a tilting module exists this property indeed
fixes it uniquely, see Thm. \ref{thm:tilt-mod-key} below. In most cases, our results on the structure of the tilting modules generated from a cell module can be expressed by very simple rules: to construct $\TilJTL{j}{P}$ we take the cell module $\StJTL{j}{P}$ and each simple subquotient $\IrrJTL{j'}{P'}$ (in the diagram for $\StJTL{j}{P}$) replace by its co-cell module $\StJTL{j'}{P'}^*$ such that different co-cells are glued following the pattern for the diagram for the original cell module $\StJTL{j}{P}$.
The result is then obviously a module with a filtration by the duals to cell modules. We then check the dual filtration and the module is what we call the tilting module $\TilJTL{j}{P}$.

It will be shown below that tilting modules $\TilJTL{j}{P}$ indeed exist and exhaust all possible indecomposable tilting modules (of course up to an isomorphism). We will now describe our results on the subquotient structure.
The diagrams in Fig.~\ref{fig:tilt-thirdroot} describe  cell and
co-cell filtrations of the `biggest' tilting modules $\TilJTL{1}{1}$
and $\TilJTL{2}{1}$. The structure of $\TilJTL{1}{1}$ is given just by the substitution $2\to 1$.  The
two tilting modules correspond to the cases when $L\;\textrm{mod}\;3$ equals
$2$ or $0$:
in the first case $L_i=L-2 + i$ and $P_i=1$, in the second -- $L_i=L$, $P_1=\q^2$, and $P_2=\q^{-2}$.
The tilting module $\TilJTL{2}{1}$ in the
case $L\;\textrm{mod}\;3=1$ is obtained by identifying the two nodes
$\StJTL{L_1}{P_1}$ and $\StJTL{L_2}{P_2}$ with $\StJTL{L}{1}$ (with their
two arrows) at the top on the left and similarly for duals at the bottom on the right.

 An important consistency check for the existence of these tilting modules (in addition to their
self-duality) is that the projective modules $\PrJTL{j}{P}$ defined
above in Fig.~\ref{fig:proj-thirdroot} cover any cell-subquotient in
the tiltings, where the kernel of the projection is isomorphic to
$\StJTL{1}{1}$. The tiltings $\TilJTL{1}{1}$ and $\TilJTL{2}{1}$
themselves can be alternatively obtained as a quotient of the
projective module $\PrJTL{L_1}{P_1}\oplus\PrJTL{L_2}{P_2}$.

 \begin{figure}\centering
 \begin{equation*}
   \xymatrix@R=24pt@C=18pt
   {
     {\bullet}\ar@{}|{\substack{\StJTL{L_1}{P_1}}\kern-7pt}[]+<-45pt,15pt>\ar[d]\ar[drr]
     &&{\bullet}\ar@{}|{\substack{\StJTL{L_2}{P_2}}\kern-7pt}[]+<30pt,15pt>\ar[d]\ar[dll]\\
     {\dots}\ar[d]\ar[drr]
     &&{\dots}\ar[d]\ar[dll]\\
     {\bullet}\ar@{}|{\substack{\StJTL{6}{\q^2}}\kern-7pt}[]+<-48pt,15pt>\ar[d]\ar[drr]
     &&{\bullet}\ar@{}|{\substack{\StJTL{6}{\q^{-2}}}\kern-7pt}[]+<40pt,15pt>\ar[d]\ar[dll]\\
     {\bullet}\ar@{}|{\substack{\StJTL{4}{1}}\kern-7pt}[]+<-45pt,15pt>\ar[d]\ar[drr]
     &&{\bullet}\ar@{}|{\substack{\StJTL{5}{1}}\kern-7pt}[]+<30pt,15pt>\ar[d]\ar[dll]\\
     {\bullet}\ar@{}|{\substack{\StJTL{3}{\q^2}}\kern-7pt}[]+<-48pt,15pt>\ar[dr]
     &&{\bullet}\ar@{}|{\substack{\StJTL{3}{\q^{-2}}}\kern-7pt}[]+<40pt,15pt>\ar[dl]\\
     &{\bullet}\ar@{}|{\substack{\StJTL{2}{1}}\kern-7pt}[]+<-40pt,5pt> &
     }
   \xymatrix@R=24pt@C=0pt
   {
     &&\\
     &&\\
     &\quad\cong&\\
     &&\\
     &&\\
     &&
     }
   \xymatrix@R=24pt@C=18pt
   {{}&\bullet\ar@{}|{\substack{\StJTL{2}{1}^*}\kern-7pt}[]+<15pt,15pt>\ar[dr]\ar[dl]&&\\
     {\bullet}\ar@{}|{\substack{\StJTL{3}{\q^2}^*}\kern-7pt}[]+<-38pt,15pt>\ar[d]\ar[drr]
     &&{\bullet}\ar@{}|{\substack{\StJTL{3}{\q^{-2}}^*}\kern-7pt}[]+<30pt,15pt>\ar[d]\ar[dll]\\
     {\bullet}\ar@{}|{\substack{\StJTL{4}{1}^*}\kern-7pt}[]+<-35pt,15pt>\ar[d]\ar[drr]
     &&{\bullet}\ar@{}|{\substack{\StJTL{5}{1}^*}\kern-7pt}[]+<20pt,15pt>\ar[d]\ar[dll]\\
     {\bullet}\ar@{}|{\substack{\StJTL{6}{\q^2}^*}\kern-7pt}[]+<-38pt,15pt>\ar[d]\ar[drr]
     &&{\bullet}\ar@{}|{\substack{\StJTL{6}{\q^{-2}}^*}\kern-7pt}[]+<30pt,15pt>\ar[d]\ar[dll]\\
     {\dots}\ar[d]\ar[drr]
     &&{\dots}\ar[d]\ar[dll]\\
     {\bullet}\ar@{}|{\substack{\StJTL{L_1}{P_1}^*}\kern-7pt}[]+<-48pt,15pt>
     &&{\bullet}\ar@{}|{\substack{\StJTL{L_2}{P_2}^*}\kern-7pt}[]+<30pt,15pt>
     }
 \end{equation*}
      \caption{The structure of the tilting $\rJTL{N}$-module
      $\TilJTL{2}{1}$ with the co-cell modules filtration on the
      right. The two isomorphic modules correspond to $L_i=L-2 + i$ and
      $P_i=1$ if $L\;\textrm{mod}\;3=2$, and $L_i=L$, $P_1=\q^2$,
      and $P_2=\q^{-2}$ when $L\;\textrm{mod}\;3=0$; the tilting
      module $\TilJTL{2}{1}$ in the case $L\;\textrm{mod}\;3=1$ is
      obtained by identifying the two nodes
      $\StJTL{L_1}{P_1}=\StJTL{L_2}{P_2}$ with $\StJTL{L}{1}$ and the two
      arrows at the top on the left and their duals at the bottom on
      the right. The structure of $\TilJTL{1}{1}$ is given just by the substitution $2\to 1$. }
    \label{fig:tilt-thirdroot}
    \end{figure}

A tilting module $\TilJTL{j}{P}$ with $j>2$ and appropriate $P$ can be extracted also from
Fig.~\ref{fig:tilt-thirdroot}. It is the submodule generated from the
subquotient $\StJTL{j}{P}^*$ on the right diagram or, dually, it
is the corresponding quotient
containing all $\StJTL{j'}{P'}$, with $[j',P']\pomore[j,P]$, on
the left diagram.
The tilting modules from
the singly critical class with $j\;\textrm{mod}\;3=0$ and $P=1$ are
constructed in a similar way and they are of chain type.

A peculiarity happens with the tilting $\rJTL{N}$-module containing the cell
submodule $\bStJTL{0}{\q^2}$. If we were to proceed with the construction
used so far, we would get a non self-dual module with the subquotient structure
$\bStJTL{0}{\q^2}^*\to\StJTL{2}{1}^*$,  so it does not
work in this case (meanwhile, the construction would still  work for the former algebra
$\JTL{N}(1)$). However, there turns out to be  a universal construction of tilting modules~\cite{AR,Ringel} that requires knowledge of extension groups associated with indecomposable modules. This construction suggests  to begin with a glueing (an extension)  corresponding to an exact non-split sequence
\begin{equation}
0\to \bStJTL{0}{\q^2} \to M \to \StJTL{2}{1} \to 0,
\end{equation}
{\it i.e.}, the first step of the universal construction produces a module that has a submodule isomorphic to $\bStJTL{0}{\q^2}$ and as a  top  $\StJTL{2}{1}$ with multiplicity given by dimension of $\ExtJTL(\StJTL{2}{1},\bStJTL{0}{\q^2})$. We know from the structure of the projective cover for $\StJTL{2}{1}$ (see above) that this dimension equals $1$. In the second step, we should compute  the extension group $\ExtJTL(\StJTL{3}{\q^{\pm2}},M)$. We checked on the first few values of $N$ that a $\rJTL{N}$-module having $M$ as a submodule and with a cell filtration containing sections isomorphic to $\StJTL{3}{\q^{\pm2}}$ can not be self-dual, and actually can not have a filtration by duals to cell modules. For example, it is quite easy to see that at $N=6$ a $\rJTL{N}$-module with the subquotient structure $(n_+\StJTL{3}{\q^{2}}\oplus n_-\StJTL{3}{\q^{-2}})\to M$ is self-dual if and only if  $n_{\pm}=0$ (see also the example below). Hence, we can conjecture that the extension groups $\ExtJTL(\StJTL{3}{\q^{\pm2}},M)$ are trivial. It turns out that for $N=8$, the only self dual module having cell and co-cell filtrations is the one having the subquotient structure $\StJTL{4}{1}\to M$ (see examples for $N=8$ below). We call this module $M'$. Then, we repeat our analysis for $N=10$ and we get a module $M'' = \StJTL{5}{1}\to M'$, etc.
So, the only way we see that the tilting $\rJTL{N}$-module $\TilJTL{0}{\q^2}$
containing $\bStJTL{0}{\q^2}$ as a submodule might have  cell as well as co-cell filtrations is the one given on Fig.~\ref{fig:tilt-JTL-vac}.
 \begin{figure}\centering
\begin{equation*}
   \xymatrix@R=24pt@C=0pt
   {
     &&\\
     &&\\
     &\TilJTL{0}{\q^2} \;=\; &\\
     &&\\
     &&\\
     &&
     }
   \xymatrix@R=24pt@C=18pt
   {
     {\bullet}\ar@{}|{\substack{\StJTL{L_1}{P_1}}\kern-7pt}[]+<-45pt,15pt>\ar[d]\ar[drr]
     &&{\bullet}\ar@{}|{\substack{\StJTL{L_2}{P_2}}\kern-7pt}[]+<30pt,15pt>\ar[d]\ar[dll]\\
     {\dots}\ar[d]\ar[drr]
     &&{\dots}\ar[d]\ar[dll]\\
     {\bullet}\ar@{}|{\substack{\StJTL{6}{\q^2}}\kern-7pt}[]+<-48pt,15pt>\ar[d]\ar[drr]
     &&{\bullet}\ar@{}|{\substack{\StJTL{6}{\q^{-2}}}\kern-7pt}[]+<40pt,15pt>\ar[d]\ar[dll]\\
     {\bullet}\ar@{}|{\substack{\StJTL{4}{1}}\kern-7pt}[]+<-45pt,15pt>\ar[dr]
     &&{\bullet}\ar@{}|{\substack{\StJTL{5}{1}}\kern-7pt}[]+<30pt,15pt>\ar[dl]\\
     &{\bullet}\ar@{}|{\substack{\StJTL{2}{1}}\kern-7pt}[]+<-40pt,5pt>
   \ar[d]&\\
     &{\bullet}\ar@{}|{\substack{\bStJTL{0}{\q^2}}\kern-7pt}[]+<-40pt,5pt> &
     }
   \xymatrix@R=24pt@C=0pt
   {
     &&\\
     &&\\
     &\quad\cong&\\
     &&\\
     &&\\
     &&
     }
   \xymatrix@R=24pt@C=18pt
   {{}&\bullet\ar@{}|{\substack{\bStJTL{0}{\q^2}^*}\kern-7pt}[]+<15pt,15pt>\ar[d]&&\\
     {}&\bullet\ar@{}|{\substack{\StJTL{2}{1}^*}\kern-7pt}[]+<25pt,15pt>\ar[dr]\ar[dl]&&\\
     {\bullet}\ar@{}|{\substack{\StJTL{4}{1}^*}\kern-7pt}[]+<-35pt,15pt>\ar[d]\ar[drr]
     &&{\bullet}\ar@{}|{\substack{\StJTL{5}{1}^*}\kern-7pt}[]+<20pt,15pt>\ar[d]\ar[dll]\\
     {\bullet}\ar@{}|{\substack{\StJTL{6}{\q^2}^*}\kern-7pt}[]+<-38pt,15pt>\ar[d]\ar[drr]
     &&{\bullet}\ar@{}|{\substack{\StJTL{6}{\q^{-2}}^*}\kern-7pt}[]+<30pt,15pt>\ar[d]\ar[dll]\\
     {\dots}\ar[d]\ar[drr]
     &&{\dots}\ar[d]\ar[dll]\\
     {\bullet}\ar@{}|{\substack{\StJTL{L_1}{P_1}^*}\kern-7pt}[]+<-48pt,15pt>
     &&{\bullet}\ar@{}|{\substack{\StJTL{L_2}{P_2}^*}\kern-7pt}[]+<30pt,15pt>
     }
\end{equation*}
      \caption{The structure of the tilting $\rJTL{N}$-module
      $\TilJTL{0}{\q^2}$ with its filtration by cell modules on the left side and the co-cell modules filtration  on the
      right.}
    \label{fig:tilt-JTL-vac}
    \end{figure}
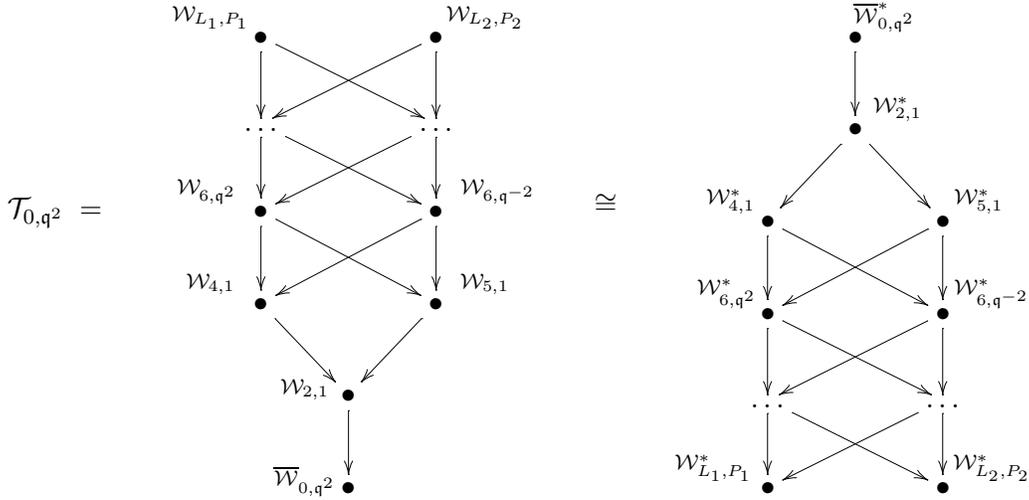
We checked for $N\leq 18$ that a module with this cell filtration is self-dual and we believe that its existence can be proved by  taking a quotient of the
projective module
$\PrJTL{L_1}{P_1}\oplus\PrJTL{2}{1}\oplus\PrJTL{L_2}{P_2}$ (see
definition of $L_i$, $P_i$ above).
Note also that this module contains the projective cover $\PrJTL{2}{1}$,
which is an important module for our spin-chain.

\medskip

Finally, we claim that the tilting modules  just described  \textit{exhaust} all indecomposable tilting
modules over $\rJTL{N}(1)$.
To show this we  use the important result about the $\rJTL{N}(1)$ algebra that it is a \textsl{quasi-hereditary} algebra. We first recall the corresponding definition~\cite{DlabR, Donkin, tilt-book}.

\medskip
\textbf{Definition}
\textit{Let $A$ be a finite dimensional associative algebra over $\mathbb{C}$, $\setW$ an indexing set for the isomorphism classes of simple $A$-modules with correspondence $\lambda \to \mathcal{X}_{\lambda}$, and $\leq$ a partial order on $\setW$. We say that $(A,\leq)$ or simply $A$ is a quasi-hereditary algebra if and only if for all $\lambda \in \setW$ there exists a left $A$-module, $\mathcal{W}_{\lambda}$, called a standard module such that}
\begin{itemize}
\item
\textit{ there is a surjection $\mathcal{W}_{\lambda} \to \mathcal{X}_{\lambda}$ and the composition factors (subquotients), $\mathcal{X}_{\mu}$, of the kernel satisfy $\mu < \lambda$.}

\item
\textit{
the indecomposable projective cover $\mathcal{P}_{\lambda}$ of $\mathcal{X}_{\lambda}$ maps onto $\mathcal{W}_{\lambda}$ via a map $\psi : \mathcal{P}_{\lambda} \to \mathcal{W}_{\lambda}$ whose kernel is filtered by modules $\mathcal{W}_{\lambda}$ with $\mu > \lambda$.}
\end{itemize}

In our setting, the cell modules $\StJTL{j}{z^2}$ for $\rJTL{N}(1)$ are the standard modules with the weight $\lambda=(j,z^2)$.
Using our results on the projective JTL modules described above and choosing for  the quasi-hereditarity partial order $\leq$  the one opposite to the cellular partial order $\poless$, {\it i.e.}, if  $\leq=\pomore$, we readily see that $\rJTL{N}(1)$ is a quasi-hereditary algebra (it is actually quasi-hereditary for any $\q\ne i$). Then, as for any quasi-hereditary algebra~\cite{AR,Ringel}, we have a  key theorem on tiliting JTL modules\footnote{We note that a similar theorem appears for reductive algebraic groups over a finite field~\cite{Math}.}:

 \begin{Thm}\label{thm:tilt-mod-key}\mbox{}
 \begin{itemize}
 \item
 For any weight $[j,P]\in\setW$, there is a unique indecomposable tilting module $\TilJTL{j}{P}$ such that
 $\left[\TilJTL{j}{P}:\StJTL{j}{P}\right]=1$ and $\StJTL{j}{P}$ is a submodule in $\TilJTL{j}{P}$, and
 \begin{equation}\label{eq:tilt-criteria}
\left[\TilJTL{j}{P}:\StJTL{j'}{P'}\right]\ne 0
\qquad \text{only if}\quad[j',P']\pomore[j,P];
 \end{equation}
 \item Any indecomposable tilting module is isomorphic to some $\TilJTL{j}{P}$.
 \end{itemize}
 \end{Thm}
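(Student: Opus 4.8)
The plan is to deduce both assertions from the general theory of tilting modules over quasi-hereditary algebras~\cite{AR,Ringel}, which is now available since we have just checked that $\bigl(\rJTL{N}(1),\leq\bigr)$ with $\leq\,=\,\pomore$ is quasi-hereditary, its standard modules being the cell modules $\StJTL{j}{P}$ and its costandard modules the duals $\bigl(\StJTL{j}{P}\bigr)^*$. The homological input I would use consists of three facts valid for any quasi-hereditary algebra: (i) $\ExtJTL\bigl(\StJTL{j}{P},\bigl(\StJTL{j'}{P'}\bigr)^*\bigr)=0$ for all weights; (ii) a module lies in the category $\filt(\mathcal{W})$ of cell-filtered modules iff its $\mathrm{Ext}^1$ against every costandard vanishes, and dually a module lies in the category $\filt(\mathcal{W}^*)$ of costandard-filtered modules iff $\ExtJTL\bigl(\StJTL{j}{P},-\bigr)$ kills it for every $(j,P)$; and (iii) the tilting modules are exactly those in $\filt(\mathcal{W})\cap\filt(\mathcal{W}^*)$. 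Together with the Krull--Schmidt property these reduce the theorem to Ringel's inductive construction.

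For existence I would run this construction starting from $M_0=\StJTL{j}{P}$, which by (i) already lies in $\filt(\mathcal{W})$. At each step one forms the universal (non-split) extension $0\to M_k\to M_{k+1}\to\bigoplus_{(j',P')}\bigl(\StJTL{j'}{P'}\bigr)^{\oplus d_{j',P'}}\to 0$, with $d_{j',P'}=\dim\ExtJTL\bigl(\StJTL{j'}{P'},M_k\bigr)$, designed to annihilate all these extension classes; the resulting $M_{k+1}$ stays in $\filt(\mathcal{W})$ because it is an extension of cell-filtered modules. The standard homological estimate $\ExtJTL\bigl(\StJTL{j'}{P'},\StJTL{j}{P}\bigr)\ne 0\Rightarrow [j',P']\succ[j,P]$ (strict, in the cellular order) shows that only cellularly larger standards are ever adjoined; since $\setW$ is finite and these $\mathrm{Ext}$-dimensions are finite, the process stops at a module $\TilJTL{j}{P}$ which then lies in $\filt(\mathcal{W})\cap\filt(\mathcal{W}^*)$ and is therefore tilting. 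By construction $\StJTL{j}{P}$ sits at the bottom of its cell filtration as a submodule with multiplicity one, and every other cell section $\StJTL{j'}{P'}$ obeys $[j',P']\pomore[j,P]$, which is exactly~\eqref{eq:tilt-criteria}. Indecomposability of $\TilJTL{j}{P}$ is pinned down by this multiplicity-one bottom section, which must lie in a single indecomposable summand.

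For uniqueness and the exhaustion claim I would invoke Krull--Schmidt together with the normalisation $\left[\TilJTL{j}{P}:\StJTL{j}{P}\right]=1$. Any indecomposable tilting module $T$ has a cell filtration, hence a unique cellularly-minimal weight $(j,P)$ whose cell module embeds as a submodule; comparing $T$ with the module $\TilJTL{j}{P}$ just built (their cell multiplicities are forced to coincide by the homological orthogonality (i)--(iii) and the reciprocity~\eqref{cell-cont-proj}) and using indecomposability yields $T\cong\TilJTL{j}{P}$. Thus the $\TilJTL{j}{P}$ are pairwise non-isomorphic, each determined uniquely by the standard submodule it contains, and they exhaust the indecomposable tiltings.

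The main obstacle---and the only place needing work beyond quoting the general theory---is the concrete control of the extension groups $\ExtJTL\bigl(\StJTL{j'}{P'},M_k\bigr)$ at each stage, i.e. checking that the universal construction closes up precisely onto the modules drawn in Figs.~\ref{fig:tilt-thirdroot} and~\ref{fig:tilt-JTL-vac} and introduces no spurious cell sections. For the generic weights this is controlled by the braid- and chain-type subquotient structure of the cell modules in Fig.~\ref{fig:cell-thirdroot} and the reciprocity with projectives of Sec.~\ref{sec:proj}. The genuinely delicate case is $\TilJTL{0}{\q^2}$, where the naive recipe ``replace each simple subquotient by its costandard'' fails and one must instead verify that no self-dual cell- and cocell-filtered module carries extra $\StJTL{3}{\q^{\pm2}}$ sections; this amounts to the vanishing of $\ExtJTL\bigl(\StJTL{3}{\q^{\pm2}},M\bigr)$ for the partially built module $M$, which the authors confirm directly for $N\le 18$ and which I would expect to follow in general from the embedding sequence~\eqref{cell-emb} of cell modules.
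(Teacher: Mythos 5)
Your proposal is correct and follows essentially the same route as the paper: the authors likewise verify that $\rJTL{N}(1)$ with the order $\leq\,=\,\pomore$ is quasi-hereditary (using the cell filtrations of the projective covers) and then invoke the Auslander--Reiten/Ringel theory of tilting modules, leaving the universal-extension construction and the Krull--Schmidt uniqueness argument implicit where you spell them out. The only point the paper treats separately (in a footnote and by inspection of Figs.~\ref{fig:tilt-thirdroot} and~\ref{fig:tilt-JTL-vac}) is that its tiltings carry the \emph{opposite} costandard filtration, a condition slightly stronger than what the general theory delivers, but this does not affect the validity of your argument for the theorem as stated.
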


This theorem guarantees existence and uniqueness of indecomposable tilting $\rJTL{N}$-modules generated from a cell module. It also gives a good criteria on whether a given  cell module appears in the cell filtration of a tilting module: if $[j',P']\prec[j,P]$ then the multiplicity $\left[\TilJTL{j}{P}:\StJTL{j'}{P'}\right]=0$. Using also the subquotient structure of projective covers for cell modules, it is easy to see that the numbers $\left[\TilJTL{j}{P}:\StJTL{j'}{P'}\right]$
 are less than $2$ ($0$ or $1$) if $[j',P']\succ[j,P]$. Further, the structure for $\TilJTL{j}{P}$
modules we proposed above does give a filtration by cell and duals to the cell modules, as indicated in Fig.~\ref{fig:tilt-thirdroot} and Fig.~\ref{fig:tilt-JTL-vac}.
To prove that these modules for $j>0$ are indeed indecomposable we observe, using the universal construction as we did for $\TilJTL{0}{\q^2}$ above, that if one of the multiplicities $\left[\TilJTL{j}{P}:\StJTL{j'}{P'}\right]$ is $0$, for $[j',P']\succ[j,P]$, the corresponding module would not be able to have a filtration by co-cell modules. We thus have  the criteria \eqref{eq:tilt-criteria} with the stronger condition ``if and only if''. By  uniqueness we finally obtain
that the modules with the subquotient structure proposed  are indecomposable\footnote{We have to note that our definition for tilting modules is slightly stronger than the  one for quasi-hereditary algebras~\cite{AR,Ringel}. Tiltings are usually not required to have necessary
   {\sl dual} or opposite filtration by co-standrard modules, just any filtration by co-standards.  It turned out that our tiltings do have a filtration by co-standards opposite to the one by the standard or cell modules.}.

\subsection{Centralizer and formal decomposition}
\label{secsl21resuts}

In our previous works~\cite{GRS1,GRS2,GRS3} on the $\gl(1|1)$ spin chain, the analysis of the spin chain and the scaling limit properties was  based on the structure of the centralizer of the $\rJTL{N}(0)$ algebra, dubbed $\LQGodd$. The centralizer was found in a rather straightforward way as a proper subalgebra of the open-case centralizer -- $\LQG$ at $\q=i$. The subalgebra $\LQGodd$ was identified with appropriate polynomials in the finite number of generators of~$\LQG$. Unfortunately, the centralizer $\mathcal{A}_{2|1}(N)$   for the open $s\ell(2|1)$ spin-chains is not explicitly described in terms of generators and defining relations: it is not $\LQG$ at $\q=e^{i\pi/3}$ but only a Morita equivalent algebra. At  best, we have only its cellular basis description~\cite{ReadSaleur07-1} and it is complicated  to identify in a straightforward way a subalgebra in $\mathcal{A}_{2|1}(N)$ that commutes only with elements from $\rJTL{N}(1)$. The lack of suitable description of the open-case centralizer thus makes the centralizer approach used in the $\gl(1|1)$ case less promising in the present case of faithful representations of JTL algebras.

Although the faithfulness makes, at first glance, the analysis of the periodic spin chains much harder than in the non-faithful case of $\gl(1|1)$ spin chains, it also provides  many advantages. First of all, we know what kind of `complexity' of JTL modules we might expect in the decompositions. They should be as complicated as the projective covers $\PrJTL{j}{z^2}$ described above. So, the structure of the periodic $s\ell(2|1)$ spin chain is apparently much more involved than the one described in~\cite{GRS2}. Further, we have seen that projective covers are still not  good candidates for direct summands in the spin chain, since they are non-self dual. The best candidates are the tilting JTL modules which contain the projective covers as submodules. Once again, each indecomposable tilting module is self-dual and `smaller' than a projective cover but we can always find two non-isomorphic tilting modules that contain the projective cover in their direct sum, as  follows from their structure. It is thus quite reasonable to expect that the periodic spin chain, as a {\sl self-dual} and {\sl faithful} $\rJTL{N}$-module, is decomposed onto tilting modules.

 We should emphasize that the important missing step in this new analysis of periodic spin chains based on the theory of tilting modules is a proof\footnote{We believe that one could repeat steps in the Martin's paper~\cite{MartinFaithful2} for the mirror spin-chains which are representations of the blob algebra (also some quotient of affine TL). The crucial technical problem is that one should find a proper embedding of $\JTL{N}$ into $\JTL{N+1}$ for any $N$.}
 that our spin-chain is a full tilting module. To show this point one should prove that the periodic $s\ell(2|1)$ spin chain indeed has a filtration by cell modules: using the fact that all $e_j$'s are self-adjoint operators, it would be then straightforward  to show that the full spin chain is a full tilting module. There are strong arguments suggesting that the spin chain has a filtration by cell modules. In particular, the spin chain is deeply related with a generic loop model, which is defined for arbitrary values of  $\q$~\cite{ReadSaleur01}. In this model, non contractible loops   get the weight $m$ if they wind around the small (space) cycle, and a modified weight $\q'+\q'^{-1}$ if they wind around the long (imaginary time) cycle (the particular case we are interested in corresponds to $m=1$ and $\q'+\q'^{-1}=3$).  For generic values of $\q$, it is possible to show that the generating function of levels expands only on traces over cell modules, so the corresponding `spin chain' has a filtration by cell modules indeed. However, this argument cannot be made rigorous because, for $\q$ generic, there is in fact no such spin chain, and the generating function of levels is only a formal object. Hence the existence of a filtration by cell modules cannot be proven by `analytic continuation' (see the appendix of~\cite{[ABNKS]}, however).

 \textit{Assuming this crucial assumption about cell-modules filtration however, we see  that  the spin-chain itself should be a
full tilting module}, and the decomposition, up to multiplicities, follows from the foregoing discussion of these modules.

The next step in our analysis is to obtain the multiplicity of each tiliting module. Since we know the structure of all tilting modules in terms of standard modules, and taking into account Thm.~\ref{thm:tilt-mod-key},
 we could compute each of these multiplicities iteratively if  we knew the numbers $[\Hilb_N:\AStTL{j}{z^2}]$ of appearance of each JTL
 standard module in the full spin-chain. For example, the multiplicity of $\TilJTL{0}{\q^2}$ equals $[\Hilb_N:\bAStTL{0}{\q^2}]$ because
 the subquotient isomorphic to $\bAStTL{0}{\q^2}$ is contained only in $\TilJTL{0}{\q^2}$. Then  the multiplicity of $\TilJTL{2}{1}$
 equals $[\Hilb_N:\AStTL{2}{1}]-[\Hilb_N:\bAStTL{0}{\q^2}]$ because  subquotients isomorphic to $\AStTL{2}{1}$ were already counted in the $\TilJTL{0}{\q^2}$ modules $[\Hilb_N:\bAStTL{0}{\q^2}]$ times, {\it etc.}

 The numbers $[\Hilb_N:\AStTL{j}{z^2}]$ can in fact be computed assuming the
 possibility  of an analytical continuation from semi-simple cases where these numbers are known~\cite{ReadSaleur07-1}.
We have an infinite family of super-symmetric spin chains defined in the same way as in the $s\ell(2|1)$ case but with the $s\ell(n+m|n)$ symmetry such that $m>2$ and $n\geq1$, see App.~\ref{sec:TL-faith}. The nearest neighbour coupling or the projection onto a singlet for these spin chains also defines a faithful representation of the $\rJTL{N}(m)$ algebra with $\q+\q^{-1}=m>2$.
The decomposition of the spin chains for these cases can be written as\footnote{We note that it is known~\cite{Jones} that the JTL algebras are semisimple when $\q+\q^{-1}>2$, though the Hamiltonian is not critical for these values of $\q$.}
   \begin{eqnarray}
       {\Hilb_N}=\bAStTL{0}{\q^2}\oplus\bigoplus_{j>0}~\hat{D}'_{j,0}~\AStTL{j}{1}\oplus\bigoplus_{\substack{j>0, M>1\\M|j}}~\bigoplus_{\substack{0<p<M\\
       p\wedge M=1}}\hat{D}'_{j,\frac{\pi p}{M}}~\AStTL{j}{e^{2i\pi p/M}}
	   \end{eqnarray}
 where we have used the notation $\hat{D}'_{j,K}$, with $K=\pi p/M$, for  the dimensions of the irreducible representations of the centralizer $\hat{\mathcal{A}}_{n+m|n}(2L)$. This centralizer is discussed in~\cite{ReadSaleur07-1} for the semi-simple cases. We will discuss it briefly below, and for now just
  recall the dimensions
\begin{equation}\label{hatD}
\hat{D}_{j,K}'={1\over j}\sum_{r=0}^{j-1}e^{2iKr}w(j,j\wedge r),
\end{equation}
where $j\wedge r$ is the greatest common divisor, $j\wedge 0=j$ and
\begin{equation}
w(j,d)=\left(\q^{2d}+\q^{-2d}\right)\delta_{j/d\equiv 0}+\left(\q'^{2d}+\q'^{-2d}\right)\delta_{j/d\equiv 1},
\end{equation}
with congruences being taken modulo $2$ and we set $\q'+\q'^{-1}=m+2n$.

Let us give a few examples of these multiplicities. For the case $2L=6$,
the decomposition of the semisimple  Hilbert space reads
\begin{multline}\label{Hilb-dec-form}
\Hilb_6 = \bAStTL{0}{\q^2} \oplus \hat{D}'_{1,0}\AStTL{1}{1} \oplus \hat{D}'_{2,0}\AStTL{2}{1} \oplus \hat{D}'_{3,0}\AStTL{3}{1} \oplus \hat{D}'_{2,\frac{\pi}{2}} \AStTL{2}{-1} \\\oplus \hat{D}'_{3,\frac{\pi}{3}}\AStTL{3}{e^{2i\pi/3}} \oplus \hat{D}'_{3,\frac{2\pi}{3}}\AStTL{3}{e^{4i\pi/3}}.
\end{multline}

We can actually think  about the expression~\eqref{hatD} formally as a polynomial in the complex variable~$\q$.
We note then that the multiplicities are well defined in the critical cases when $\q+\q^{-1}$ equals $0$, $1$ or~$2$, {\it i.e.}, the polynomials $\hat{D}_{j,K}'$ give positive integer numbers for $\q=i$, $e^{i\pi/3}$ and $1$ as well. Of course, we will not have a direct sum decomposition as in~\eqref{Hilb-dec-form} at these critical values of $\q$. Some of the direct summands become reducible but indecomposable and they are glued with other direct summands in~\eqref{Hilb-dec-form} in order to make a self-dual module (recall that our  space of states $\Hilb_N$ is always a self-dual module). We should thus think about  the number $\hat{D}'_{j,K}$ as the number of appearance of (subquotients isomorphic to) $\AStTL{j}{e^{2i K}}$ in the full space of states $\Hilb_N$. In other words we assume the equality
\begin{equation}
[\Hilb_N:\AStTL{j}{e^{2i K}}] = \hat{D}'_{j,K}.
\end{equation}
It will be shown below (by a numerical analysis) that this assumption indeed gives  correct results on the number of higher-rank Jordan cells, in particular.

 So, in the $s\ell(n+1|n)$ case we get the following filtration by cell modules $\StJTL{j}{z^2}$, setting $l\equiv 2n+1$:
\begin{eqnarray}
\bigl[\Hilb_6\bigr]=\bAStTL{0}{\q^2}+(l^2-1)~\AStTL{1}{1}+\half(l^4-4l^2+1)~\AStTL{2}{1}+\ffrac{1}{3}(l^6-6l^4+11l^2-6)~\AStTL{3}{1}\nonumber\\
+\half(l^4-4l^2+3) \AStTL{2}{-1}+\ffrac{1}{3}(l^6-6l^4+8l^2) \left(\AStTL{3}{e^{2i\pi/3}}+\AStTL{3}{e^{4i\pi/3}}\right),
\end{eqnarray}
where by $\bigl[\Hilb_N\bigr]$ we denote a formal decomposition on cell modules modulo glueings (this is why we do note use here the direct sum `$\oplus$' symbol)\footnote{More formally, $\bigl[\Hilb_N\bigr]$ is the image of $\Hilb_N$ in the Grothendieck group $Gr_N$ generated by cell modules:  let $F_N$ be the free abelian group with generators the isomorphism classes of $\rJTL{N}$-modules filtered by cell modules, and let $[V]$ be the element of  $F_N$ corresponding to a module $V$, then $Gr_N$ is the quotient of $F_N$ by the relations $[W]= [U]+[V]$ for all short exact sequences $0 \to U\to W \to V\to0$
 of $\rJTL{N}$-modules $U$, $V$, $W$ having cell filtration. We also abused notations denoting $[\AStTL{j}{K}]$ simply by $\AStTL{j}{K}$.}.

For $l=3$ meanwhile (that is, the $s\ell(2|1)$ spin-chain) we get
\begin{eqnarray}\label{Hsix}
\bigl[\Hilb_6\bigr]=\bAStTL{0}{\q^2}+8~\AStTL{1}{1}+23~\AStTL{2}{1} +24 \AStTL{2}{-1} +112~\AStTL{3}{1}
+105 \left(\AStTL{3}{e^{2i\pi/3}}+\AStTL{3}{e^{4i\pi/3}}\right).
\end{eqnarray}

Having these filtrations (or formal decompositions) by cell modules, we already see that the tilting module $\TilJTL{0}{\q^2}$ appears always with the multiplicity one (it is the trivial $s\ell(2|1)$ module) which is quite important for our analysis -- it means that we have only one ground state. The $s\ell(2|1)$ content of the other multiplicities in front of $\TilJTL{j}{z^2}$ will be discussed below.

We now give examples for $N=6,8,14$ detailing the subquotient structure (with
simple subquotients) of
the tilting
modules $\TilJTL{j}{P}$.

\subsubsection{$N=6$ example}\label{sec:N6-ex}

We begin with a simple example on $N=6$ sites. We  give first the structure of indecomposable tiltings, using Fig.~\ref{fig:tilt-thirdroot} and Fig.~\ref{fig:cell-thirdroot}.
 \begin{align*}
   \xymatrix@R=16pt@C=0pt
   {
     &&\\
     &\TilJTL{2(1)}{1}:&\\
     &&
     }
   \xymatrix@R=14pt@C=14pt
   { &&\\
     {\bullet}\ar@{}|{\substack{\StJTL{3}{\q^2}}\kern-7pt}[]+<-20pt,25pt>\ar[dr]
     &&{\bullet}\ar@{}|{\substack{\StJTL{3}{\q^{-2}}}\kern-7pt}[]+<10pt,20pt>\ar[dl]\\
     &{\circ}\ar@{}|{\substack{\StJTL{2(1)}{1}}\kern-7pt}[]+<-50pt,5pt> &
     }\quad&
   \xymatrix@R=16pt@C=0pt
   {
     &&\\
     &=&\\
     &&
     }
        \xymatrix@R=16pt@C=14pt
   {
     {\bullet}\ar@{}|{\substack{\IrJTL{3}{\q^2}}\kern-7pt}[]+<-40pt,15pt>\ar[dr]
     &&{\bullet}\ar@{}|{\substack{\IrJTL{3}{\q^{-2}}}\kern-7pt}[]+<40pt,15pt>\ar[dl]\\
     &{\circ}\ar@{}|{\substack{\IrJTL{2(1)}{1}}\kern-7pt}[]+<-50pt,5pt>\ar[dr]\ar[dl] &\\
      {\circ}\ar@{}|{\substack{\IrJTL{3}{\q^2}}\kern-7pt}[]+<-40pt,15pt>
     &&{\circ}\ar@{}|{\substack{\IrJTL{3}{\q^{-2}}}\kern-7pt}[]+<40pt,15pt>
     }\\
        \xymatrix@R=16pt@C=0pt
   {
     &&\\
     &\TilJTL{0}{\q^2}:&\\
     &&
     }   \xymatrix@R=14pt@C=14pt
   { &&\\
     &{\bullet}\ar@{}|{\substack{\StJTL{2}{1}}\kern-7pt}[]+<-20pt,25pt>\ar[d]
     &\\
     &{\circ}\ar@{}|{\substack{\bAStTL{0}{\q^2}}\kern-7pt}[]+<-40pt,5pt> &
     }&
   \xymatrix@R=16pt@C=0pt
   {
     &&\\
     &=&\\
     &&
     }
        \xymatrix@R=20pt@C=18pt
   {&{\bullet}\ar@{}|{\substack{\IrJTL{2}{1}}\kern-7pt}[]+<-20pt,20pt>\ar[dl]\ar[d]\ar[dr]&\\
     {\bullet}\ar@{}|{\substack{\IrJTL{3}{\q^2}}\kern-7pt}[]+<-40pt,15pt>\ar[dr]
     &{\circ}\ar@{}|{\substack{\IrJTL{0}{\q^2}}\kern-7pt}[]+<-35pt,0pt>\ar[d]
     &{\bullet}\ar@{}|{\substack{\IrJTL{3}{\q^{-2}}}\kern-7pt}[]+<30pt,15pt>\ar[dl]\\
      &{\circ}\ar@{}|{\substack{\IrJTL{2}{1}}\kern-7pt}[]+<-35pt,0pt>&
           }
 \end{align*}
where different types of nodes denoted by $\bullet$ and
$\circ$ show the cell-filtration of the tilting modules: symbols are assigned to cell modules that appear on the left from equalities and  all simple subquotients (on the right) from a particular cell module are denoted by the  corresponding symbol.
Here, the notation $2(1)$ means that any of the two numbers $2$ or
$1$ can be  the first index in the subscript.
All other tiltings over $\rJTL{6}(1)$ are irreducible. We thus get the following dimensions
\begin{equation}\label{dim-tilt-N6}
\dim \TilJTL{0}{\q^2} = 11,\quad \dim \TilJTL{1}{1} = 17,\quad \dim \TilJTL{2}{1} = 8,\quad
\dim \TilJTL{2}{-1} = 6,\quad \dim \TilJTL{3}{\q^{\pm2}} = 1.
\end{equation}
 Using the diagrams we also obtain that the only nontrivial homomorphisms
between the tiltings are
\begin{gather}
\Hom(\TilJTL{2}{1},\TilJTL{0}{\q^2})=\oC,\quad
\Hom(\TilJTL{0}{\q^2},\TilJTL{2}{1})=\oC,\;\;
\Hom(\TilJTL{2}{1},\TilJTL{3}{\q^{\pm2}})=\oC,\quad
\Hom(\TilJTL{1}{1},\TilJTL{3}{\q^{\pm2}})=\oC,\label{hom-tilt-N6-1}\\
\Hom(\TilJTL{3}{\q^{\pm2}},\TilJTL{2}{1})=\oC,\quad
\Hom(\TilJTL{3}{\q^{\pm2}},\TilJTL{1}{1})=\oC.\qquad\label{hom-tilt-N6-2}
\end{gather}

Next, using the filtration of $\Hilb_N$  by cell modules given in~\eqref{Hsix}
and the subquotient structure of tiltings we deduce the spin-chain decomposition for $N=6$
over the tilting modules $\TilJTL{j}{P}$:
\begin{equation}\label{decomp-H6-JTL}
\Hilb_6 = \TilJTL{0}{\q^2} \oplus 8 \TilJTL{1}{1}
\oplus 22 \TilJTL{2}{1}  \oplus 24\TilJTL{2}{-1}
\oplus 112 \TilJTL{3}{1}
 \oplus 75 \TilJTL{3}{\q^{\pm2}}.
\end{equation}
Together with \eqref{dim-tilt-N6}-\eqref{hom-tilt-N6-2}, this gives the  dimensions of simple modules over
the centralizer of $\rJTL{6}(1)$ and its indecomposable tiltings.
Because the centralizer is bigger than  the $s\ell(2|1)$ symmetry,  the multiplicities in \eqref{decomp-H6-JTL} arise  in general
from direct sums of atypical/typical
representations of $s\ell(2|1)$.

 Let us describe the  $s\ell(2|1)$ content of these multiplicities here.
Using the formulas in~\cite{sl21rep,superDic}, we find the following decomposition of the Hilbert space over $s\ell(2|1)$
\begin{multline}\label{decomp-H6-sl21}
\Hilb_6|_{s\ell(2|1)} = \atyp{0,0} \oplus \atyp{0,3} \oplus
\atyp{\pm1,2} \oplus 2\,\atyp{\pm1/2,5/2} \oplus 9\,\atyp{0,2} \oplus
9\,\atyp{\pm1/2,3/2}\\ \oplus 18\,\atyp{0,1} \oplus 2\,\slPr_{\pm}(1/2)
\oplus 5\,\slPr(0),
 \end{multline}
where we use the notations of appendix~\ref{appSl21}. Recall that the dimension of typicals $\atyp{b,j}$, with $b\ne\pm j$, is $8j$, while the projective $s\ell(2|1)$-modules $\slPr_{\pm}(j)$ have dimension $16j + 4$ for $j>0$ and $8$ for $j=0$.


First of all, it is clear that the multiplicity $1$ in front of $\TilJTL{0}{\q^2}$ in~\eqref{decomp-H6-JTL} corresponds to the trivial atypical $s\ell(2|1)$-module $\atyp{0,0}$. This is because $\TilJTL{0}{\q^2}$ contains the groundstate of the Hamiltonian which transforms trivially with respect to $s\ell(2|1)$. This accounts for the direct summand $\atyp{0,0}$ in the sum~\eqref{decomp-H6-sl21} plus the $5$ copies of the top of $\slPr(0)$ and the $5$ copies of the (bottom) submodule $\atyp{0,0}\subset\slPr(0)$ -- we have thus counted  $11$ copies which is precisely the dimension of $\TilJTL{0}{\q^2}$, see~\eqref{dim-tilt-N6}. The multiplicity $8$ of $\TilJTL{1}{1}$ corresponds to the adjoint $s\ell(2|1)$ representation $\atyp{0,1}$: we shall see in the following that the scaling limit of this tilting module contains the Noether currents associated with the  $s\ell(2|1)$ symmetry. This accounts for $17$ out of the $18$ $\atyp{0,1}$ modules in the decomposition~\eqref{decomp-H6-sl21}.

Interpreting the multiplicity $22$ in front of $\TilJTL{2}{1}$ is a bit more difficult. We  use the decomposition
on $4$ sites,
\begin{equation}\label{decomp-H4-sl21}
\Hilb_4|_{s\ell(2|1)} = \atyp{0,0}  \oplus 4\,\atyp{0,1}  \oplus \,\atyp{0,2} \oplus \,\atyp{\pm1/2,3/2}
\oplus \,\slPr(0),
 \end{equation}
 where this multiplicity also appears in front of $\TilJTL{2}{1}$,
and observe that  it is enough to look at the bimodule for JTL and its centralizer -- this  gives many constraints for the $s\ell(2|1)$ content. The bimodule can be easily constructed using the subquotient structure of the tilting modules and information about possible non trivial Hom spaces between them, as those in~\eqref{hom-tilt-N6-1}. 
This way, we find that the multiplicity $22$  corresponds to $\atyp{0,2} \oplus \lbrace \frac{1}{2} \rbrace_{+}\oplus \lbrace \frac{1}{2} \rbrace_{-}$. Meanwhile, the multiplicity $24$ in front of $\TilJTL{2}{-1} $ corresponds to $\atyp{\pm1/2,3/2}$: this  we obtained
because the only remaining $s\ell(2|1)$ representations are $\atyp{\pm1/2,3/2}$ indeed.

Now, the remaining multiplicities $112$ and $2 \times 75$ on $6$ sites must be interpreted in terms of the $s\ell(2|1)$ representations
\begin{equation}
 \atyp{0,3} \oplus \atyp{0,2} \oplus \atyp{0,1} \oplus
\atyp{\pm1,2} \oplus 2\,\atyp{\pm1/2,5/2}  \oplus
3\,\atyp{\pm1/2,3/2} \oplus 4 \lbrace 0 \rbrace  \oplus 2 \lbrace 1 \rbrace_{\pm}\oplus \lbrace \frac{1}{2} \rbrace_{\pm}
 \end{equation}
where the last pieces come from breaking up the projectives $\slPr_{\pm}(1/2)$. One can check that the dimension indeed corresponds to $262=112+2\times 75$. Since we expect the results for the two multiplicities $75$ to be somewhat symmetric, it is reasonable to expect that the part $\atyp{0,3} \oplus \atyp{0,2} \oplus \atyp{0,1}$ that cannot be cut in half should contribute to the multiplicity $112$.   From the bimodule analysis, as we did for $N=4$,  and asuming the symmetry for $75$ we thus get
$$112 = \atyp{0,3} \oplus \atyp{0,2} \oplus \atyp{0,1} \oplus \atyp{\pm\half,\ffrac{3}{2}} \oplus \atyp{\pm1,2} \oplus P(0)$$
It means that the remaining $2\times 75$ multiplicities are given by `taking a half of' the $150$-dimensional representation
\begin{equation}
75 =
\atyp{\pm\half,\ffrac{3}{2}} \oplus \atyp{\pm\ffrac{1}{2},\ffrac{5}{2}}  \oplus \lbrace 0 \rbrace  \oplus  \lbrace 1 \rbrace_{\pm}.
 \end{equation}
Note that we obtained that both multiplicities $75$ correspond to isomorphic $s\ell(2|1)$ modules; they are non-isomorphic only as modules over the JTL's centralizer.

\subsection{General structure of tilting modules and Hamiltonians's Jordan cells}\label{sec:Jcells}
In this section, we give more examples of the subquotient structure of tilting modules, and provide finally the corresponding  general pattern. We also discuss some of our results on Jordan cells for the spin chain Hamiltonian $H$. Remarkably, we not only  observe  the Jordan cells of rank higher than $2$, but in fact show that  the maximum rank in the spin chain grows with the number of sites.

\subsubsection{$N=8$ example}\label{sec:N8}
We analyze first the more interesting case of $8$ sites,  where  Jordan cells of rank $3$
 appear for the first time. They involve now  states in the simple subquotients $\IrrJTL{4}{1}$. To justify this, we describe
the structure of indecomposable tiltings with simple subquotients, using Fig.~\ref{fig:tilt-thirdroot} and Fig.~\ref{fig:cell-thirdroot}.
 \begin{align*}
   \xymatrix@R=16pt@C=0pt
   {
     &&\\
     &&\\
     &\TilJTL{2(1)}{1}:&\\
     &&\\
     &&
     }
   \xymatrix@R=14pt@C=14pt
   {     &&\\
     &{\square}\ar@{}|{\substack{\StJTL{4}{1}}\kern-7pt}[]+<-30pt,25pt>\ar[dr]\ar[dl]&\\
     {\bullet}\ar@{}|{\substack{\StJTL{3}{\q^2}}\kern-7pt}[]+<-20pt,25pt>\ar[dr]
     &&{\bullet}\ar@{}|{\substack{\StJTL{3}{\q^{-2}}}\kern-7pt}[]+<10pt,20pt>\ar[dl]\\
     &{\circ}\ar@{}|{\substack{\StJTL{2(1)}{1}}\kern-7pt}[]+<-50pt,5pt> &\\
     &&
     }\quad&
   \xymatrix@R=16pt@C=0pt
   {
     &&\\
     &&\\
     &=&\\
     &&\\
     &&
     }
        \xymatrix@R=16pt@C=14pt
   {    & & {\square}\ar@{}|{\substack{\IrJTL{4}{1}}\kern-7pt}[]+<-40pt,15pt>\ar[dr]\ar[dl]
     &&\\
    & {\bullet}\ar@{}|{\substack{\IrJTL{3}{\q^2}}\kern-7pt}[]+<-40pt,15pt>\ar[dr]\ar[dl]
     &&{\bullet}\ar@{}|{\substack{\IrJTL{3}{\q^{-2}}}\kern-7pt}[]+<40pt,15pt>\ar[dl]\ar[dr]&\\
    {\bullet}\ar@{}|{\substack{\IrJTL{4}{1}}\kern-7pt}[]+<-30pt,10pt>\ar[dr]
    &&{\circ}\ar@{}|{\substack{\IrJTL{2(1)}{1}}\kern-7pt}[]+<-50pt,5pt>\ar[dr]\ar[dl]
    &&    {\bullet}\ar@{}|{\substack{\IrJTL{4}{1}}\kern-7pt}[]+<20pt,10pt>\ar[dl]\\
      &{\circ}\ar@{}|{\substack{\IrJTL{3}{\q^2}}\kern-7pt}[]+<-40pt,0pt>\ar[dr]
     &&{\circ}\ar@{}|{\substack{\IrJTL{3}{\q^{-2}}}\kern-7pt}[]+<30pt,0pt>\ar[dl]&\\
    & & {\circ}\ar@{}|{\substack{\IrJTL{4}{1}}\kern-7pt}[]+<-40pt,0pt>&&
     }\\
        \xymatrix@R=16pt@C=0pt
   {
     &&\\
     &\TilJTL{0}{\q^2}:&\\
     &&
     }   \xymatrix@R=14pt@C=14pt
   { &{\square}\ar@{}|{\substack{\StJTL{4}{1}}\kern-7pt}[]+<-20pt,25pt>\ar[d]&\\
     &{\bullet}\ar@{}|{\substack{\StJTL{2}{1}}\kern-7pt}[]+<-40pt,15pt>\ar[d]
     &\\
     &{\circ}\ar@{}|{\substack{\bAStTL{0}{\q^2}}\kern-7pt}[]+<-40pt,5pt> &
     }&
   \xymatrix@R=16pt@C=0pt
   {
     &&\\
     &=&\\
     &&
     }
        \xymatrix@R=20pt@C=18pt
   {&{\bullet}\ar@{}|{\substack{\IrJTL{2}{1}}\kern-7pt}[]+<-20pt,20pt>\ar[dl]\ar[d]\ar[dr]
     &{\square}\ar@{}|{\substack{\IrJTL{4}{1}}\kern-7pt}[]+<-20pt,20pt>\ar[d]\ar[dll]\\
     {\bullet}\ar@{}|{\substack{\IrJTL{3}{\q^2}}\kern-7pt}[]+<-40pt,15pt>\ar[dr]\ar[drr]
     &{\circ}\ar@{}|{\substack{\IrJTL{0}{\q^2}}\kern-7pt}[]+<-35pt,0pt>\ar[d]
     &{\bullet}\ar@{}|{\substack{\IrJTL{3}{\q^{-2}}}\kern-7pt}[]+<30pt,15pt>\ar[dl]\ar[d]\\
      &{\circ}\ar@{}|{\substack{\IrJTL{2}{1}}\kern-7pt}[]+<-35pt,0pt>
     &{\bullet}\ar@{}|{\substack{\IrJTL{4}{1}}\kern-7pt}[]+<-20pt,20pt>
           }
 \end{align*}
where different types of nodes denoted by $\square$, $\bullet$ and
$\circ$ show the cell-filtration of the tilting modules: symbols are assigned to cell modules that appear on the left from equalities while all simple subquotients (on the right) from a particular cell module are denoted by the  corresponding symbol.  The notation $2(1)$ means that any of the two numbers $2$ or $1$ can occur as  the first index in the subscript.

 Given this structure for the tilting modules $\TilJTL{2(1)}{1}$, we expect the Hamiltonian $H$ to show rank-3 Jordan cells involving (generalized) eigenstates from simple subquotients $\IrJTL{4}{1}$.  Note that this is not a rigorous result, since the indecomposability does not necessarily mean that all elements of the algebra have Jordan cells -- only that some might do. Nevertheless, there is overwhelming evidence that, from an algebraic point of view, the Hamiltonian behaves in a very `generic' fashion, as we now verify numerically.  On $N=8$ sites for instance, the module $\IrrJTL{4}{1}$ is one-dimensional and corresponds to the eigenvalue $0$. Since the tilting modules $\TilJTL{1}{1}$ and $\TilJTL{2}{1}$ appear with multiplicities $8$ and $22$ respectively, we would expect the Hamiltonian $H$ on $8$ sites to show $22+8=30$ rank-3 Jordan cells for the eigenvalue $0$.  Using a formal computation software (Mathematica$^\copyright$), we have computed exactly the null-spaces ${\rm Ker} \ H$, ${\rm Ker} \ H^2$, ${\rm Ker} \ H^3$ on various $s\ell(2|1)$ sectors labeled by the quantum numbers $(B,S_z)$. This gives us all the information with need on the Jordan cell structure for the eigenvalue $0$. For instance, in the sector $(B,S_z)=(0,0)$, we find ${\rm dim} \ {\rm Ker} \ H = 155$, ${\rm dim} \ {\rm Ker} \ H^2 = 170$, and ${\rm dim} \ {\rm Ker} \ H^3 = 174$, $174$ being the multiplicity of the eigenvalue $0$ in that sector. Hence, once can clearly see that there are $4$ rank-3 Jordan cells in that sector, which is exactly what is expected from our analysis as $(B,S_z)=(0,0)$ occurs once in the multiplicity $8=\atyp{0,1}$ of  $\TilJTL{1}{1}$, and three times in  the multiplicity $22=\atyp{0,2} \oplus \lbrace \frac{1}{2} \rbrace_{+}\oplus \lbrace \frac{1}{2} \rbrace_{-}$ of  $\TilJTL{2}{1}$. We have actually checked the presence of the $30$ rank-3 Jordan cells in the whole spectrum. Note that we have also carefully analyzed the multiplicities of the rank-2 Jordan cells on small sizes, and found a perfect agreement with our algebraic results.

\subsubsection{$N=14$ example}\label{sec:N14}
We next analyze the case of $14$ sites  where Jordan cells of rank $4$
should appear for states from simple subquotients $\IrrJTL{7}{1}$. The first time rank $4$ Jordan cells for the Hamiltonian might appear is actually for $N=12$ but diagrams for subquotient structure in the case of $N=14$ look somewhat nicer and we have decided to discuss this case instead.
As above, using the structure of tilting modules in Fig.~\ref{fig:tilt-thirdroot} given in terms of cell modules from Fig.~\ref{fig:cell-thirdroot}, we can describe the structure of indecomposable tiltings in terms of simple subquotients. For this, one should also use self-duality arguments. An example is given   in Fig.~\ref{fig:N14-rank4tilt}  for $\TilJTL{3}{\q^{\pm2}}$.

We also notice in the diagram a pattern of appearance of isomorphic simple subquotients. From the previous case we learned that these subquotients at different sections (levels) of the diagram are connected by the action of our Hamiltonian. For example,  simple subquotients $\IrrJTL{7}{1}$ indicated in   bold  appear $6$ times but at $4$ levels. So, we expect that these modules allow Jordan cells for the Hamiltonian $H$ of (at least) rank~$4$.
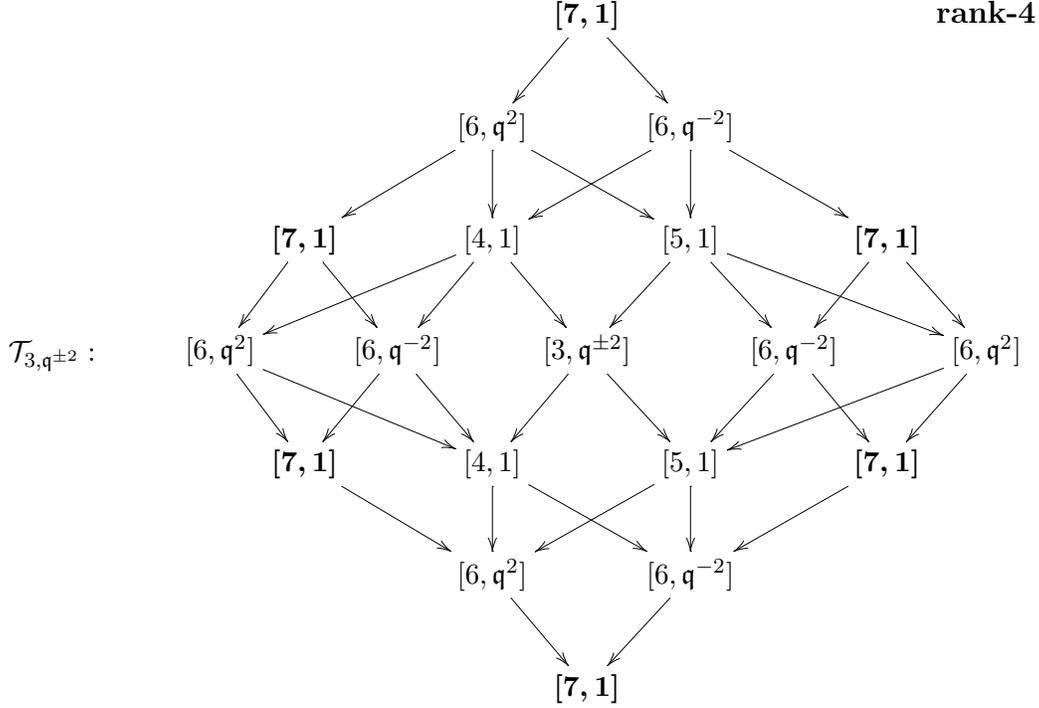
\begin{figure}
 \begin{equation*}
   \xymatrix@R=55pt@C=0pt
   {
     &&\\
     &&\\
     &\TilJTL{3}{\q^{\pm2}}:\qquad&\\
     &&\\
     &&
     }
\xymatrix@R=25pt@C=0pt
   {
    &&& & {\boldsymbol{\IrJTL{7}{1}}}\ar[dr]\ar[dl]     &&&&{\large{\textbf{rank-$\boldsymbol{4}$}}}\\
 &&&{\IrJTL{6}{\q^2}}\ar[drr]\ar[dll]\ar[d]
     &&{\IrJTL{6}{\q^{-2}}}\ar[dll]\ar[drr]\ar[d]&&\\
    & {\boldsymbol{\IrJTL{7}{1}}}\ar[dr]\ar[dl]
    &&{\IrJTL{4}{1}}\ar[dr]\ar[dl]\ar[dlll]
     &&{\IrJTL{5}{1}}\ar[dl]\ar[dr]\ar[drrr]
     &&{\boldsymbol{\IrJTL{7}{1}}}\ar[dl]\ar[dr]\\
    {\IrJTL{6}{\q^{2}}}\ar[dr]\ar[drrr]
    &&{\IrJTL{6}{\q^{-2}}}\ar[dr]\ar[dl]
    &&{\IrJTL{3}{\q^{\pm2}}}\ar[dr]\ar[dl]
    &&    {\IrJTL{6}{\q^{-2}}}\ar[dl]\ar[dr]
    &&{\IrJTL{6}{\q^{2}}}\ar[dl]\ar[dlll]\\
    & {\boldsymbol{\IrJTL{7}{1}}}\ar[drr]
     &&{\IrJTL{4}{1}}\ar[drr]\ar[d]
     &&{\IrJTL{5}{1}}\ar[dll]\ar[d]
          &&{\boldsymbol{\IrJTL{7}{1}}}\ar[dll]\\
 &&&{\IrJTL{6}{\q^2}}\ar[dr]
     &&{\IrJTL{6}{\q^{-2}}}\ar[dl]&&\\
    &&& & {\boldsymbol{\IrJTL{7}{1}}}&&&&
     }
 \end{equation*}
      \caption{The structure of the tilting $\rJTL{N}$-module
      $\TilJTL{3}{\q^{\pm2}}$ for $N=14$. These modules   allow Jordan cells of rank~$4$ for the hamiltonian $H$ acting on  states from simple subquotients $\IrJTL{7}{1}$ indicated in bold. }
    \label{fig:N14-rank4tilt}
    \end{figure}

  Next, the tilting $\rJTL{N}$-modules
      $\TilJTL{2}{1}$ and $\TilJTL{1}{1}$ for $N=14$  allow Jordan cells of rank~$5$ for the Hamiltonian $H$ acting on  states from simple subquotients $\IrrJTL{7}{1}$. The rank is actually stabilized, e.g., for all $N>14$ the maximum rank on the whole spin chain (and at least for first low lying states) for subquotients  $\IrrJTL{4}{1}$ is $3$,  for $\IrrJTL{7}{1}$ is $5$, etc.

Finally, we give the structure of the ``vacuum'' (we call it so since  it contains the vacuum state of the spin chain) tilting module $\TilJTL{0}{\q^2}$ at $N=14$
\begin{equation}\label{Tilt-N14}
        \xymatrix@R=42pt@C=26pt@W=2pt@M=2pt
   {
   &&&& {\circ}\ar@{}|{\substack{\IrJTL{7}{1}}\kern-7pt}[]+<0pt,20pt>\ar[drrr]\ar[dlll] &&\\
   &{\bullet}\ar@{}|{\substack{\IrJTL{6}{\q^{-2}}}\kern-7pt}[]+<-30pt,15pt>\ar[dr]\ar[dl]\ar[drrrrr]&&&&
   &&{\bullet}\ar@{}|{\substack{\IrJTL{6}{\q^2}}\kern-7pt}[]+<15pt,15pt>\ar[dl]\ar[dr]\ar[dlllll]\\
   {\bullet}\ar@{}|{\substack{\IrJTL{7}{1}}\kern-7pt}[]+<-20pt,20pt>\ar[dr]\ar[drr]
   &&{\square}\ar@{}|{\substack{\IrJTL{4}{1}}\kern-7pt}[]+<15pt,20pt>\ar[dr]\ar[drrr]\ar[d]\ar[dl]
   &&{\bullet}\ar@{}|{\substack{\IrJTL{2}{1}}\kern-7pt}[]+<0pt,20pt>\ar[dl]\ar[d]\ar[dr]
     &&{\square}\ar@{}|{\substack{\IrJTL{5}{1}}\kern-7pt}[]+<-20pt,20pt>\ar[dl]\ar[dlll]\ar[d]\ar[dr]
     &&{\bullet}\ar@{}|{\substack{\IrJTL{7}{1}}\kern-7pt}[]+<10pt,20pt>\ar[dl]\ar[dll]\\
   &{\square}\ar@{}|{\substack{\IrJTL{6}{\q^{-2}}}\kern-7pt}[]+<-45pt,5pt>\ar[dr]\ar[dl]
   &{\square}\ar@{}|{\substack{\IrJTL{6}{\q^2}}\kern-7pt}[]+<-40pt,3pt>\ar[d]\ar[dll]
   &{\bullet}\ar@{}|{\substack{\IrJTL{3}{\q^2}}\kern-7pt}[]+<-35pt,5pt>\ar[dr]\ar[drrr]\ar[dl]
     &{\circ}\ar@{}|{\substack{\IrJTL{0}{\q^2}}\kern-7pt}[]+<-30pt,5pt>\ar[d]
     &{\bullet}\ar@{}|{\substack{\IrJTL{3}{\q^{-2}}}\kern-7pt}[]+<25pt,5pt>\ar[dl]\ar[dr]\ar[dlll]
     &{\square}\ar@{}|{\substack{\IrJTL{6}{\q^{-2}}}\kern-7pt}[]+<30pt,3pt>\ar[d]\ar[drr]
     &{\square}\ar@{}|{\substack{\IrJTL{6}{\q^{2}}}\kern-7pt}[]+<25pt,5pt>\ar[dr]\ar[dl]\\
   {\square}\ar@{}|{\substack{\IrJTL{7}{1}}\kern-7pt}[]+<-20pt,-20pt>\ar[dr]
   && {\bullet}\ar@{}|{\substack{\IrJTL{4}{1}}\kern-7pt}[]+<-35pt,5pt>\ar[dl]\ar[drrrrr]
    &  &{\circ}\ar@{}|{\substack{\IrJTL{2}{1}}\kern-7pt}[]+<0pt,-20pt>
     &&{\bullet}\ar@{}|{\substack{\IrJTL{5}{1}}\kern-7pt}[]+<20pt,0pt>\ar[dr]\ar[dlllll]
     && {\square}\ar@{}|{\substack{\IrJTL{7}{1}}\kern-7pt}[]+<20pt,-15pt>\ar[dl]\\
     &{\bullet}\ar@{}|{\substack{\IrJTL{6}{\q^{-2}}}\kern-7pt}[]+<-35pt,-15pt>\ar[drrr]&&&
     &&&{\bullet}\ar@{}|{\substack{\IrJTL{6}{\q^2}}\kern-7pt}[]+<20pt,-15pt>\ar[dlll]&
     \\&&&& {\bullet}\ar@{}|{\substack{\IrJTL{7}{1}}\kern-7pt}[]+<0pt,-20pt> &&
     }
\end{equation}
where $\circ$'s denote simple subquotients from the standard modules $\StJTL{0}{\q^2}$ and $\StJTL{7}{1}$,  $\bullet$'s denote subquotients from  $\StJTL{2}{1}$ and $\StJTL{6}{\q^{\pm2}}$, and $\square$'s are for the standard modules $\StJTL{4}{1}$ and $\StJTL{5}{1}$. Using these notations, one can easily see the filtration by standard modules proposed before.  One can also note that the structure of $\TilJTL{0}{\q^2}$ is essentially (but not completely) fixed by the filtration proposed and the self-duality requirement. The structure~\eqref{Tilt-N14} for the tilting module $\TilJTL{0}{\q^2}$ is obviously invariant under the duality operation. We drew only the minimum number of arrows between the simple subquotients --  those fixed by the duality -- though there  might be additional arrows, for example, between $\IrrJTL{7}{1}$ and $\IrrJTL{6}{\q^{\pm2}}$.

 One can also check that the projective covers found in previous sections, see Fig.~\ref{fig:proj-thirdroot}, indeed cover any submodule in $\TilJTL{0}{\q^2}$. For example, the top node/subquotient $\IrrJTL{7}{1}$ is covered by the projective module $\PrJTL{7}{1}$ in the following way: the kernel of the map $\PrJTL{7}{1}\to\TilJTL{0}{\q^2}$ contains the submodule $\StJTL{1}{1}$, the maximum proper submodule of $\StJTL{2}{1}$ (considered itself as a submodule in $\PrJTL{7}{1}$, see Fig.~\ref{fig:proj-thirdroot}) and a linear combination of the maximal proper submodules in the two subquotients $\StJTL{3}{\q^2}$ and $\StJTL{3}{\q^{-2}}$ as they are presented in Fig.~\ref{fig:proj-thirdroot}.

 We finally note in this example, that the vacuum tilting module $\TilJTL{0}{\q^2}$ has two irreducible tops -- the subquotients $\IrrJTL{7}{1}$ marked by `$\circ$' and $\IrrJTL{2}{1}$ marked by `$\bullet$'. This happens because we can not have arrows connecting  the nodes $\IrJTL{6}{\q^{\pm2}}$, lying higher in the diagram, with the node $\IrJTL{2}{1}$ -- otherwise the projective modules $\PrJTL{6}{\q^{\pm2}}$ could not cover submodules growing from these nodes $\IrJTL{6}{\q^{\pm2}}$ which is a contradiction. Similarly, one can exclude many other arrows and end up with the diagram we present in~\eqref{Tilt-N14}. This property that the states from $\IrrJTL{2}{1}$ and the vacuum state from $\IrrJTL{0}{\q^2}$ are somehow disconnected from the rest of the module is peculiar to the vacuum tilting module. All other indecomposable tiltings for $N=14$ have a unique top subquotient.

\subsubsection{General structure for $\TilJTL{0}{\q^2}$ and Jordan cells for $H$}\label{sec:vac-tilt-gen}

We now discuss the general structure of the vacuum tilting module in terms of irreducible subquotients. It turns out that this structure is best formulated through the example of   $N=18$. Using the same ideas as before, we obtain the diagram for the structure of $\TilJTL{0}{\q^2}$ presented in Fig.~\ref{fig:tilt-N18},
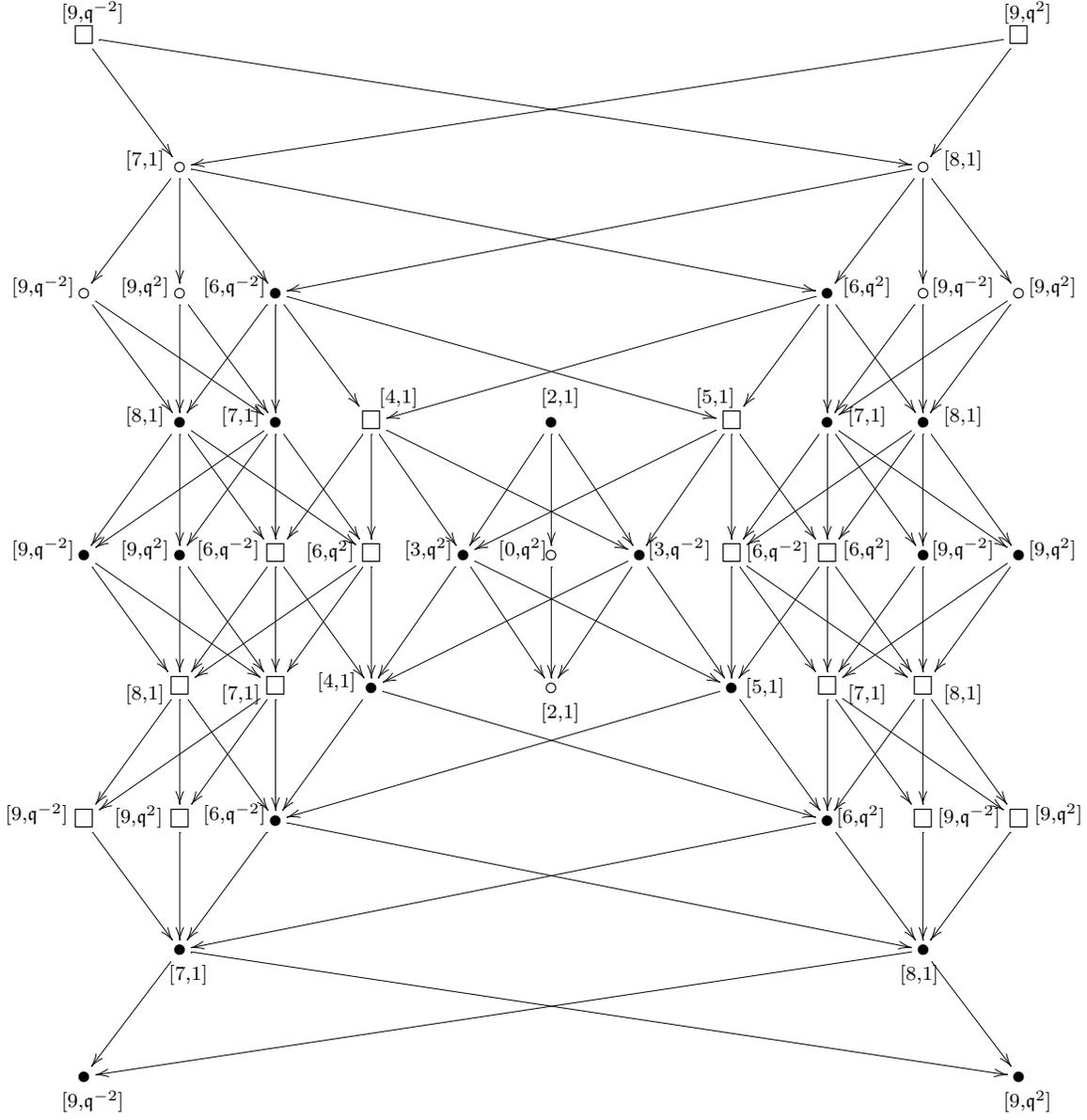
\begin{figure}
\begin{equation*}
        \xymatrix@R=42pt@C=26pt@W=2pt@M=2pt
   {
    {\square}\ar@{}|{\substack{\IrJTL{9}{\q^{-2}}}\kern-7pt}[]+<0pt,20pt>\ar[dr]\ar[drrrrrrrrr] &&&& &&
    &&&&{\square}\ar@{}|{\substack{\IrJTL{9}{\q^2}}\kern-7pt}[]+<0pt,20pt>\ar[dl]\ar[dlllllllll]\\
   &{\circ}\ar@{}|{\substack{\IrJTL{7}{1}}\kern-7pt}[]+<-35pt,5pt>\ar[drrrrrrr]\ar[dr]\ar[d]\ar[dl]&&&&  &&
   &&{\circ}\ar@{}|{\substack{\IrJTL{8}{1}}\kern-7pt}[]+<25pt,5pt>\ar[dlllllll]\ar[dl]\ar[d]\ar[dr]\\
   {\circ}\ar@{}|{\substack{\IrJTL{9}{\q^{-2}}}\kern-7pt}[]+<-40pt,5pt>\ar[dr]\ar[drr]
   &{\circ}\ar@{}|{\substack{\IrJTL{9}{\q^2}}\kern-7pt}[]+<-35pt,5pt>\ar[d]\ar[dr]
   &{\bullet}\ar@{}|{\substack{\IrJTL{6}{\q^{-2}}}\kern-7pt}[]+<-40pt,5pt>\ar[dr]\ar[dl]\ar[drrrrr]\ar[d]&&&&
   &&{\bullet}\ar@{}|{\substack{\IrJTL{6}{\q^2}}\kern-7pt}[]+<25pt,5pt>\ar[dl]\ar[dr]\ar[dlllll]\ar[d]
   &{\circ}\ar@{}|{\substack{\IrJTL{9}{\q^{-2}}}\kern-7pt}[]+<25pt,5pt>\ar[d]\ar[dl]
     &{\circ}\ar@{}|{\substack{\IrJTL{9}{\q^2}}\kern-7pt}[]+<20pt,5pt>\ar[dl]\ar[dll]\\
   &{\bullet}\ar@{}|{\substack{\IrJTL{8}{1}}\kern-7pt}[]+<-35pt,5pt>\ar[dr]\ar[drr]\ar[d]\ar[dl]
   &{\bullet}\ar@{}|{\substack{\IrJTL{7}{1}}\kern-7pt}[]+<-35pt,5pt>\ar[d]\ar[dr]\ar[dll]\ar[dl]
   &{\square}\ar@{}|{\substack{\IrJTL{4}{1}}\kern-7pt}[]+<15pt,20pt>\ar[dr]\ar[drrr]\ar[d]\ar[dl]
   &&{\bullet}\ar@{}|{\substack{\IrJTL{2}{1}}\kern-7pt}[]+<0pt,20pt>\ar[dl]\ar[d]\ar[dr]
     &&{\square}\ar@{}|{\substack{\IrJTL{5}{1}}\kern-7pt}[]+<-20pt,20pt>\ar[dl]\ar[dlll]\ar[d]\ar[dr]
     &{\bullet}\ar@{}|{\substack{\IrJTL{7}{1}}\kern-7pt}[]+<25pt,5pt>\ar[d]\ar[dl]\ar[dr]\ar[drr]
     &{\bullet}\ar@{}|{\substack{\IrJTL{8}{1}}\kern-7pt}[]+<25pt,5pt>\ar[dl]\ar[dll]\ar[d]\ar[dr]\\
     {\bullet}\ar@{}|{\substack{\IrJTL{9}{\q^{-2}}}\kern-7pt}[]+<-40pt,5pt>\ar[dr]\ar[drr]
   &{\bullet}\ar@{}|{\substack{\IrJTL{9}{\q^2}}\kern-7pt}[]+<-35pt,5pt>\ar[d]\ar[dr]
   &{\square}\ar@{}|{\substack{\IrJTL{6}{\q^{-2}}}\kern-7pt}[]+<-45pt,5pt>\ar[dr]\ar[dl]\ar[d]
   &{\square}\ar@{}|{\substack{\IrJTL{6}{\q^2}}\kern-7pt}[]+<-40pt,3pt>\ar[d]\ar[dll]\ar[dl]
   &{\bullet}\ar@{}|{\substack{\IrJTL{3}{\q^2}}\kern-7pt}[]+<-35pt,5pt>\ar[dr]\ar[drrr]\ar[dl]
     &{\circ}\ar@{}|{\substack{\IrJTL{0}{\q^2}}\kern-7pt}[]+<-30pt,5pt>\ar[d]
     &{\bullet}\ar@{}|{\substack{\IrJTL{3}{\q^{-2}}}\kern-7pt}[]+<25pt,5pt>\ar[dl]\ar[dr]\ar[dlll]
     &{\square}\ar@{}|{\substack{\IrJTL{6}{\q^{-2}}}\kern-7pt}[]+<30pt,3pt>\ar[d]\ar[drr]\ar[dr]
     &{\square}\ar@{}|{\substack{\IrJTL{6}{\q^{2}}}\kern-7pt}[]+<25pt,5pt>\ar[dr]\ar[dl]\ar[d]
     &{\bullet}\ar@{}|{\substack{\IrJTL{9}{\q^{-2}}}\kern-7pt}[]+<25pt,5pt>\ar[d]\ar[dl]
     &{\bullet}\ar@{}|{\substack{\IrJTL{9}{\q^2}}\kern-7pt}[]+<20pt,5pt>\ar[dl]\ar[dll]\\
   &{\square}\ar@{}|{\substack{\IrJTL{8}{1}}\kern-7pt}[]+<-35pt,-5pt>\ar[dr]\ar[d]\ar[dl]
   &{\square}\ar@{}|{\substack{\IrJTL{7}{1}}\kern-7pt}[]+<-35pt,-5pt>\ar[d]\ar[dl]\ar[dll]
   & {\bullet}\ar@{}|{\substack{\IrJTL{4}{1}}\kern-7pt}[]+<-35pt,5pt>\ar[dl]\ar[drrrrr]
    &  &{\circ}\ar@{}|{\substack{\IrJTL{2}{1}}\kern-7pt}[]+<0pt,-20pt>
     &&{\bullet}\ar@{}|{\substack{\IrJTL{5}{1}}\kern-7pt}[]+<20pt,0pt>\ar[dr]\ar[dlllll]
     &{\square}\ar@{}|{\substack{\IrJTL{7}{1}}\kern-7pt}[]+<25pt,-5pt>\ar[d]\ar[dr]\ar[drr]
     & {\square}\ar@{}|{\substack{\IrJTL{8}{1}}\kern-7pt}[]+<25pt,-5pt>\ar[dl]\ar[d]\ar[dr]\\
    {\square}\ar@{}|{\substack{\IrJTL{9}{\q^{-2}}}\kern-7pt}[]+<-45pt,5pt>\ar[dr]
   &{\square}\ar@{}|{\substack{\IrJTL{9}{\q^2}}\kern-7pt}[]+<-40pt,3pt>\ar[d]
    &{\bullet}\ar@{}|{\substack{\IrJTL{6}{\q^{-2}}}\kern-7pt}[]+<-40pt,5pt>\ar[dl]\ar[drrrrrrr]&&&
     &&&{\bullet}\ar@{}|{\substack{\IrJTL{6}{\q^2}}\kern-7pt}[]+<20pt,3pt>\ar[dr]\ar[dlllllll]
     &{\square}\ar@{}|{\substack{\IrJTL{9}{\q^{-2}}}\kern-7pt}[]+<30pt,3pt>\ar[d]
     &{\square}\ar@{}|{\substack{\IrJTL{9}{\q^{2}}}\kern-7pt}[]+<25pt,5pt>\ar[dl]\\
     &{\bullet}\ar@{}|{\substack{\IrJTL{7}{1}}\kern-7pt}[]+<0pt,-20pt>\ar[dl]\ar[drrrrrrrrr]&&&& &&
     &&{\bullet}\ar@{}|{\substack{\IrJTL{8}{1}}\kern-7pt}[]+<-10pt,-20pt>\ar[dr]\ar[dlllllllll]&&\\
    {\bullet}\ar@{}|{\substack{\IrJTL{9}{\q^{-2}}}\kern-7pt}[]+<0pt,-20pt> &&&& &&&&
    &&{\bullet}\ar@{}|{\substack{\IrJTL{9}{\q^2}}\kern-7pt}[]+<0pt,-20pt>
     }
\end{equation*}
\caption{Subquotient structure ``Eiffel tower'' of the vacuum tilting $\rJTL{18}$-module $\TilJTL{0}{\q^2}$. Here, $\circ$'s denote simple subquotients from the standard modules $\StJTL{0}{\q^2}$, $\StJTL{7}{1}$ and $\StJTL{8}{1}$, the $\bullet$'s denote subquotients from  $\StJTL{2}{1}$ and $\StJTL{6}{\q^{\pm2}}$, and $\square$'s are for the standard modules $\StJTL{4}{1}$, $\StJTL{5}{1}$ and $\StJTL{9}{\q^{\pm2}}$.}\label{fig:tilt-N18}
\end{figure}
where $\circ$'s denote simple subquotients from the standard modules $\StJTL{0}{\q^2}$, $\StJTL{7}{1}$ and $\StJTL{8}{1}$, the $\bullet$'s denote subquotients from  $\StJTL{2}{1}$ and $\StJTL{6}{\q^{\pm2}}$, and $\square$'s are for the standard modules $\StJTL{4}{1}$, $\StJTL{5}{1}$ and $\StJTL{9}{\q^{\pm2}}$. Using these notations, one can easily see the filtration by standard modules proposed before.  We see that in general we have two cones: one consisting of $\bullet$'s, and the other reflected in the horizontal line containing $\IrrJTL{0}{\q^2}$. The vacuum irreducible subquotient $\IrrJTL{0}{\q^2}$ lives in the intersection of the cones while the boundaries of these cones give  bounds for the appearance of simple subquotients $\IrrJTL{j}{P}$, with $j>2$. We denote  this particular structure of the vacuum tilting module  the ``Eiffel tower'' structure.

Now for larger number of sites $N$, the vacuum tilting module $\TilJTL{0}{\q^2}$ has essentially the same pattern of nodes and arrows between them\footnote{A slight difference is only for cases $\frac{N}{2}\mod 3 = 1$ where the diagram follows the pattern from \eqref{Tilt-N14} with one node at the top and one at the bottom.}, the increasing values of $N$ simply giving rise to   longer and longer `ladders' on the left and right sides of the corresponding ``Eiffel tower''.

Finally, we give a short comment on the other tilting modules $\TilJTL{j}{P}$ with $j>0$ that belong to the doubly critical class, i.e., those corresponding to the weights $[2,1]$, $[3,\q^{\pm2}]$, etc. Though it is not easy to write down a general diagram for their subquotient structure in terms of irreducible subquotients (one can imagine a ``ladder-of-ladders" structure, of course), the most important   is actually their general diagram in Fig.~\ref{fig:tilt-thirdroot}  in terms of cell modules.

\medskip

\textit{On Jordan cells.}
 Having now  a better understanding on how the diagrams for tilting modules are organized in terms of simple subquotients (at least for the vacuum tilting module) we are ready to discuss the possible structure of Jordan cells for the spin chain Hamiltonian $H$.
Indeed, following the  discussion at the end of Sec.~\ref{sec:N8} and Sec.~\ref{sec:N14}, and in agreement with numerical experimentation,  the rank of the Hamiltonian  Jordan cells involving states of a particular (generalized) eigenvalue  can be estimated by counting, in the diagram for a given module, the horizontal levels  that contain  the simple subquotients  to which  these states belong. So, for  general values of $N$, we believe  that the Hamiltonian Jordan cells  involving states from $\IrrJTL{j}{P}$ subquotients in $\TilJTL{0}{\q^2}$  have rank at least as large as the number of  appearance of $\IrrJTL{j}{P}$ on the left (or right) part of the corresponding  diagram of $\TilJTL{0}{\q^2}$. Once again, to obtain the diagram for larger values of $N>18$ one should extend   the one in Fig.~\ref{fig:tilt-N18} in a rather obvious way, and, for values $N<18$, just  remove nodes for subquotients with the $j$ index greater than $N/2$.
We also note that we do not assume that the Hamiltonian is diagonalizable in each simple JTL. There are recent results~\cite{MDSA} stating that the Hamiltonian $H=\sum_{i=1}^N e_i$
has Jordan cells of rank $2$ in cell (standard) modules for the periodic Temperley-Lieb algebra, which is rather surprising comparing to the open case. It could  thus be that counting  the horizontal levels in the diagram  only gives lower bounds on  the rank Jordan cells. We give our conjecture on these bounds below.

 Using the general diagram in Fig.~\ref{fig:tilt-thirdroot} and taking into account the fact that  an irreducible subquotient $\IrJTL{j'}{P'}$ appears in a cell module $\StJTL{j}{P}$ only if $j'>j$ and with multiplicity one,  see Fig.~\ref{fig:cell-thirdroot}, we can estimate lower bounds for the rank of the Hamiltonian Jordan cells in each tilting module. We should just count the number of levels (also called Loewy layers) where a particular irreducible subquotient $\IrJTL{j'}{P'}$ appears. So, at large enough $N$ the Jordan cell rank for states from $\IrJTL{j'}{P'}$ in $\TilJTL{j}{P}$ with $j'\geq j$ is bounded by the number of (horizontal) levels counted from the level of the diagram in Fig.~\ref{fig:tilt-thirdroot} containing the node $\StJTL{j}{P}$ up to the level containing the node $\StJTL{j'}{P'}$. One can easily compute this number. For example, \textit{the highest rank  Jordan cells for $j'>2$ should be in  $\TilJTL{2}{1}$ and $\TilJTL{1}{1}$ and states from $\IrJTL{j'}{P'}$ subquotients are expected to be involed into Jordan cells of $H$ of rank at least given   by the following number:}
\begin{equation}\label{eq:min rank-1}
\text{rank of}\, H\; \text{in}\; \TilJTL{2(1)}{1} \geq
\begin{cases}
2 \bigl\lceil \frac{j'-2}{3} \bigr\rceil,& \qquad j'\bmod 3 =0,\quad P'=\q^{\pm2},\\
2 \bigl\lceil \frac{j'-2}{3}\bigr\rceil +1,& \qquad j'\bmod 3 = 1 \; \text{or}\; 2, \quad P'=1.\\
\end{cases}
\end{equation}
Further, using the structure of $\TilJTL{0}{\q^2}$ presented in Fig.~\ref{fig:tilt-JTL-vac}, we also expect that the lower bound for states from $\IrrJTL{2}{1}$ subquotients is $2$, and for states from  $\IrrJTL{3}{\q^{\pm2}}$ is $1$.
For the other subquotients with $j'>3$ we have that the value for the lower bound of the Jordan cells rank differs   by one from that  in~\eqref{eq:min rank-1}. It is because the diagram for $\TilJTL{0}{\q^2}$ in terms of cell modules has no nodes $\StJTL{3}{\q^{\pm2}}$ and thus, the number of corresponding horizontal levels or Loewy layers is lowered  by one unit.

 Tilting modules in the singly critical case where the corresponding cell modules have a chain structure can be studied in a similar fashion. We do not describe explicitly their structure here and we refer the interested reader to~\cite{VGJS} where very similar modules were encountered in the context of the blob algebra.

\medskip

\section{Taking the continuum limit of the $s\ell(2|1)$ spin chain}\label{sec:takecontlim}

We now turn to  the scaling (continuum) limit of the $s\ell(2|1)$ spin chain: our goal is to infer from the foregoing algebraic analysis results about the representations of the product of left and right Virasoro algebras that act on the low energy states, and ultimately, all the properties of the corresponding LCFT. This is a difficult task, which we will only begin in this paper.

A major difference with the case of   $\gl(1|1)$  studied in~\cite{GRS1,GRS3}  is that the $s\ell(2|1)$ spin chain is not free. It cannot be diagonalized using  free fermions (or a combination of free fermions and free bosons), and therefore,
we have much less control on the (generalized) eigenstates and eigenvalues, and consequently, on the scaling limit.

Many properties of the $s\ell(2|1)$ spin chain can nevertheless be obtained exactly, by combining the algebraic analysis with the Bethe ansatz. It is important at this stage to stress that our chain is not the integrable alternating spin chain one would obtain \cite{EsslerFS} from the general inverse scattering construction. To obtain, for instance, the (generalized) eigenvalues of the Hamiltonian,  what one must do is observe that, since the chain provides a representation of the Jones--Temperley--Lieb algebra, the eigenvalues in each standard  module can be obtained using the results from another, Bethe ansatz solvable chain, where the same modules appear -- in our case, the twisted XXZ spin chain (we also refer to~\cite{[ABNKS]} where the spectrum problem for the periodic $s\ell(2|1)$ spin chain was also studied.)

Focussing now on the continuum limit, we are interested in the generating function of gaps  for each standard JTL module. Since we know the  decomposition of the spin chain over tilting modules, and each summand  consists of a glueing of many standard JTL modules, the generating function for gaps in our spin chain -- which will then give information on the Virasoro content -- will be obtained using results of the previous section.
To start, we thus describe the scaling limit of the twisted XXZ models, where each spin sector is isomorphic to a standard module over the affine TL at generic values of parameters and the asymptotics (at large $N$) of the generating functions is known.

\subsection{Twisted XXZ spin chain and continuum limit}\label{sec:twXXZ}

 It is well known
that the 6-vertex model (or the XXZ spin-chain) provides a natural representation of the affine Temperley-Lieb algebra, where the generators read
\begin{equation}
e_i = \mathbb{I} \otimes \mathbb{I} \otimes \dots \otimes \left( \begin{array}{cccc}
0 & 0 & 0 & 0 \\
0 & \q^{-1} & -1 & 0 \\
0 & -1 & \q & 0 \\
0 & 0 & 0 & 0 \end{array} \right) \otimes \dots \otimes \mathbb{I},
\label{e_eiXXZ}
\end{equation}
acting on $(i,i+1)$th tensor components or spins $\{\uparrow\uparrow,\uparrow\downarrow,\downarrow\uparrow,\downarrow\downarrow\}$ of the ``Hilbert'' space $ \mathcal{H}_{\rm XXZ} = (\mathbb{C}^2)^{\otimes 2N} $ and $1\leq i\leq 2N-1$.
In the basis of the last and the first spins, the last generator $e_{2N}$ is
\begin{equation}
\left( \begin{array}{cccc}
0 & 0 & 0 & 0 \\
0 & \q^{-1} & -\mathrm{e}^{i\phi} & 0 \\
0 & -\mathrm{e}^{-i\phi} & \q & 0 \\
0 & 0 & 0 & 0 \end{array} \right),
\label{eq_perioXXZ}
\end{equation}
where of course it is implied that $e_{2N}$ acts as the identity operator on all the other spins.
The resulting Hamiltonian $H=-\sum_{i} e_i$ reads, up to an irrelevant constant
\begin{equation}
\displaystyle H = \frac{1}{2} \sum_{i=1}^{2N} \left( \sigma^x_i \sigma^x_{i+1} + \sigma^y_i \sigma^y_{i+1}  + \frac{\q + \q^{-1}}{2} \sigma^z_i \sigma^z_{i+1}  \right) + \frac{\mathrm{e}^{i\phi}}{4} \sigma_{2N}^{+}  \sigma_{1}^{-}
+ \frac{\mathrm{e}^{-i\phi}}{4} \sigma_{2N}^{-}  \sigma_{1}^{+}.
\end{equation}
We shall refer to this model as  the \textit{twisted  XXZ spin chain}.
 The choice of the twist
$\mathrm{e}^{i\phi}=\mathrm{e}^{2iK}$ allows to select specific generically
irreducible representations of the affine TL algebra. For $j \neq 0$, the Hilbert space of this model
in the sector with the total spin $S_z=j$ is isomorphic to the standard module
$\StTL{|j|,\mathrm{e}^{\pm2iK}}$, where `$+$' is for positive $j$ and  `$-$' is for negative value of $j$. This can be easily checked using the translation generator
$u$ of the affine TL  represented in the twisted chain as
\begin{equation}\label{XXZ-trans}
u = e^{i\frac{\phi}{2}\sigma^z_1}s_1 s_2\dots s_{2N-1},
\end{equation}
where the $s_i$'s are the permutations of the $i$th and $(i+1)$th sites.

The scaling limit of each sector can be inferred using the basic fact~\cite{Cardy} that  the generating function of the energy and momentum spectra is related
 to conformal spectra (for the critical Hamiltonian at $|\q|=1$) as
\begin{equation}
 \hbox{Tr}\, e^{-\beta_R(H-Ne_0)}e^{-i\beta_I P}\;\xrightarrow{\, N\to\infty\,}\; \hbox{Tr}\, q^{L_0-c/24}\bar{q}^{\bar{L}_0-c/24},\label{charform}
 \end{equation}
 where $H$ and $P$ are the lattice hamiltonian  (normalized such that the velocity of sound is unity) and momentum,  $e_0$ is the ground state energy per site in the thermodynamic limit, we also set $q(\bar{q})=\exp\left[-{2\pi\over N}(\beta_R\pm i\beta_I)\right]$ with $\beta_{R,I}$ real and $\beta_R>0$, and $N$ is the length of the chain.
 The trace on the left is taken over  the states of the spin chain in a given sector, and the trace on the right over the  states occurring in this sector  in the continuum limit.

  The traces of the scaling hamiltonian in the generic irreducible representations are easily worked out using the XXZ hamiltonian, to which methods such as the Bethe ansatz or Coulomb gas mappings can be readily applied~\cite{PasquierSaleur}.
Let us introduce the usual notations for the central charge and the conformal weights
\begin{subequations}
\begin{eqnarray}
\displaystyle c &=& 1 - \frac{6}{x (x+1)}, \\
\displaystyle h_{r,s} &=& \frac{ \left[(x+1)r - xs \right]^2 - 1}{4 x (x+1)},
\end{eqnarray}
\end{subequations}
where with this parametrization $\q = \mathrm{e}^{i \pi/ (x+1)}$. For $x=2$ or $c=0$ the conformal weights $h_{r,s}$ are arranged in the Kac table in Fig.~\ref{Kac-t}.

\begin{figure}\centering
    \includegraphics[scale=0.4]{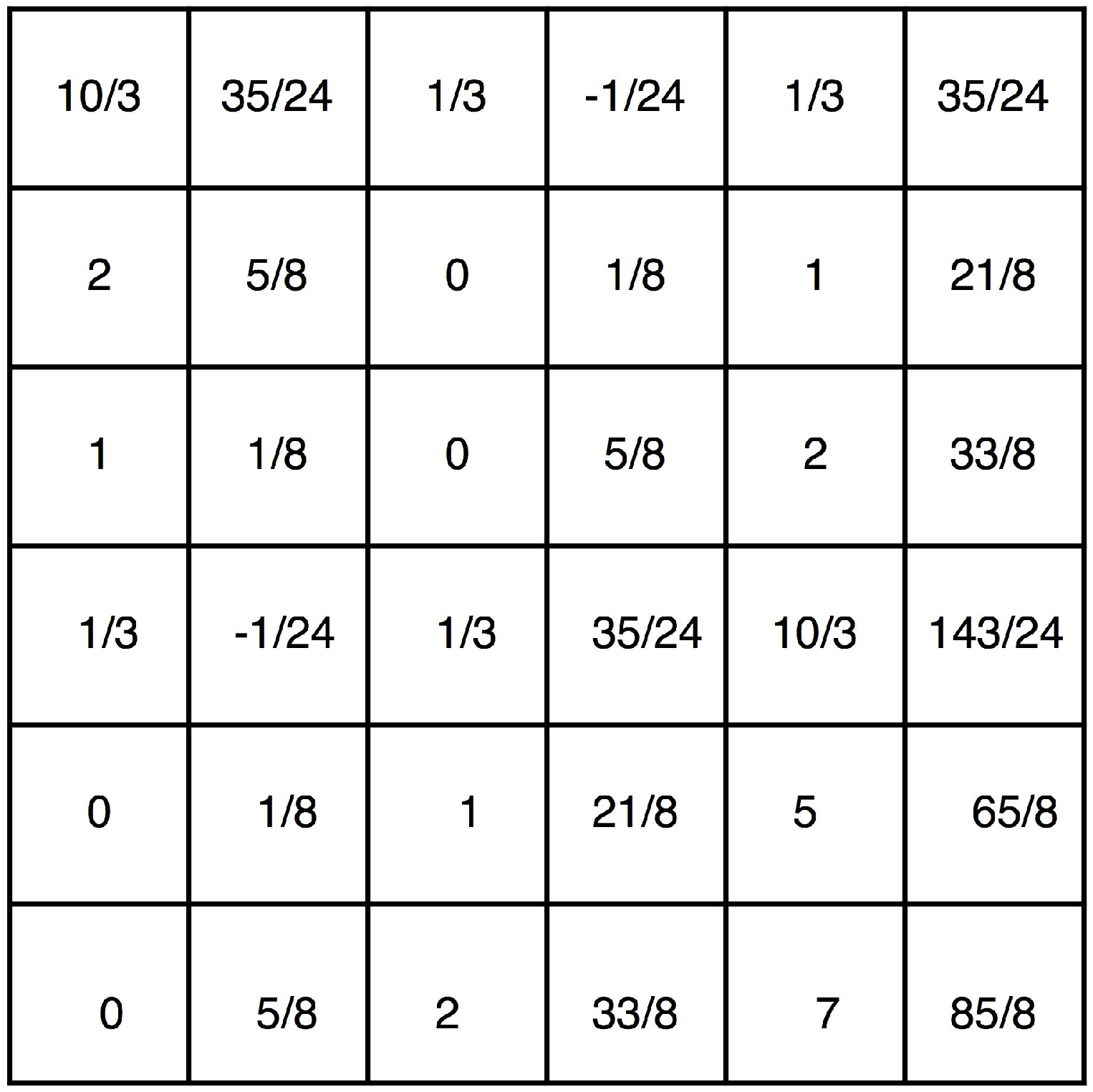}
      \caption{Kac table for $c=0$.}\label{Kac-t}
    \end{figure}

The trace on the left hand side of \eqref{charform} taken over the subspace ${\cal H}_j$ of  spin projection $S^z=j$, with $-L\leq
j\leq L$, in the XXZ chain of length $N=2L$ has the limit \cite{PasquierSaleur,Rittenbergetal}
 \begin{equation}
 \hbox{Tr}_{{\cal H}_j}\, e^{-\beta_R(\HXXZ-2Le_0)}e^{-i\beta_I \PXXZ}\;\xrightarrow{\, N\to\infty\,}\; F_{j,e^{2iK}}
 \end{equation}
where  $K=\frac{\pi}{M} p$ and
 \begin{equation}\label{F-func}
 F_{j,e^{2iK}} = \ffrac{q^{-c/24}\bar{q}^{-c/24}}{P(q)P(\bar{q})}\sum_{n \in \mathbb{Z}} q^{h_{n+p/M,-j}} \bar{q}^{h_{n+p/M,j}},
\end{equation}
and
\begin{equation}
\displaystyle P(q) =  \prod_{n=1}^{\infty} (1 - q^n) = q^{-1/24} \eta (q).
\end{equation}

We note that the expression~\eqref{F-func} is a formally infinite sum over products of characters
\begin{equation}\label{Vir-ch}
	k_{r,s} = \frac{q^{h_{r,s}-c/24}}{P(q)}, \qquad \overline{k}_{r,s} = \frac{\bar{q}^{h_{r,s}-c/24}}{P(\bar{q})}
	\end{equation}
 of the Verma representations of the Virasoro algebra.

Recall that $\StJTL{j}{\mathrm{e}^{2 i K}}$ modules over the $\rJTL{N}$ algebra are irreducible at generic values of $\q$, where the pseudomomenta $\mathrm{e}^{2iK}$ are now taken as $j$th roots of unity, and we have an isomorphism ${\cal H}_j\cong\StTL{j,\mathrm{e}^{2 i K}}$ for $j>0$ and the choice of the twist $\phi = 2K$.
Therefore, the generating function of levels in the scaling limit of the JTL modules  $\StJTL{j}{\mathrm{e}^{2 i K}}$, can be simply written as
\begin{equation}
\mathrm{Tr}_{\,\StTL{j,\mathrm{e}^{2 i K}}} q^{L_0-c/24}\bar{q}^{\bar{L}_0-c/24} \equiv F_{j,e^{2iK}},
\end{equation}
where  we used same notations for the  modules $\StJTL{j}{\mathrm{e}^{2 i K}}$ and their scaling limits\footnote{By the scaling limit of modules, we mean an appropriate inductive limit of them corresponding to $N\to\infty$. Though in general, it is very hard to construct such limits explicitly, see some examples of a rigorous construction in \cite{GRS3}.}.  The character formulas for the scaling limit of the JTL simple modules at generic $\q$ shows that the  scaling limit of the $\rJTL{N}$ algebra should be  an operator algebra containing $\VirN$ as a proper subalgebra.

The case $j=0$ requires more care as one has to be careful about the loops that wrap around the spatial direction due to the periodic boundary conditions which would get a weight $2$ if $\phi=0$ in~\eqref{eq_perioXXZ}.
One then needs to introduce a twist $\mathrm{e}^{i\phi}=\q^2$ to account for this\footnote{It is indeed easy to check that within our twisted XXZ representation, non-contractible loops carry a weight $2 \cos \frac{\phi}{2}$.}.  In this case, the sector $S_z=0$ corresponds to the standard module\footnote{We note that choosing the twist as $\mathrm{e}^{i\phi}=\q^{-2}$ gives $\mathcal{H}_{0}$ as the costandard module $\AStTL{0}{\q^2}^*=   \StJTL{1}{1} \rightarrow \bAStTL{0}{\q^2}$.} $\mathcal{H}_{0} \simeq \AStTL{0}{\q^2} = \bAStTL{0}{\q^2} \rightarrow \StJTL{1}{1}$.
The trace over the scaling limit of the representation $\bAStTL{0}{\q^2}$ thus reads
\begin{equation}
 \mathrm{Tr}_{\,\bAStTL{0}{\q^2}} q^{L_0-c/24}\bar{q}^{\bar{L}_0-c/24} = F_{0,\q^2}-F_{1,1},
\end{equation}
or  introducing
the character of the Kac representation
\begin{equation}\label{Kac-ch}
K_{r,s}={q^{h_{r,s}}-q^{h_{r,-s}}\over P(q)},
\end{equation}
we have
\begin{equation}
\hbox{Tr }_{\bAStTL{0}{\q^2}}q^{L_0-c/24}\bar{q}^{\bar{L}_0-c/24}=F_{0,q^2}-F_{1,1}=\sum_{r\geq 1} K_{r,1}\bar{K}_{r,1}.\label{firstsub}
 \end{equation}
  Note that we have
\begin{equation}
F_{0,e^{2iK}}=F_{0,e^{-2iK}}.
\end{equation}

 Finally, we quote a formula that will play a fundamental role later:
 \begin{equation}
 F_{j,\q^{2j+2k}}-F_{j+k,\q^{2j}}=\sum_{r=1}^\infty K_{r,k}\bar{K}_{r,k+2j}\label{useful}
 \end{equation}
 with, recall,  $\q=e^{i\pi/(x+1)}$. Of course, (\ref{firstsub}) is just a particular case where $j=0$ and $k=1$.
We note that the pseudomomenta ($j$th roots of unity) for our JTL modules appear in this formula only at integer values of $x$.

\medskip

The generating functions of standard modules in~\eqref{F-func} can be obviously written in terms of the characters $k_{r,s}$ of Virasoro Verma modules. Even at generic values of $x$ or the central charge $c$, it does not mean, of course, that the scaling limit of the corresponding JTL modules is a direct sum of products of Verma modules. For example, the Feigin--Fuchs (FF) module $\rep{F}_{r,s}$ has the same character $k_{r,s}$ as the Verma module of weight $h_{r,s}$, but is not Verma (see a review on FF modules in App.~\ref{appFF}). Actually, it is known that XXZ chains are closely related to the  Coulomb Gas model \cite{PasquierSaleur}.
Given that FF modules are constructed directly from a free boson picture, it is rather natural to expect our characters to describe the product of FF modules instead of Verma modules. We will actually see later directly at  $c=0$ that assuming (a filtration by) Virasoro Verma modules in the scaling limit of the JTL  standards will be in contradiction with the algebraic structure, while Feigin--Fuchs modules give a consistent picture.

 We also note that  it is not clear whether at generic values of $\q$  the scaling limit of each JTL standard (also simple) module $\StTL{j,\mathrm{e}^{2 i K}}$ considered as a $\VirN$ module is self-dual. This is related with  the question of whether loop models
at generic values of $\q$ (or $x$) are ``physical'', that is, described by consistent local bulk CFTs\footnote{Recall that one of consistency conditions for bulk CFTs is a non-degeneracy of the two-point function that is defined as the Virasoro-algebra invariant bilinear form on the space $\Hilb$ of all non-chiral fields. The non-degeneracy condition implies that the space of linear forms on $\Hilb$ is isomorphic to $\Hilb$ itself, as $\VirN$ modules, or in other words $\Hilb$ has to be self-dual.}.
The answer is not clear, because for generic $\q$, there exists no (supersymmetric) spin chain formulation, no self-dual `Hilbert space', etc. Self-duality of the limits of
 $\StTL{j,\mathrm{e}^{2 i K}}$, as $\VirN$ modules, at generic values of $\q$ on the other hand would have very important practical consequences: this  will be
discussed elsewhere.

\section{Operator content of simple JTL modules}\label{sec:opconsimmod}

We start our analysis by discussing the $\VirN$ content of the simple $\rJTL{N}(1)$ modules in the continuum limit for $c=0$. By this, we mean the representation content of the states  that contribute to the scaling limit in the $\rJTL{N}$ modules. It is convenient for this purpose to  first evaluate the generating functions of levels in some of the  $\rJTL{N}$ modules we have encountered previously.

 \subsection{Characters for the limit of JTL simple modules}\label{sec:char-JTLsimples}
There are two possible situations, apart from the fully non degenerate one. The first occurs for pairs $(j,e^{2iK})=(0 \hbox{ mod }3,e^{2i\pi/3})$, $(j,e^{2iK})=(0\hbox{ mod }3,e^{4i\pi/3})$, $(1\hbox{ mod }3,1)$ or $(2\hbox{ mod }3,1)$. In this case, the structure of submodules is represented by the first two diagrams on Fig.~\ref{fig:cell-thirdroot}.
%

The characters of the corresponding JTL simples $\IrrJTL{j}{P}$ in the limit
\begin{equation}
F^{(0)}_{j,P} \equiv \mathrm{Tr}_{\IrrJTL{j}{P}} q^{L_0-c/24}\bar{q}^{\bar{L}_0-c/24}
\end{equation}
 are obtained  by a series of subtractions and additions, just like in the  computation of  dimensions for  finite chains: the only difference is that the series is infinite -- but convergent.
As an example, we consider
\begin{eqnarray}
F_{0,e^{2i\pi/3}}^{(0)}=F_{0,e^{2i\pi/3}}-F_{1,1}-F_{2,1}+F_{3,e^{2i\pi/3}}+F_{3,e^{4i\pi/3}}-F_{4,1}-F_{5,1}+\ldots\nonumber\\
=\sum_{n=0}^\infty F_{3n,e^{2i\pi/3}}-F_{3n+1,1}-\sum_{n=0}^\infty \left(F_{3n+2,1}-F_{3n+3,e^{4i\pi/3}}\right).
\end{eqnarray}
We now use the basic identities
\begin{eqnarray}
F_{3n,e^{2i\pi/3}}-F_{3n+1,1}=\sum_{r=1}^\infty K_{r,1}\bar{K}_{r,6n+1},\nonumber\\
F_{3n+2,1}-F_{3n+3,e^{4i\pi/3}}=\sum_{r=1}^\infty K_{r,1}\bar{K}_{r,6n+5}
\end{eqnarray}
together with, for $l=1,2$,
\begin{gather*}
K_{2k-1,l}=\chi_{2k-1,l}+\chi_{2k+1,l},\\
K_{2k,l}=\chi_{2k,l},
\end{gather*}
where the $\chi$'s are as usual characters of the Virasoro algebra simples. By straightforward manipulations we find first
\begin{equation}
F_{0,e^{2i\pi/3}}^{(0)}=\sum_{r=1}^\infty K_{r1}\sum_{n=0}^\infty \left(\bar{K}_{r,6n+1}-\bar{K}_{r,6n+5}\right)=\chi_{11}\bar{\chi}_{11}=1
\end{equation}
as was expected because the dimension of $\IrrJTL{0}{\q^2}$ is one for any even number of sites.
Note also that $\chi_{12}=\chi_{11}=1$.

We then find
\begin{eqnarray}
F_{1,1}^{(0)}=\sum_{k=1}^\infty \chi_{2k,2}\bar{\chi}_{2k,2}
+\left(\chi_{2k-1,2}+\chi_{2k+1,2}\right)\left(\bar{\chi}_{2k-1,2}+\bar{\chi}_{2k+1,2}\right)-\chi_{1,1}\bar{\chi}_{1,1},\nonumber\\
F_{2,1}^{(0)}=\sum_{k=1}^\infty \chi_{2k,1}\bar{\chi}_{2k,1}
+\left(\chi_{2k-1,1}+\chi_{2k+1,1}\right)\left(\bar{\chi}_{2k-1,1}+\bar{\chi}_{2k+1,1}\right)-\chi_{1,1}\bar{\chi}_{1,1}.\label{F21-0}
\end{eqnarray}
Interestingly, we can for instance rewrite
\begin{eqnarray}
F_{1,1}^{(0)}&=&\sum_{k=1}^\infty\chi_{2k,2}\bar{\chi}_{2k,2}+\chi_{2k+1,1}\left(\bar{\chi}_{2k-1,2}+2\bar{\chi}_{2k+1,2}+\bar{\chi}_{2k+3,2}\right)\nonumber\\
&=&\sum_{k=1}^\infty\chi_{2k,2}\bar{\chi}_{2k,2}+\left(\chi_{2k-1,2}+2\chi_{2k+1,2}+\chi_{2k+3,2}\right)
\bar{\chi}_{2k+1,2}.
\end{eqnarray}
We give in Appendix~\ref{AppCharacters} explicit formulas for many other $F_{j,e^{2iK}}$ in terms of the left and right Virasoro characters.

We note that the leading conformal weighs in $F_{2n+1,1}^{(0)}$ are $(h,\bar{h})=(h_{1,2},h_{3+4n,2})$ and the same with $h\leftrightarrow \bar{h}$. In $F_{2n+2,1}^{(0)}$ we have $(h_{1,1},h_{3+4n,1})$. In $F_{3n,e^{2i\pi/3}}^{(0)}$ we get $(h_{1,1},h_{1+4n,2})$ and $(h_{1+4n,1},h_{1,2})$, in $F_{3n,e^{4i\pi/3}}^{(0)}$ we get $(h_{1,2},h_{1+4n,1})$ and $(h_{1+4n,2},h_{1,1})$ (recall $h_{1,1}=h_{1,2}$ so we have left-right symmetry).

\subsection{Left-right Virasoro structure of simple JTL modules}\label{sec:lrVir-simple}
We stress that so far we only computed characters $F_{j,e^{2iK}}$ of the (scaling limit of) the simple JTL modules $\IrrJTL{j}{e^{2iK}}$:  an additional analysis  is required to see whether each of the modules $\IrrJTL{j}{e^{2iK}}$ is a direct sum of simples over left-right Virasoro $\VirN$, as was in the case of $\gl(1|1)$ spin chains~\cite{GRS3}. Unfortunately, we give strong arguments below  that JTL simples (those belonging to an indecomposable block of JTL modules) in the case of the $s\ell(2|1)$ spin chain  are not, in general,  direct sums of Virasoro simples -- and involve instead reducible but indecomposable modules. This suggests that
 the analysis of the scaling limit should involve a  bigger algebra than just $\VirN$ -- in other words, that the  scaling limit of the JTL algebra is more than $\VirN$. This will be discussed in the conclusion, and in subsequent work.

\newcommand{\Kac}{\mathcal{K}}
\newcommand{\bKac}{\bar{\mathcal{K}}}
We begin our analysis  with the standard module $\bStJTL{0}{\q^2}$.  We note that this JTL module is well-defined for generic $\q$ or $x$. For such  values of $x$, using the character~\eqref{firstsub} and the fact that weights $h_{r,1}$ do not differ by an integer for different integer values of $r$, the scaling limit of the vacuum standard module, which we will denote by the same symbol $\bStJTL{0}{\q^2}$ as for a finite lattice, is decomposed over the left-right Virasoro onto the direct sum of simple $\VirN$-modules
\begin{equation}\label{vac-mod-gen}
\bStJTL{0}{\q^2} = \bigoplus_{r\geq1}\Kac_{r,1}\boxtimes\bKac_{r,1}
\end{equation}
where $\Kac_{r,s}$ are  Virasoro modules
 with the Kac characters $K_{r,s}$ given in~\eqref{Kac-ch} and we also introduce the corresponding anti-chiral modules $\bKac_{r,s}$. These modules are  simple at generic central charges $c$ or generic values of our parameter $\q$. Moreover, there cannot be  glueings/extensions among the  modules $\Kac_{r,1}$ in such generic cases, so the limit of $\bStJTL{0}{\q^2}$ must be a direct sum.

Note that the first term $\Kac_{1,1}\boxtimes \bKac_{1,1}$ contains the identity (or the vacuum state of dimension $(0,0)$), for any $x$, and  its descendants with respect to left-right Virasoro, while the next term $\Kac_{2,1}\boxtimes\bKac_{2,1}$ is spanned by the descendants of the primary field of conformal dimension $\bigl(\frac{3+x}{4x},\frac{3+x}{4x}\bigr)$ (or $(\frac{5}{8},\frac{5}{8})$ at $c=0$), the term $\Kac_{3,1}\boxtimes\bKac_{3,1}$ is spanned by the primary field of dimension $\bigl(\frac{2+x}{x},\frac{2+x}{x}\bigr)$ (which equals $(2,2)$ at $c=0$) and its descendants, {\it etc.}

In contrast, when  $\q=e^{i\pi/3}$ or $c=0$,  the vacuum standard module (scaling limit of $\bStJTL{0}{\q^2}$)  is not  a direct sum of simple Virasoro modules anymore. The subsequent analysis requires, for the time being,   a series of natural assumptions that we spell out below. These assumptions are checked a posteriori by the consistency of the full picture.

When  $\q=e^{i\pi/3}$, or $c=0$, each  module $\Kac_{r,1}$, for odd values of $r$, becomes indecomposable but reducible with the subquotient structure
\begin{equation}
   \xymatrix@R=10pt@C=6pt
   { &\\
   \Kac_{r,1}: }\qquad
   \xymatrix@R=19pt@C=16pt
   {
     &{\circ}\ar@{}|{\substack{h_{r,1}}\kern-7pt}[]+<-30pt,20pt>\ar[dr]&\\
     &&{\circ}\ar@{}|{\substack{h_{r+2,1}}\kern-7pt}[]+<30pt,15pt>
     }
\end{equation}
and each tensor product $\Kac_{r,1}\boxtimes\bKac_{r,1}$ in~\eqref{vac-mod-gen}, assuming the decomposition for $\q=e^{i\pi/3}$ involves the `continuation' of the modules present for $\q$ generic, becomes the following left-right Virasoro module
\begin{equation}
   \xymatrix@R=13pt@C=6pt
   { &\\
   \Kac_{r,1}\boxtimes\bKac_{r,1}: }\qquad
   \xymatrix@R=19pt@C=16pt
   {
     &{\circ}\ar@{}|{\substack{(h_{r,1},\bar{h}_{r,1})}\kern-7pt}[]+<-30pt,20pt>\ar[dr]\ar[dl]&\\
     {\circ}\ar@{}|{\substack{(h_{r,1},\bar{h}_{r+2,1})}\kern-7pt}[]+<-60pt,15pt>\ar[dr]
     &&{\circ}\ar@{}|{\substack{(h_{r+2,1},\bar{h}_{r,1})}\kern-7pt}[]+<50pt,15pt>\ar[dl]\\
     &{\circ}\ar@{}|{\substack{(h_{r+2,1},\bar{h}_{r+2,1})}\kern-7pt}[]+<-60pt,-15pt> &\\
     }
\end{equation}

Moreover, these diamonds could be in principle glued with each other: what we  can  say at this stage of our analysis is that the vacuum standard module has a filtration by the left-right Virasoro modules $\Kac_{r,1}\boxtimes\bKac_{r,1}$ (recall our definition of the filtration around~\eqref{eq:filtr}).
This means that $\Kac_{1,1}\boxtimes\bKac_{1,1}$ is a Virasoro submodule (as it contains the vacuum and we expect the trivial action of positive modes on the identity and energy-momentum tensors), that  there might be arrows due to positive Virasoro modes action from $\Kac_{3,1}\boxtimes\bKac_{3,1}$ into $\Kac_{1,1}\boxtimes\bKac_{1,1}$, and so on.
Modulo the arrows connecting different diamonds we have the structure of the full vacuum module $\bStJTL{0}{\q^2}$:
\begin{multline}\label{scal-lim-mod-1}
  \xymatrix@R=13pt@C=1pt
   { &\\
   \bStJTL{0}{\q^2}: }
     \xymatrix@R=28pt@C=1pt
   {
   {\circ}\ar@{}|{\substack{\IrrJTL{0}{\q^2}}\kern-7pt}[]+<-20pt,20pt>\ar[dr]&&\\
    &{\bullet}\ar@{}|{\substack{\IrrJTL{2}{1}}\kern-7pt}[]+<-20pt,-20pt>&
    }
     \xymatrix@R=10pt@C=1pt
   { &\\
   \xrightarrow{\; N\to\infty\;}
   }
   \xymatrix@R=19pt@C=16pt
   {
     &{\circ}\ar@{}|{\substack{(0,0)}\kern-7pt}[]+<-20pt,20pt>\ar[dr]\ar[dl]&\\
     {\bullet}\ar@{}|{\substack{(0,2)}\kern-7pt}[]+<-20pt,15pt>\ar[dr]
     &&{\bullet}\ar@{}|{\substack{(2,0)}\kern-7pt}[]+<15pt,15pt>\ar[dl]\\
     &{\bullet}\ar@{}|{\substack{(2,2)}\kern-7pt}[]+<-20pt,-15pt> &\\
     }
       \xymatrix@R=13pt@C=1pt
   { &\\
   \qquad\oplus }
   \xymatrix@R=19pt@C=16pt
   {
     &{\bullet}\ar@{}|{\substack{(2,2)}\kern-7pt}[]+<-20pt,20pt>\ar[dr]\ar[dl]&\\
     {\bullet}\ar@{}|{\substack{(2,7)}\kern-7pt}[]+<-25pt,15pt>\ar[dr]
     &&{\bullet}\ar@{}|{\substack{(7,2)}\kern-7pt}[]+<15pt,15pt>\ar[dl]\\
     &{\bullet}\ar@{}|{\substack{(7,7)}\kern-7pt}[]+<-20pt,-15pt> &\\
     }
          \xymatrix@R=13pt@C=1pt
   { &\\
   \qquad\oplus\qquad\dots }\\
   \oplus \quad(\ffrac{5}{8},\ffrac{5}{8})\quad \oplus\quad (\ffrac{33}{8},\ffrac{33}{8})\quad\oplus \dots
\end{multline}
Note once again that these diamonds are just products of the indecomposable Kac modules $\Kac_{r,1}$ at the logarithmic $c = 0$ point, and that the existence of a filtration by these products of Kac modules follows from the generic point decomposition~\eqref{vac-mod-gen}. All the modules $\Kac_{r,1}$ can be equivalently obtained as quotients/submodules of the Feigin--Fuchs modules introduced in App. \ref{appFF}. We see from this filtration that the scaling limit of the simple JTL module $\IrrJTL{2}{1}$ is not a direct sum of simples over the Virasoro algebra but a direct sum of indecomposable but reducible modules with subquotients marked by `$\bullet$', and of  those having  fractional conformal dimensions.

Note that the module $\IrrJTL{0}{\q^2}$ is always one dimensional; it is spanned by the vacuum state, and its scaling limit is marked by `$\circ$' and corresponds to the trivial Virasoro representation. The two states generating subquotients $(2,0)$ and $(0,2)$ are respectively the chiral $T$ and anti-chiral $\bar{T}$ energy-momentum tensors, and the state generating $(2,2)$ is their product $T\bar{T}$.
Note finally that we could assume a filtration of $\bStJTL{0}{\q^2}$ by products of modules \textit{dual} to the Kac ones, i.e. by those with reversed arrows in their subquotient structure (the character would be the same). In our case however, the unique vacuum state would be a descendent of the energy-momentum tensors $T$ and $\bar{T}$. Of course, this is not allowed.

We now discuss the scaling limit of all other JTL standards $\StJTL{j}{z^2}$ and the corresponding simples. For these purposes it is technically easier at  $x=2$ (or $c=0$) to introduce intermediate modules which are quotients similar to $\bStJTL{0}{\q^2}$:
\begin{equation}
\bStJTL{j}{\q^{2j+2k}} = \StJTL{j}{\q^{2j+2k}}/\StJTL{j+k}{\q^{2j}}, \qquad j\geq 0,\quad k\in\{1,2\}.
\end{equation}
Now, each module $\bStJTL{j}{\q^{2j+2k}}$ is a glueing  of two JTL simples. For example, we have
\begin{gather}
\bStJTL{1}{1} = \IrrJTL{1}{1}\to\IrrJTL{3}{\q^4},\qquad
\bStJTL{2}{1} = \IrrJTL{2}{1}\to\IrrJTL{3}{\q^2},\\
\bStJTL{3}{\q^2} = \IrrJTL{3}{\q^2}\to\IrrJTL{5}{1},\qquad
\bStJTL{3}{\q^4} = \IrrJTL{3}{\q^4}\to\IrrJTL{4}{1},\\
\bStJTL{4}{1} = \IrrJTL{4}{1}\to\IrrJTL{6}{\q^4},\qquad
\bStJTL{5}{1} = \IrrJTL{5}{1}\to\IrrJTL{6}{\q^2},
\end{gather}
{\it etc}.  In terms of these modules, we can now describe the structure of all JTL standards  by a chain structure
\begin{equation}
\StJTL{2}{1} = \bStJTL{2}{1}\to\bStJTL{3}{\q^4}\to\bStJTL{5}{1}\to\dots.
\end{equation}

The important thing is that the intermediate JTL modules $\bStJTL{j}{z^2}$ have nice scaling limit properties. To start, we can consider the characters  \begin{equation}\label{Tr-Kac-gen}
\mathrm{Tr}_{\,\bAStTL{j}{\q^{2j+2k}}} q^{L_0-c/24}\bar{q}^{\bar{L}_0-c/24} =
  F_{j,\q^{2j+2k}}-F_{j+k,\q^{2j}}=\sum_{r=1}^\infty K_{r,k}\bar{K}_{r,k+2j}.
 \end{equation}
with the right hand side a sum of  products of (Virasoro) Kac modules characters. For example, the generating functions for the scaling limit of $\bStJTL{1}{1}$, $\bStJTL{2}{1}$, and $\bAStTL{3}{\q^2}$ obey
\begin{gather}
 \mathrm{Tr}_{\,\bAStTL{1}{1}} q^{L_0-c/24}\bar{q}^{\bar{L}_0-c/24} 
 =\sum_{r=1}^\infty K_{r,2}\bar{K}_{r,4},\label{Kac-W11}\\
 \mathrm{Tr}_{\,\bAStTL{2}{1}} q^{L_0-c/24}\bar{q}^{\bar{L}_0-c/24}
 =\sum_{r=1}^\infty K_{r,1}\bar{K}_{r,5},\label{Kac-W21}\\
 \mathrm{Tr}_{\,\bAStTL{3}{\q^2}} q^{L_0-c/24}\bar{q}^{\bar{L}_0-c/24} 
 =\sum_{r=1}^\infty K_{r,1}\bar{K}_{r,7}.\label{Kac-W3q}
\end{gather}
We can again say here that the scaling limit of the modules $\bStJTL{1}{1}$, $\bStJTL{2}{1}$, $\bStJTL{3}{\q^2}$, {\it etc.}, is filtered by the product of {\sl indecomposable} Virasoro modules $\Kac_{r,n}$ having the  character equal $K_{r,n}$ (the Kac character),
i.e., the first term in the sums corresponds to a submodule, the third term is glued
 by positive Virasoro modes action to the first one, and so on.
The only difference from the case $\bStJTL{0}{\q^2}$ is that the anti-chiral modules $\bar{K}_{r,n}$, with $n>3$, (they are generically irreducible) are more complicated  in the case $c=0$. To obtain the character $K_{r,n}$, the singular vector in the corresponding Verma module to be set to zero is at the level $nr$. For example, the  module $\Kac_{3,4}$ has the same highest weight $h_{3,4}=0$ as $\Kac_{1,1}$, but its subquotient structure has six nodes (not two) corresponding to  Virasoro irreducibles.  The module $\Kac_{5,7}$ has the same highest weight and its structure is even more complicated, and so on.
The other problem is that the limit $\Kac_{r,n}$ of generically irreducible Virasoro modules might not be a quotient of the corresponding Verma modules. Indeed, we will see below that they are actually quotients/submodules of Feigin--Fuchs modules\footnote{Recall that Feigin--Fuchs modules are defined by Virasoro-module sructure on the Fock spaces in the Coulomb gas formalism or modified free-boson theory.}. Note also that the first component in the sum~\eqref{Tr-Kac-gen} is always $K_{r,1}$ or $K_{r,2}$ and the corresponding  modules have two subquotients only and they are indeed Kac modules (quotients of the Verma modules) and at the same time quotients of the Feigin--Fuchs modules.

To show the complexity of the scaling limit of the quotient JTL modules $\bStJTL{j}{z^2}$ we describe the left-right Virasoro structure of several  terms in the character sum~\eqref{Kac-W21} corresponding to $\bStJTL{2}{1}$. The first term $K_{1,1}\bar{K}_{1,5}$ has the diamond structure on the left part of Fig.~\ref{KK-FF}, {\it i.e.}, four subquotients, a situation encountered before; the next term $K_{2,1}\bar{K}_{2,5}$ corresponds to the left-right Virasoro module of a chain type $(\frac{5}{8},\frac{5}{8})\to (\frac{5}{8},\frac{21}{8})$. The first interesting term is $K_{3,1}\bar{K}_{3,5}$. If we assumed that the corresponding module at $c=0$ was the product of two Kac modules (quotients of Verma) then  the corresponding  structure would be
\begin{equation}\label{KK-Kac}
  \xymatrix@C=18pt@R=15pt@M=2pt@W=1pt{%
    &&{\stackrel{(2,0)}{\bullet}}\ar[dll]\ar[dr]\ar[drr]&&&\\
    {\stackrel{(7,0)}{\circ}}\ar[dr]\ar[drr]&&
    &{\stackrel{(2,1)}{\circ}}\ar[dll]\ar[dr]\ar[drr]&{\stackrel{(2,2)}{\bullet}}\ar[d]\ar[dr]\ar[dll]&\\
    &{\stackrel{(7,1)}{\circ}}\ar[drr]\ar[dr]&{\stackrel{(7,2)}{\bullet}}\ar[dr]\ar[d]&
    &{\stackrel{(2,5)}{\circ}}\ar[dll]\ar[drr]&{\stackrel{(2,7)}{\bullet}}\ar[dll]\ar[dr]\\
    &&  {\stackrel{(7,5)}{\circ}}\ar[drr] &{\stackrel{(7,7)}{\bullet}}\ar[dr]
    &&&{\stackrel{(2,12)}{\circ}}\ar[dll]\\
    && && {\stackrel{(7,12)}{\circ}}&
 }
\end{equation}
where the down-left arrows describe the chiral Virasoro action and
down or down-right ones are for the antichiral part. We also denote simple Virasoro subquotients contributing to the scaling limit of $\IrrJTL{2}{1}$ by `$\bullet$' and those from the limit of $\IrrJTL{3}{\q^2}$ by `$\circ$'. We used here the corresponding Virasoro character expressions in~\eqref{F21-0} and~\eqref{F3q-0}. As we can see from the diagram in~\eqref{KK-Kac}, there would be a problem then -- there are arrows $\circ\to\bullet$ mapping states from $\IrrJTL{3}{\q^2}$ to states in $\IrrJTL{2}{1}$ and this contradicts to the structure of $\bStJTL{2}{1}: \IrrJTL{2}{1} \to \IrrJTL{3}{\q^2}$. To solve this problem we should actually assume that $\bar{\Kac}_{3,5}$, instead of a Kac module,  is the quotient of a Feigin--Fuchs module, where  half  the arrows in the corresponding Verma module are reversed. We then obtain the consistent structure shown on the right of Fig.~\ref{KK-FF}
\begin{figure}[tbp]
\begin{equation*}
{\footnotesize  \xymatrix@R=19pt@C=0pt
   { &\\
   \Kac_{1,1}\boxtimes\bar{\Kac}_{1,5}:& }
   }
   \xymatrix@R=19pt@C=16pt
   {
     &{\bullet}\ar@{}|{\substack{(0,2)}\kern-7pt}[]+<-20pt,20pt>\ar[dr]\ar[dl]&\\
     {\circ}\ar@{}|{\substack{(0,5)}\kern-7pt}[]+<-20pt,15pt>\ar[dr]
     &&{\bullet}\ar@{}|{\substack{(2,2)}\kern-7pt}[]+<15pt,15pt>\ar[dl]\\
     &{\circ}\ar@{}|{\substack{(2,5)}\kern-7pt}[]+<-20pt,-15pt> &\\
     }\qquad
{\footnotesize  \xymatrix@R=19pt@C=1pt
   { &\\
   \Kac_{3,1}\boxtimes\bar{\Kac}_{3,5}:& }
   }
  \xymatrix@C=18pt@R=15pt@M=2pt@W=1pt{%
    &&{\stackrel{(2,0)}{\bullet}}\ar[dll]\ar[dr]&&&\\
    {\stackrel{(7,0)}{\circ}}\ar[dr]&&
    &{\stackrel{(2,1)}{\circ}}\ar[dll]&{\stackrel{(2,2)}{\bullet}}\ar[d]\ar[dr]\ar[dll]\ar[ull]&\\
    &{\stackrel{(7,1)}{\circ}}&{\stackrel{(7,2)}{\bullet}}\ar[dr]\ar[d]\ar[ull]&
    &{\stackrel{(2,5)}{\circ}}\ar[dll]\ar[drr]\ar[ul]&{\stackrel{(2,7)}{\bullet}}\ar[dll]\ar[dr]\ar[ull]\\
    &&  {\stackrel{(7,5)}{\circ}}\ar[drr]\ar[ul] &{\stackrel{(7,7)}{\bullet}}\ar[dr]\ar[ull]
    &&&{\stackrel{(2,12)}{\circ}}\ar[dll]\\
    && && {\stackrel{(7,12)}{\circ}}&
 }
\end{equation*}
\caption{Structure of the first two terms in the filtration of $\bStJTL{2}{1}$.}
\label{KK-FF}
\end{figure}
We  see once again from this analysis that the scaling limit of JTL simple modules is a direct sum of Virasoro indecomposable but reducible modules.

   Actually, all the terms $K_{3+2n,1}\bar{K}_{3+2n,5}$, with $n\geq0$, in the character sum~\eqref{Kac-W21} correspond to the modules $\Kac_{3+2n,1}\bar{\Kac}_{3+2n,5}$ with this $2\times6$-subquotient structure, and they are products of quotients of the Feigin--Fuchs  modules, or of the duals to them. So, there is a sort of stabilization in the structure of diagrams. Nevertheless, the diagrams for terms having the chain-type structure seem to be growing: the module $\Kac_{4,1}\bar{\Kac}_{4,5}$ has the structure $(\frac{33}{8},\frac{1}{8})\xleftarrow{\quad} (\frac{33}{8},\frac{33}{8})\xrightarrow{\quad} (\frac{33}{8},\frac{65}{8})$, where we again used the Feigin--Fuchs pattern of arrows, the module  $\Kac_{6,1}\bar{\Kac}_{6,5}$ has $4$ subquotients and so on.

As a first interesting result of our analysis, we get the Virsoro structure  of the simple JTL module $\IrrJTL{2}{1}$ in the scaling limit by comparing  its filtrations, in~\eqref{scal-lim-mod-1} and the one described in Fig.~\ref{KK-FF}. Note that in the two diagrams the nodes $(2,0)$, $(0,2)$ and twice $(2,2)$ labeled by $\bullet$'s are connected in different ways (both diagrams do not contradict to each other; they would  if we had used Verma and not Feigin--Fuchs structure). Therefore, all these arrows should be present in the structure for $\IrrJTL{2}{1}$. Further, to complete the diagram for $\IrrJTL{2}{1}$ we should recall that the only known existing indecomposable Virasoro module involving irreducible subquotients with the weights $2$ twice, $0$, and~$7$ is the staggered  module \cite{KytRid}
\begin{equation}
\xymatrix@R=10pt@C=8pt
{
& 2\ar[dl]\ar[dr]&\\
0\ar[dr]&&7\ar[dl]\\
&2&
}
\end{equation}
This observation and similar arguments for the antichiral part of the action give us all the additional arrows in the final struture
\begin{equation}\label{VirN-21}
  \xymatrix@R=17pt@C=2pt
   { &\\
   \IrrJTL{2}{1}:& }
   \xymatrix@R=19pt@C=16pt
   {
     &&{(2,2)}\ar[dr]\ar[drrr]\ar[dl]\ar[dll]&&\ldots\ar[dl]\ar[dr]&&\ldots\\
     (0,2)\ar[drr]&(2,0)\ar[dr]
     &&(2,7)\ar[dl]\ar[dr]
     &&(7,2)\ar[dlll]\ar[dl]&\ldots\\
     &&(2,2)&&(7,7)&&\ldots
     }
\end{equation}
 plus a direct sum of terms with non-integer (rational) weights. The dots in the diagram mean higher terms like $(7,7)$, etc. Following this structure, we also conclude that there is a Jordan cell of rank $2$ for the Hamiltonian in the scaling limit of $\IrrJTL{2}{1}$. We note that both the indecomposability parameters $b$ and $\bar{b}$ equal $\frac{5}{6}$ for the states with conformal weights $(2,2)$.

Using similar  analysis, we can in principle obtain $\VirN$ subquotient structure for the scaling limit of all other JTL simples. For example, the structure for $\IrrJTL{3}{\q^2}$ is
\begin{equation}\label{VirN-3q}
  \xymatrix@R=17pt@C=2pt
   { &\\
   \IrrJTL{3}{\q^2}:& }
   \xymatrix@R=19pt@C=16pt
   {
     &&{(2,5)}\ar[dr]\ar[drrr]\ar[dl]\ar[dll]&&\ldots\ar[dl]\ar[dr]&&\ldots\\
     (0,5)\ar[drr]&(2,1)\ar[dr]
     &&(2,12)\ar[dl]\ar[dr]
     &&(7,5)\ar[dlll]\ar[dl]&\ldots\\
     &&(2,5)&&(7,12)&&\ldots
     }
\end{equation}
Similarly, the structure for $\IrrJTL{3}{\q^{-2}}$
is obtained just by replacing $(h,\bar{h})$ by $(\bar{h}, h)$.

Our conclusion in this section is that the scaling limit of the JTL simples involved in the doubly critical class (containing $\IrrJTL{2}{1}$, $\IrrJTL{3}{\q^{\pm2}}$, $\IrrJTL{4}{1}$, etc) gives $\VirN$-modules with non-trivial Jordan cells of rank $2$ for the Hamiltonian $L_0+\bar{L}_0$. It may have escaped the reader at the end of this long discussion that  \textit{these Jordan cells are not present on the lattice\footnote{The corresponding states on a finite lattice have different eigenvalues of $H$ and thus can not be in a non-trivial Jordan cell. The two eigenvalues tend to the same value only in the limit $N\to\infty$.} and they only arise in the scaling limit}. This is definitely unpleasant, although perfectly possible: the existence of Jordan cells for all system sizes indicates a Jordan cell in the continuum theory as well, but the converse does not have to be true. Nevertheless, such phenomenon did not occur for boundary theories, or for the $\gl(1|1)$ spin chain. It implies, in particular, that the lattice algebraic analysis by itself can only provide lower bounds to the size of the Jordan cells.

\subsection{The field content up to level $(2,2)$}\label{sec:fields}

\renewcommand{\arraystretch}{1.5}
\begin{table}
\begin{center}
\begin{tabular}{|c|c|c|c|c|c|c|c|c|c|}
  \cline{1-10}
  & \multicolumn{8}{|c|}{Multiplicities} &  \\
  \cline{2-9}
    $(h,\overline{h})$ & $\TilJTL{0}{\q^2}$ & $\TilJTL{1}{1}$ & $\TilJTL{2}{1}$  & $\TilJTL{2}{-1}$ & $\TilJTL{3}{1}$ & $\TilJTL{3}{\q^2}$ & $\TilJTL{3}{\q^{-2}}$ & Total & $s\ell(2|1)$ representation\\
  \hline
 $(0,0)$ & 1 & 0 & 0 & 0 & 0 & 0 & 0 & 1 & $\atyp{0,0}$\\
 $(1,0)$ & 0 & 1 & 0 & 0 & 0 & 0 & 0 & 8 & $\atyp{0,1}$\\
 $(0,1)$ & 0 & 1 & 0 & 0 & 0 & 0 & 0 & 8 & $\atyp{0,1}$\\
 $(1,1)$ & 0 & 2 & 0 & 0 & 0 & 0 & 0 & 16 & $2 \atyp{0,1}$\\
 $(\frac{1}{8},\frac{1}{8})$ & 0 & 1 & 0 & 0 &0 & 0 & 0 & 8 & $\atyp{0,1}$\\
 $(\frac{1}{8},\frac{9}{8})$ & 0 & 1 & 0 & 0 &0 & 0 & 0 & 8 & $\atyp{0,1}$\\
 $(\frac{9}{8},\frac{1}{8})$ & 0 & 1 & 0 & 0 &0 & 0 & 0 & 8 & $\atyp{0,1}$\\
 $(\frac{9}{8},\frac{9}{8})$ & 0 & 1 & 0 & 0 &0 & 0 & 0 & 8 & $\atyp{0,1}$\\
 $(\frac{5}{8},\frac{5}{8})$ & 2 & 0 & 1 & 0 &0 & 0 & 0 & 24 & $\atyp{0,2} \oplus \slPr(0)$\\
 $(\frac{5}{8},\frac{13}{8})$ & 2 & 0 & 1 & 0 &0 & 0 & 0 & 24 & $\atyp{0,2} \oplus \slPr(0)$\\
  $(\frac{13}{8},\frac{5}{8})$ & 2 & 0 & 1 & 0 &0 & 0 & 0 & 24 & $\atyp{0,2} \oplus \slPr(0)$\\
  $(\frac{13}{8},\frac{13}{8})$ & 2 & 0 & 1 & 0 &0 & 0 & 0 & 24 & $\atyp{0,2} \oplus \slPr(0)$\\
 $(\frac{7}{32},\frac{39}{32})$ & 0 & 0 & 0 & 1 &0 & 0 & 0 & 24 & $\atyp{-\frac{1}{2},\frac{3}{2}} \oplus \atyp{\frac{1}{2},\frac{3}{2}}$\\
 $(\frac{39}{32},\frac{7}{32})$ & 0 & 0 & 0 & 1 &0 & 0 & 0 & 24 & $\atyp{-\frac{1}{2},\frac{3}{2}} \oplus \atyp{\frac{1}{2},\frac{3}{2}}$\\
 $(\frac{39}{32},\frac{39}{32})$ & 0 & 0 & 0 & 2 &0 & 0 & 0 & 48 & $2\atyp{-\frac{1}{2},\frac{3}{2}} \oplus 2\atyp{\frac{1}{2},\frac{3}{2}}$\\
  $(\frac{35}{24},\frac{35}{24})$ & 0 & 0 & 0 & 0 &1 & 0 & 0 & 112 & eq.~\eqref{eqsl21_3leg} \\
  $(2,0)$ & 2 & 1 & 1 & 0 &0 & 0 & 0 & 32 & $\atyp{0,1} \oplus \atyp{0,2} \oplus \slPr(0)$\\
  $(0,2)$ & 2 & 1 & 1 & 0 &0 & 0 & 0 & 32 & $\atyp{0,1} \oplus \atyp{0,2} \oplus \slPr(0)$\\
  $(2,1)$ & 1 & 4 & 2 & 0 &0 & 1 & 0 & 152 & eq.~\eqref{eqsl21_21} \\
  $(1,2)$ & 1 & 4 & 2 & 0 &0 & 0 & 1 & 152 & eq.~\eqref{eqsl21_21}\\
  $(2,2)$ & 6 & 6 & 6 & 0 &0 & 1 & 1 & 336 & eq.~\eqref{eqsl21_22}\\

   \hline

\end{tabular}
\end{center}
\caption{Operator content up to $(h,\bar{h})=(2,2)$. We show the multiplicities of each fields in the various tilting modules, together with the total multiplicity with which they appear in the Hilbert space. The way they transform with respect to $s\ell(2|1)$ is also given in the last column.  }
  \label{tab_opcontent}
\end{table}

Before turning to the indecomposable structure of the full LCFT, we describe the operator content of our theory up to level $(h,\bar{h})=(2,2)$~\cite{ReadSaleur01}, and analyze the multiplicities with respect to the $s\ell(2|1)$ supersymmetry, using the analysis of section~\ref{secsl21resuts}. The results are gathered in Tab.~\ref{tab_opcontent}.

First of all, the groundstate $(h,\bar{h})=(0,0)$ is non-degenerate, and transforms trivially under $s\ell(2|1)$. We also find $8$ Noether currents $(h,\bar{h})=(1,0)$ living in the adjoint representation $\atyp{0,1}$ of $s\ell(2|1)$, as expected. An important point is that these currents {\em do not} generate an affine Lie superalgebra~\cite{ReadSaleur01}, as in that case, we would get $64$ weight $(1,1)$ states. Instead, the multiplicity of these $(1,1)$ fields turn out to be $16$, and they  form two adjoint representations.

The one-hull operators ${\cal O}_1$, with conformal weights $(h,\bar{h})=(\frac{1}{8},\frac{1}{8})$, form an adjoint representation as well, so they appear with multiplicity $8$. More interesting are the two-hulls operators ${\cal O}_2$ with $(h,\bar{h})=(\frac{5}{8},\frac{5}{8})$ as one of these fields is the logarithmic partner of the energy operator -- the relevant thermal perturbation of our critical theory. These fields appear twice in $\TilJTL{0}{\q^2}$ and once in $\TilJTL{2}{1}$, they therefore transform as $24=\atyp{0,2} \oplus \slPr(0)$. The energy field lives at the bottom of the indecomposable module $\slPr(0)$, it is invariant under $s\ell(2|1)$.

The three-hull operators ${\cal O}_3$ with conformal weights $(h,\bar{h})=(\frac{35}{24},\frac{35}{24})$ appear once in the standard module $\StJTL{3}{1}$ -- or in the tilting module $\TilJTL{3}{1}$. They thus come with a multiplicity $112$ in the full Hilbert space our theory, and transform as
\begin{equation}\label{eqsl21_3leg}
112 = \atyp{0,3} \oplus \atyp{0,2} \oplus \atyp{0,1} \oplus \atyp{\pm\half,\ffrac{3}{2}} \oplus \atyp{\pm1,2} \oplus P(0),
\end{equation}
under $s\ell(2|1)$.

Of particular interest are the fields with conformal weights $(h,\bar{h})=(2,0)$, as they include for example the stress energy tensor $T(z)$, and its logarithmic partner $t(z,\bar{z})$. There are $32$ fields with such conformal weights in the spectrum, and they transform according to $\atyp{0,1} \oplus \atyp{0,2} \oplus \slPr(0)$ under $s\ell(2|1)$. The piece $\atyp{0,1}$ corresponds to descendants of the currents so we will discard them in the following. Meanwhile, the fields $T$ and $t$ live at the bottom and at the top, respectively, of the projective cover $\slPr(0)$
\begin{equation}
   \xymatrix@R=19pt@C=16pt
   {
     &{ \{0\}}\ar@{}|{\substack{}\kern-7pt}[]+<-30pt,20pt>\ar[dr]\ar[dl]&\\
     {\{{1\over 2}\}_-}\ar@{}|{\substack{}\kern-7pt}[]+<-60pt,15pt>\ar[dr]
     &&{\{{1\over 2}\}_+}\ar@{}|{\substack{}\kern-7pt}[]+<50pt,15pt>\ar[dl]\\
     &{ \{0\}}\ar@{}|{\substack{}\kern-7pt}[]+<-60pt,-15pt> &\\
     }
\end{equation}

We also find $152$ fields with conformal weights $(h,\bar{h})=(2,1)$ (resp. $(h,\bar{h})=(1,2)$), living in the modules $\TilJTL{0}{\q^2}$, $\TilJTL{1}{1}$, $\TilJTL{2}{1}$ and $\TilJTL{3}{\q^2}$ (resp. $\TilJTL{3}{\q^{-2}}$). With respect to the $s\ell(2|1)$ supersymmetry, they transform as
\begin{equation}\label{eqsl21_21}
152=4 \atyp{0,1} \oplus 2 \atyp{0,2} \oplus \atyp{\pm \ffrac{1}{2},\ffrac{3}{2}}\oplus \atyp{\pm \ffrac{1}{2},\ffrac{5}{2}} \oplus  \slPr(\ffrac{1}{2})_{\pm}.
\end{equation}
and one can check that the  dimensions match since  $4\times 8 + 2 \times 16 + 2 \times 12+ 2 \times 20 +  2 \times 12=152$. Finally, we find that the fields with conformal weights $(h,\bar{h})=(2,2)$ -- including for example the field $T\bar{T}$ -- transform as
\begin{equation}\label{eqsl21_22}
336=6 \atyp{0,1} \oplus 6 \atyp{0,2} \oplus 2 \atyp{\pm \ffrac{1}{2},\ffrac{3}{2}}\oplus 2 \atyp{\pm \ffrac{1}{2},\ffrac{5}{2}} \oplus 2 \slPr(0) \oplus 2 \slPr(\ffrac{1}{2})_{\pm},
\end{equation}
where one can check similarly that $6\times 8 + 6 \times 16 + 2 \times 2 \times 12+ 2 \times 2 \times 20 + 2 \times 8 + 2 \times 2 \times 12=336$ indeed.

\section{Content of indecomposable tilting modules: the full LCFT}
In the previous section, we discussed  left-right Virasoro content in the scaling limit of simple JTL modules and we learned that the simples correspond to (a direct sum of) indecomposable Virasoro modules. We also learned that (reducible) quotients $\bStJTL{j}{P}$ of JTL standard modules are filtered in the limit by products (of quotients or submodules) of Feigin--Fuchs modules and they have quite simple Virasoro character expressions~\eqref{Tr-Kac-gen}. To proceed, it  is thus useful to express the structure of tilting modules in the scaling limit in terms of the quotient modules $\bStJTL{j}{P}$.
 We begin with the structure of the  tilting module $\TilJTL{2}{1}$ presented in Fig.~\ref{Tilt21-limit}.
\begin{figure}
\begin{equation*}
   \xymatrix@R=28pt@C=24pt
   {
     {\dots}\ar[d]\ar[dr]\ar[drrrrrrr]
     &&&&&&&&{\dots}\ar[d]\ar[dl]\ar[dlllllll]\\
     {\dots}\ar[d]\ar[dr]&{\bullet}\ar@{}|{\substack{\bStJTL{6}{\q^2}}\kern-7pt}[]+<-35pt,0pt>\ar[d]\ar[dr]\ar[drrrrr]
     &&&&&&{\bullet}\ar@{}|{\substack{\bStJTL{6}{\q^{-2}}}\kern-7pt}[]+<35pt,0pt>\ar[d]\ar[dl]\ar[dlllll]
     &{\dots}\ar[d]\ar[dl]\\
     {\dots}\ar[d]\ar[dr]
     &{\bullet}\ar@{}|{\substack{\bStJTL{7}{1}}\kern-7pt}[]+<-35pt,0pt>\ar[dr]\ar[d]
     &{\square}\ar@{}|{\substack{\bStJTL{4}{1}}\kern-7pt}[]+<-35pt,0pt>\ar[d]\ar[dr]\ar[drrr]&
     &&&{\square}\ar@{}|{\substack{\bStJTL{5}{1}}\kern-7pt}[]+<30pt,0pt>\ar[d]\ar[dl]\ar[dlll]
     &{\bullet}\ar@{}|{\substack{\bStJTL{8}{1}}\kern-7pt}[]+<30pt,0pt>\ar[dl]\ar[d]
     &{\dots}\ar[d]\ar[dl]\\
     {\dots}
     &{\bullet}\ar@{}|{\substack{\bStJTL{9}{\q^2}}\kern-7pt}[]+<-35pt,0pt>\ar[dr]\ar[d]
     &{\square}\ar@{}|{\substack{\bStJTL{6}{\q^2}}\kern-7pt}[]+<-40pt,0pt>\ar[dr]\ar[d]
     &{\circ}\ar@{}|{\substack{\bStJTL{3}{\q^2}}\kern-7pt}[]+<-35pt,0pt>\ar[dr]\ar[d]
     &&{\circ}\ar@{}|{\substack{\bStJTL{3}{\q^{-2}}}\kern-7pt}[]+<30pt,-5pt>\ar[dl]\ar[d]
     &{\square}\ar@{}|{\substack{\bStJTL{6}{\q^{-2}}}\kern-7pt}[]+<35pt,0pt>\ar[dl]\ar[d]
     &{\square}\ar@{}|{\substack{\bStJTL{9}{\q^{-2}}}\kern-7pt}[]+<35pt,0pt>\ar[dl]\ar[d]
     &{\dots}\\
     &{\dots}
     &{\square}\ar@{}|{\substack{\bStJTL{7}{1}}\kern-7pt}[]+<-35pt,0pt>\ar[dr]\ar[d]
     &{\circ}\ar@{}|{\substack{\bStJTL{4}{1}}\kern-7pt}[]+<-35pt,0pt>\ar[dr]\ar[d]
     &{\bullet}\ar@{}|{\substack{\bStJTL{2}{1}}\kern-7pt}[]+<-35pt,5pt>\ar[d]
     &{\circ}\ar@{}|{\substack{\bStJTL{5}{1}}\kern-7pt}[]+<25pt,0pt>\ar[dl]\ar[d]
     &{\square}\ar@{}|{\substack{\bStJTL{8}{1}}\kern-7pt}[]+<35pt,0pt>\ar[dl]\ar[d]
     &{\dots}&\\
     &&{\dots}
     &{\circ}\ar@{}|{\substack{\bStJTL{6}{\q^2}}\kern-7pt}[]+<-35pt,0pt>\ar[dr]\ar[d]
     &{\bullet}\ar@{}|{\substack{\bStJTL{3}{\q^{-2}}}\kern-7pt}[]+<-40pt,5pt>\ar[d]
     &{\circ}\ar@{}|{\substack{\bStJTL{6}{\q^{-2}}}\kern-7pt}[]+<30pt,0pt>\ar[dl]\ar[d]
     &{\dots}&&\\
     &&&{\dots}
     &{\bullet}\ar@{}|{\substack{\bStJTL{5}{1}}\kern-7pt}[]+<-35pt,0pt>\ar[d]
     &{\dots}&&&\\
     &&&&{\dots} &&&&
     }
\end{equation*}
\caption{The structure of the  tilting module $\TilJTL{2}{1}$ in the scaling limit, where the character of each $\bStJTL{j}{P}$ is given in~\eqref{Tr-Kac-gen}.}\label{Tilt21-limit}
\end{figure}
For simplicity and readability of the diagram, we do not show all the arrows --  only the minimum number needed for consistency (for example, there might also be  arrows like $\stackrel{\bStJTL{6}{\q^2}}{\square}\longrightarrow\stackrel{\bStJTL{5}{1}}{\circ}$ and $\stackrel{\bStJTL{8}{1}}{\bullet}\longrightarrow\stackrel{\bStJTL{6}{\q^{2}}}{\square}$, and so on). In other words, one can translate all arrows on the top surface of the ``cube'' in Fig.~\ref{Tilt21-limit} down along the lattice to recover all the arrows for the limit of $\TilJTL{2}{1}$. The structure for $\TilJTL{1}{1}$ is obtained by replacing the central line in the diagram by $\bStJTL{1}{1}\to\bStJTL{3}{\q^{2}}\to\bStJTL{4}{1}\to\dots$.
All the other tilting modules $\TilJTL{j}{P}$ with the relations $(j,P)\pomore(2,1)$ can be similarly presented by the ``cubic'' diagrams.

Finally, the structure of the vacuum tilting module $\TilJTL{0}{\q^2}$ in the limit in terms of $\bStJTL{j}{P}$ looks slightly different and it can be obtained using the cell filtration in Fig.~\ref{fig:tilt-JTL-vac} and using description of each of $\StJTL{j}{P}$ in terms of quotients $\bStJTL{j'}{P'}$. As this vacuum tilting module is very important for applications, we give below a more detailed analysis of its structure in terms of irreducible Virasoro algebra subquotients.

\subsection{The indecomposable tilting module for the stress tensor}\label{sec:indextiltmod}
We now give a more detailed analysis for the vacuum tilting module $\TilJTL{0}{\q^2}$ in the scaling limit $N\to\infty$. The subquotients structure in terms of JTL simples can be obtained in this limit by  continuing the ``Eiffel tower'' diagram from Fig.~\ref{fig:tilt-N18} where the general pattern is quite clear: one should simply continue ``ladders'' on the left and right parts of the diagram without restrictions on $N$.  Of course, each JTL simple in the limit is a direct sum of complicated indecomposables over the left-right Virasoro algebra $\VirN$ at $c=0$ (as we just discussed previously), and it is actually very hard to give a full picture. Nevertheless, to describe the field content for the first few excited levels it is enough to consider only the ``kernel'' part of the vacuum tilting module from Fig.~\ref{fig:tilt-N18}. We depict this part in terms of JTL simples as
\begin{equation}\label{kernel-tilt}
        \xymatrix@R=24pt@C=20pt
   {&{\bullet}\ar@{}|{\substack{\IrJTL{2}{1}}\kern-7pt}[]+<-20pt,20pt>\ar[dl]\ar[d]\ar[dr]&\\
     {\bullet}\ar@{}|{\substack{\IrJTL{3}{\q^2}}\kern-7pt}[]+<-40pt,15pt>\ar[dr]
     &{\circ}\ar@{}|{\substack{\IrJTL{0}{\q^2}}\kern-7pt}[]+<-35pt,0pt>\ar[d]
     &{\bullet}\ar@{}|{\substack{\IrJTL{3}{\q^{-2}}}\kern-7pt}[]+<30pt,15pt>\ar[dl]\\
      &{\circ}\ar@{}|{\substack{\IrJTL{2}{1}}\kern-7pt}[]+<-35pt,0pt>&
           }
\end{equation}
It is important to note that there are no arrows in $\TilJTL{0}{\q^2}$ coming to $\stackrel{\IrJTL{2}{1}}{\bullet}$ and no arrows going from the bottom $\stackrel{\IrJTL{2}{1}}{\circ}$, as we learned from the lattice analysis, for any $N$, and thus it should be true in the limit. We also note that the full field content up to the level $(2,2)$ is contained in this kernel part.

Now, the idea is to use $\VirN$-module diagrams for irreducible JTL subquotients and compose these pieces into a crucial part of the vacuum tilting module -- the part that would contain energy momentum tensors and all their logarithmic partners and all the necessary $\VirN$ suquotients that admit the indecomposability parameters $b=\bar{b}=-5$ measured in~\cite{VGJSletter}. We recall that the $\VirN$ structure of these pieces (scaling limit of JTL simples) was discussed in Sec.~\ref{sec:lrVir-simple}.
Combining the diagrams for subquotients in~\eqref{kernel-tilt} (represented in ~\eqref{VirN-21} and~\eqref{VirN-3q}) as $\VirN$-modules, joining nodes from different pieces\footnote{Doing this one should keep the rule that action of $\mathfrak{vir}$ can connect only a node $(h,\bar{h})$ with $(h',\bar{h})$, that is it should commute with $\overline{\mathfrak{vir}}$, and vice versa for
$\overline{\mathfrak{vir}}$. There are also obvious restrictions on possible values of these $h$ and $h'$ coming from the structure of Verma Virasoro modules.}   we obtain in Fig.~\ref{Vir-tilt} the subquotient structure of a part (of the whole vacuum tilting module) that we call \textit{the physically crucial part}. We empasize that a $\VirN$-module corresponding to this diagram is not a submodule or a quotient of the scaling limit of $\TilJTL{0}{\q^2}$ but it is an (non-irreducible) \textit{self-dual} $\VirN$-subquotient in the full vacuum tilting module. Therefore, in the full picture, there should be additional arrows coming from above to and going out of this diagram. We do not draw them for simplicity.

We also note that Ridout's ``Ockham's razor'' for a non-chiral staggered  $\VirN$-module~\cite{Ridout} with indecomposability parameters $b=\bar{b}=-5$  is only a submodule in Fig.~\ref{Vir-tilt}. This submodule is generated by states from the two nodes $(2,0)$ and $(0,2)$ at the fourth layer if counted from below (second layer counted from the top).

The structure for the ``kernel'' part of the vacuum tilting module involving fields of dimensions up to $(2,2)$ deserves a more detailed analysis. This structure is described diagrammatically on Fig.~\ref{fig:vac-tilt-22}.
We note that at our level of analysis it is hard to state whether this part of the vacuum module is indecomposable or not. Most probably the  rightmost subquotient $(2,2)$ and its descendants like $(2,7)$, {\it etc.}, are decoupling from the vacuum module. After glueing  two indecomposables, the final module can  in principle be a direct sum of non-trivial indecomposables (even infinitely many of them if we talk about Virasoro algebra modules).  Arrows coming in and out of  this node $(2,2)$ are thus depicted in a dotted style. Recall that $T$ and $\bar{T}$ denote states corresponding to chiral and anti-chiral energy momentum tensors while $t$ and $\bar{t}$  are their logarithmic partners, respectively. The state $\psi$ is for the descendant $A \bar{t}=\bar{A}t$, with the operator $A=L_{-2} -\frac{3}{2}L_{-1}^2$.
Following this diagram we thus expect a Jordan cell of rank $2$ for the fields in $(0,2)$ and $(2,0)$, and of rank $3$ for those from $(2,2)$. In particular, the field $T\bar{T}$ should be involved into a Jordan cell of rank~$3$. This will be discussed in more detail in a subsequent paper.

\subsection{Higher rank Jordan cells for $L_0 + \bar{L}_0$}\label{sec:conjrank}
Here, we discuss Jordan cells of ranks higher than $3$. These cells involve fields of higher conformal dimensions from the vacuum tilting module.
Recall first the structure of $\TilJTL{0}{\q^2}$ with irreducible subquotients in Fig.~\ref{fig:tilt-N18} and its continuation to $N\to\infty$. It  has infinite `ladders' on the left and right sides of the corresponding ``Eiffel tower''. Using this diagram and the finite lattice analysis in Sec.~\ref{sec:Jcells} on the structure of Hamiltonian's Jordan cells (see in particular the discussion at the end of Sec.~\ref{sec:N8}), we can conjecture lower bounds (of the maximally\footnote{there might be states of non-integer conformal dimensions involved in Jordan cells of a rank less than the bound.} possible rank in a given subquotient)
for Jordan cells of the Hamiltonian $L_0 + \bar{L}_0$ in the corresponding bulk LCFT. These  bounds are given by the number of  appearance of the subquotients $\IrJTL{j}{P}$ on the left or right part. And note that this number is finite for a fixed $j$ and was already computed from the lattice   (since  it  stabilizes at large enough $N$) at   the end of Sec.~\ref{sec:vac-tilt-gen}.

 We also mentioned above that due to additional degenerations in the Hamiltonian's eigenvalues at $N\to\infty$ we have additional Jordan cells of rank $2$ in the scaling limit of JTL simples, at least for those from the doubly critical class. Actually, using rather natural assumptions  we were able to see these additional Jordan cells directly in the scaling limit for states having integer conformal weights. We also saw in Fig.~\ref{fig:vac-tilt-22} that the two Jordan cells are combined into a Jordan cell of maximum rank $3$ and not $4$. This happens because the  bottom part of the upper cell is at the same (Loewy) layer as the top of the lower Jordan cell and they thus can not be connected by $\VirN$ action.  So, we obtain that  the maximum rank of Jordan cells for states from $\IrrJTL{2}{1}$ is $3$. Similarly, we expect that states from $\IrrJTL{3}{\q^\pm2}$ in $\TilJTL{0}{\q^2}$ are in rank $2$ maximum (and not $1$, as one could expect using only the lattice analysis), states from $\IrrJTL{4}{1}$ are in rank $3$ (and not $2$), states from $\IrrJTL{6}{\q^\pm2}$  are in rank $4$ maximum (and not $3$), etc. In other words, we expect the ranks of the lattice Hamiltonian Jordan cells
 discussed in Sec.~\ref{sec:vac-tilt-gen} (see the results~\eqref{eq:min rank-1} and  below~\eqref{eq:min rank-1}) to be increased by one in the scaling limit, giving rise to the following conjecture on the rank of Jordan cells for $L_0 + \bar{L}_0$ in the vacuum tilting module.

\begin{conj}In the scaling limit of JTL tilting modules $\TilJTL{0}{\q^2}$, states from (limits of) irreducible  $\IrrJTL{j}{P}$ subquotients
should be involved in Jordan cells of the Hamiltonian $L_0 + \bar{L}_0$ of maximum rank given
by the following number:
\begin{equation}\label{conj-conf-rank}
\text{max rank of}\, L_0 + \bar{L}_0\; \text{in}\; \TilJTL{0}{\q^2} =
\begin{cases}
3,& \qquad j=2, P=1,\\
2 \bigl\lceil \frac{j-2}{3}  \bigr\rceil,& \qquad j\bmod 3 =0 \; \text{and}\; P=\q^{\pm2}, j>0\\
2 \bigl\lceil \frac{j-2}{3}\bigr\rceil +1,& \qquad j\bmod 3 = 1 \; \text{or}\; 2 \; \text{and}\; j>2 \; \text{and}\; P=1.\\
\end{cases}
\end{equation}
\end{conj}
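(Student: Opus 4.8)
The plan is to split the claim into a purely lattice-combinatorial statement and a scaling-limit enhancement, and then combine them. Concretely, I would first establish, for fixed $j$ and $P$, the number $\ell_{j,P}$ of distinct Loewy layers (horizontal levels) in which the simple subquotient $\IrrJTL{j}{P}$ occurs on one side (say the left ladder) of the stabilized ``Eiffel tower'' diagram of $\TilJTL{0}{\q^2}$; by the discussion in Sec.~\ref{sec:Jcells} this $\ell_{j,P}$ is a lower bound for the rank of the lattice Hamiltonian $H=\sum_i e_i$ on the corresponding states. The enhancement then comes from the fact, argued in Sec.~\ref{sec:lrVir-simple}, that in the limit $N\to\infty$ each $\IrrJTL{j}{P}$ is itself a direct sum of indecomposable $\VirN$-modules carrying a rank-$2$ Jordan cell for $L_0+\bar{L}_0$ among its integer-weight states; matching these internal cells to the lattice layers produces the extra unit in the statement.

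For the lattice count I would use the fact, recorded at the end of Sec.~\ref{sec:vac-tilt-gen}, that the cell-module diagram of $\TilJTL{0}{\q^2}$ (Fig.~\ref{fig:tilt-JTL-vac}) differs from that of $\TilJTL{2}{1}$ only by the absence of the two $\StJTL{3}{\q^{\pm2}}$ nodes, so that every subquotient $\IrrJTL{j'}{P'}$ with $j'>3$ sits in exactly one fewer Loewy layer than it does in $\TilJTL{2(1)}{1}$. Since the $\TilJTL{2(1)}{1}$ layer count is precisely the right-hand side of~\eqref{eq:min rank-1}, this gives $\ell_{j,P}$ equal to~\eqref{eq:min rank-1} minus one for $j>3$, together with the boundary values $\ell_{2,1}=2$ and $\ell_{3,\q^{\pm2}}=1$ read directly off~\eqref{kernel-tilt}. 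The combinatorics here is routine once the stabilization of the diagram pattern (the growing left/right ladders, with the $\frac{N}{2}\bmod 3=1$ exception noted in the footnote) is granted.

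The scaling-limit step is where the real content lies. Using the $\VirN$ structures~\eqref{VirN-21} and~\eqref{VirN-3q} of the simples appearing in the kernel part~\eqref{kernel-tilt}, together with the Feigin--Fuchs (rather than Verma) pattern of arrows forced by consistency with $\bStJTL{2}{1}:\IrrJTL{2}{1}\to\IrrJTL{3}{\q^2}$, I would glue the diamonds layer by layer exactly as in Fig.~\ref{Vir-tilt}, and verify that each lattice layer acquires one additional integer-weight generalized eigenstate of $L_0+\bar{L}_0$. The crucial local check --- already visible in Fig.~\ref{fig:vac-tilt-22} --- is that two adjacent internal rank-$2$ cells combine into a single rank-$3$ cell rather than a rank-$4$ one, because the socle of the upper cell and the top of the lower cell lie in the \emph{same} Loewy layer and hence cannot be linked by a further $\VirN$ action. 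Propagating this ``overlap-by-one'' rule up the tower converts $\ell_{j,P}$ into $\ell_{j,P}+1$, which is exactly the right-hand side of~\eqref{conj-conf-rank}.

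The main obstacle, and the reason this is stated as a conjecture, is that each of the three supporting inputs is itself only established modulo an assumption: that the spin chain is a \emph{full} tilting module (so that the Eiffel-tower diagram is correct for all $N$), that the generic-$\q$ decomposition~\eqref{vac-mod-gen} continues to $c=0$ with the Feigin--Fuchs identifications, and that the counted degeneracies actually produce Jordan cells of the maximal rank rather than splitting. The last point is genuinely delicate: as the cited result~\cite{MDSA} shows, even standard modules can carry unexpected cells, so the layer count provides only a lower bound, and upgrading it to the exact maximal rank would require controlling the off-diagonal $\VirN$-action between diamonds --- precisely the arrows we have left undrawn. A rigorous argument would presumably need an explicit inductive embedding $\JTL{N}\hookrightarrow\JTL{N+1}$ of the kind mentioned for the faithfulness and full-tilting questions, together with an independent computation of the relevant indecomposability parameters to certify non-vanishing of the nilpotent part.
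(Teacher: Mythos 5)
Your proposal reproduces the paper's own reasoning for this conjecture essentially verbatim: the Loewy-layer count in the vacuum ``Eiffel tower'' (equal to \eqref{eq:min rank-1} minus one for $j>3$, with the boundary values $2$ and $1$ for $[2,1]$ and $[3,\q^{\pm2}]$), enhanced by exactly one unit in the scaling limit because the internal rank-$2$ cells of the doubly-critical JTL simples overlap adjacent layers by one, as in Fig.~\ref{fig:vac-tilt-22}. You also flag the same unproven inputs (the full-tilting assumption, the continuation of the generic decomposition with Feigin--Fuchs identifications, and the lower-bound-versus-maximal-rank issue) that the paper itself gives as the reason the statement remains a conjecture.
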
 \label{conjrank}
Note that the statement here is about {\sl maximum} rank because there are states of non-integer conformal dimensions in $\IrrJTL{j}{P}$ subquotients which might be in cells of lower rank. The rank for those states in the vacuum tilting module that have integer conformal dimensions is expected to be given precisely by~\eqref{conj-conf-rank}.

Since  the $\VirN$ characters of the scaling limit of the JTL simples $\IrrJTL{j}{P}$ are known (Sec.~\ref{sec:char-JTLsimples} and  App~\ref{app:simples}), this conjecture gives
possible values of  ranks of Jordan cells involving corresponding fields of conformal weights $(h,\bar{h})$.

Note finally, that similar results on ranks of Jordan cells can be formulated for other tilting modules. For example, states from (limits of) $\IrrJTL{j}{P}$ subquotients in $\TilJTL{2}{1}$, for $j>2$, should be involved in Jordan cells of maximum rank greater given by those  in~\eqref{conj-conf-rank} plus one.

\section{Conclusion}

The first obvious conclusion is that the situation seems more complicated than one may have expected. If the lattice analysis makes the understanding of titling modules possible, the fact that JTL simples do not correspond to direct sums of $\VirN$ simples forces a very delicate discussion, and a proliferation of arrows of   doubtful use. We believe this simply means simply that $\VirN$ is not the proper object to analyze the continuum limit of our spin chain - and probably LCFTs in general. The proper object should be the full scaling limit of the JTL algebra, which contains $\VirN$, but extends it, giving rise to what we called in~\cite{GRS3} the interchiral algebra.  We did discuss this algebra in the case of $\gl(1|1)$, showing that it was generated by the additional inclusion of the field $\Phi_{2,1}\bar{\Phi}_{2,1}$ of weights $h=\bar{h}=1$.
We expect that an analogous interchiral algebra appears in the scaling limit of $\rJTL{}(1)$ represented in the $s\ell(2|1)$ spin chain, and is probably generated by the field  $\Phi_{2,1}\bar{\Phi}_{2,1}$ again. Note however that now this field has non integer dimensions $h=\bar{h}={5\over 8}$. It is likely that   each simple JTL module goes over, in the scaling limit, to a simple module over this interchiral algebra, and that the analysis in terms of these modules simplifies considerably.  We leave the corresponding discussion for a subsequent paper however. Our algebraic results also have interesting physical consequences: for instance, the field $\bar{T}T$ with conformal weights $(2,2)$ is found to lie at the bottom of a Jordan cell of rank $3$. It would be very interesting to understand how this Jordan cell arises in the $c \to 0$ limit of operator product expansions.

We also would like to briefly comment that the LCFT obtained as the scaling limit of the $s\ell(2|1)$ spin chain differs fundamentally from previous proposals at $c=0$: in both \cite{Flohr-Lohmann} and \cite{GabRunW}, the vacuum indeed appears with multiplicity greater than one. Moreover, a detailed analysis of the conformal weights and their multiplicities shows that the operator contents of these proposed theories are not at all compatible with the one we obtained here. It is not clear to us whether there might be other lattice models whose scaling limit would correspond to the theories in these references, of whether the corresponding LCFTs are really consistent\footnote{In fact, the  bulk theory  in \cite{GabRunW} has three ground states and not one as we have. Further, the one ground state which one would like to call the ``vacuum'' because the energy momentum tensor is its descendant is not really an $\mathrm{SL}(2,{\mathbb C})$-invariant vacuum state.}.

It is probably useful to reiterate here that the only modular invariant our theory is associated with is the trivial invariant $Z=1$. See the discussion in Sec. 2 of this paper.

There are certainly many aspects we did not discuss much. Among these is the centralizer and the corresponding bimodule structure. While this played a crucial role in our analysis of the $\gl(1|1)$ spin chain, it turned out to be not so important here since the $s\ell(2|1)$ provides a faithful representation of the JTL algebra, and other tools could then be used. But the nature of the centralizer (the `symmetry' of the theory) in the $s\ell(2|1)$ spin chain and what it becomes in the continuum limit remains to be understood.

We also note that, according to our analysis,  indecomposable tilting module $\TilJTL{j}{P}$ can be  considered in the scaling limit, under the algebra  $\VirN$, as a (complicated) glueing of  an infinite number of Feigin--Fuchs modules.
This is reminiscent of the construction in   \cite{FFHST}, where full LCFTs are obtained as   glueings of many copies of free-boson theories via the introduction of extra zero modes. In general, understanding the ``naturalness" of the (scaling limit of) the tilting modules and how to obtain them in terms of some free field representation would be, we believe, a great progress. This would probably require a thorough numerical examination.

\medskip

\noindent {\bf Acknowledgments:} We thank J.L. Jacobsen, D. Ridout, V. Schomerus, I.Yu. Tipunin and O.~Mathieu for stimulating discussions. This work was supported by the Quantum Materials program of LBNL (R.V), a Marie-Curie IIF fellowship, a Humboldt fellowship and RFBR-grant 13-01-00386 (A.M.G) and the NSF grant no. DMR-1005895 (N.R.). A.M.G. wishes to thank the Simons Center for Geometry and Physics for hospitality in 2013 and the Physics Department of Yale University (2012) where a part of this work was done. R.V. wishes to thank the University of Southern California and H.~Saleur for hospitality and support through the US Department of Energy (grant number DE-FG03-01ER45908). The authors are also grateful to the organizers of the ACFTA program at the Institut Henri Poincar\'e in Paris, where part of this work was undertaken.

\appendix

\section{The Lie superalgebra $s\ell(2|1)$ and some of its representations}
\label{appSl21}

In this appendix, we gather some well-known results about the Lie superalgebra $s\ell(2|1)$.
In particular, we give an explicit Fock space formulation of the fundamental and dual representations
used to construct our Temperley-Lieb spin chain. We also recall some properties of several finite-dimensional
representations that shall be used throughout this paper. We follow here~\cite{sl21rep,superDic}.

\subsection{The Lie superalgebra $s\ell(2|1)$}

We define the Lie superalgebra $\mathfrak{g} = s\ell(2|1)$ by the commutation relations of its 8 generators. Its bosonic part is  $\mathfrak{g} ^{0} = \mathfrak{u}(1) \oplus \mathfrak{sl}(2)$, that is

\begin{equation}
\displaystyle{ [B, Q^{\pm}] = [B, Q^{z}] = 0, }
\end{equation}
\begin{equation}
\displaystyle{ [Q^{+}, Q^{-}] = 2 Q^{z}, \ \ \ \ [Q^{z}, Q^{\pm}] = \pm Q^{\pm}  }.
\end{equation}
The fermionic generators obey the simple relations
\begin{equation}
\displaystyle{ \{ F^{\pm} , F^{\mp} \} =  \{ \bar{F}^{\pm} , \bar{F}^{\mp} \}  = 0, }
\end{equation}
\begin{equation}
\displaystyle{ \{ F^{\pm} , \bar{F}^{\pm} \} = Q^{\pm},  \ \ \ \ \{ F^{\pm} , \bar{F}^{\mp} \}  = B \mp Q^{z} . }
\end{equation}
Finally, we have
\begin{equation}
\displaystyle{ [Q^z, F^{\pm}] = \pm \frac{1}{2} F^{\pm}, \ \ \ \  [Q^z, \bar{F}^{\pm}] = \pm \frac{1}{2} \bar{F}^{\pm},  }
\end{equation}
\begin{equation}
\displaystyle{ [B, F^{\pm}] = \frac{1}{2} F^{\pm}, \ \ \ \  [B, \bar{F}^{\pm}] = - \frac{1}{2} \bar{F}^{\pm},  }
\end{equation}
\begin{equation}
\displaystyle{ [Q^{\pm}, F^{\pm}] = [Q^{\pm}, \bar{F}^{\pm}] =0, [Q^{\pm}, F^{\mp}] = - F^{\pm}, [Q^{\pm}, \bar{F}^{\mp}] = \bar{F}^{\pm}. }
\end{equation}
Note that there is a subalgebra $\gl(1|1)$ spanned by the generators $\Psi^+ = F^+, \Psi^- = F^-, E = B - Q_z$ and $N = B + Q_z$.

\subsection{Fundamental and Dual representations in Fock space}

Three-dimensional representations of this superalgebra are readily obtained using creation and annihilation operators. To construct what we will refer to as fundamental representation $\Box$, we introduce two boson operators $[ b_{\sigma}, b^\dag_{\sigma'} ] = \delta_{\sigma,\sigma'}$, where $\sigma \in \{ \uparrow, \downarrow \}$, and one fermion $\{ f, f^\dag \} = 1$. The generators read\footnote{We use the same notation for the generators and their representation in the Fock space.}
\begin{equation}
\displaystyle{ B = f^\dag f + \frac{1}{2} (b^\dag_\uparrow b_\uparrow + b^\dag_\downarrow b_\downarrow),
\quad Q_z = \frac{1}{2} (b^\dag_\uparrow b_\uparrow - b^\dag_\downarrow b_\downarrow),
\quad  Q^+ = b^\dag_\uparrow b_\downarrow, \quad Q^- = b^\dag_\downarrow b_\uparrow, }
\end{equation}
\begin{equation}
\displaystyle{ F^+ = f^\dag b_\uparrow, \quad  F^- = f^\dag b_\downarrow, \quad \bar{F}^+ = b_\downarrow^\dag f, \quad \bar{F}^- = b^\dag_\uparrow f. }
\end{equation}
These generators furnish a representation of $ s\ell(2|1)$ in the space $\mathbb C^{2|1} \simeq \square \equiv \mathrm{Span} \{b_\uparrow^\dag \Ket{0}, b_\downarrow^\dag \Ket{0} , f^\dag \Ket{0}\}$. One can also construct the so-called dual representation $\bar{\square} \equiv \mathrm{Span} \{\bar{b}_\uparrow^\dag \Ket{0}, \bar{b}_\downarrow^\dag \Ket{0}, \bar{f}^\dag \Ket{0} \}$, where $[ \bar{b}_{\sigma}, \bar{b}^\dag_{\sigma'} ] = \delta_{\sigma,\sigma'}$ and $\{ \bar{f}, \bar{f}^\dag \} = -1$. The generators act as
\begin{equation}
\displaystyle{ B = \bar{f}^\dag \bar{f} - \frac{1}{2} (\bar{b}^\dag_\uparrow \bar{b}_\uparrow + \bar{b}^\dag_\downarrow \bar{b}_\downarrow), \quad Q_z = \frac{1}{2} (\bar{b}^\dag_\uparrow \bar{b}_\uparrow - \bar{b}^\dag_\downarrow \bar{b}_\downarrow), \quad Q^+ = - \bar{b}^\dag_\uparrow \bar{b}_\downarrow,  \quad Q^- = - \bar{b}^\dag_\downarrow \bar{b}_\uparrow, }
\end{equation}
\begin{equation}
\displaystyle{ F^+ = \bar{b}_\uparrow^\dag \bar{f} , \quad F^- = \bar{b}_\downarrow^\dag \bar{f}, \quad \bar{F}^+ = \bar{f}^\dag \bar{b}_\downarrow , \quad \bar{F}^- = \bar{f}^\dag \bar{b}_\uparrow. }
\end{equation}
Note also that the operator
\begin{equation}
\displaystyle{ e = (\bar{b}_\uparrow^\dag b_\downarrow^\dag + \bar{b}_\downarrow^\dag b_\uparrow^\dag + \bar{f}^\dag f^\dag) (b_\uparrow \bar{b}_\downarrow + b_\downarrow \bar{b}_\uparrow + f \bar{f}) }
\end{equation}
is the projector onto the singlet in the tensor product
representation $ \square \otimes \bar{\square} $. This expression will be used as a definition of the Temperley-Lieb generator.

\subsection{Finite dimensional representations}

We also recall some usual notations for the finite dimensional representations of $s\ell(2|1)$.
We begin with the irreducible representations.
Except for the trivial representation $\lbrace 0 \rbrace$ of dimension $1$,
the irreducible atypical representations $\lbrace j \rbrace_\pm$ are labeled
by the half-integer $j=\frac{1}{2},1, \dots$;
they have dimension $4j+1$. There are also typical representations $\lbrace b, j \rbrace$
(with dimension $8j$ and $b \neq \pm j$)
where $b$ is a $U(1)$ charge, they are also projective. When $b=\pm j$, the modules
$\lbrace \pm j, j \rbrace$ become indecomposable.
Using these notations, the fundamental and dual representations are
$\square=\lbrace \frac{1}{2} \rbrace_{+}$ and $\bar{\square}=\lbrace \frac{1}{2}\rbrace_{-}$;
and the adjoint representation is $\lbrace 0,1 \rbrace$.

We will also be interested in atypical projective covers $\slPr_{\pm}(j)$ (with dimension
$16 j +4$ for $j \neq 0$ and dimension $8$ if $j=0$).
The projective covers $\slPr_{\pm}(j)$ have the following subquotient structure (left diagram for $j \neq 0$,
whereas $\slPr(0)$ is given by the right diagram):
 \begin{equation*}
    \xymatrix@R=16pt@C=12pt
   {
     &{ \{j\}_{\pm}}\ar@{}|{\substack{}\kern-7pt}[]+<-30pt,20pt>\ar[dr]\ar[dl]&\\
     {\{j-\frac{1}{2}\}_{\pm}}\ar@{}|{\substack{}\kern-7pt}[]+<-60pt,15pt>\ar[dr]
     &&{\{j+\frac{1}{2}\}_{\pm}}\ar@{}|{\substack{}\kern-7pt}[]+<50pt,15pt>\ar[dl]\\
     &{ \{j\}_{\pm}}\ar@{}|{\substack{}\kern-7pt}[]+<-60pt,-15pt> &\\
     }
\qquad
   \xymatrix@R=16pt@C=15pt
   {
     &{ \{0\}}\ar@{}|{\substack{}\kern-7pt}[]+<-30pt,20pt>\ar[dr]\ar[dl]&\\
     {\{{1\over 2}\}_-}\ar@{}|{\substack{}\kern-7pt}[]+<-60pt,15pt>\ar[dr]
     &&{\{{1\over 2}\}_+}\ar@{}|{\substack{}\kern-7pt}[]+<50pt,15pt>\ar[dl]\\
     &{ \{0\}}\ar@{}|{\substack{}\kern-7pt}[]+<-60pt,-15pt> &\\
     }
\end{equation*}
The arrows represent here the action of the generators of $s\ell(2|1)$.
We shall not describe the tensor product of all these representations here, and
refer the interested reader to Ref.~\cite{sl21rep}. Using those results, one
can decompose the Hilbert space $\mathcal{H} = (\square \otimes \bar{\square})^{\otimes N}$
onto projective representations only (except for the fundamental).
In particular, we have $\square \otimes \bar{\square}
= \lbrace 0 \rbrace \oplus \lbrace 0,1 \rbrace$.

\section{Faithfulness of $\rJTL{N}(m)$ representations on super-spin chains}\label{sec:TL-faith}

We discuss in this appendix the more general case of periodic spin-chains with $s\ell(m+n|n)$ symmetry.
The  $\sll(m+n|n)$ super-spin chain~\cite{ReadSaleur07-1} is the tensor product $\chVv=\tensor_{j=1}^{N}\chV_j$, with $\chV_j\cong\oC^{m+n|n}$, which consists of  $N=2L$ sites
labelled $j=1,\ldots,2L$ with the fundamental representation of $\sll(m+n|n)$ on even sites and its dual on odd sites.
We choose a basis $\langle v_i, 1\leq i\leq m+2n\rangle$ in each
$\chV_j$ such that the range $1\leq i\leq m+n$ corresponds to
the grade-$0$ (bosonic) subspace and the $v_i$ with  $m+n < i\leq
m+2n$ span the fermionic subsapce.

We consider the following representation of $\JTL{2L}(m)$, which we denote by
$\repgl:\JTL{N}(m)\to\Endo_{\oC}(\chVv)$, with the matrix elements
for $e_j$ (for even $j$, at least), with $1\leq c_j,d_j\leq m+2n$,
\begin{equation}\label{ej-rep-def}
\repgl(e_j)_{c_1\dots c_j c_{j+1}\dots c_N}^{d_1\dots d_j
  d_{j+1}\dots d_N} =
  (-1)^{|c_j|}\delta(c_j,c_{j+1})\delta(d_j,d_{j+1})\prod_{k\ne
  j,j+1}\delta(c_k,d_k), \qquad 1\leq j\leq N,
\end{equation}
and with matrix elements for the translation operator $u^2$
\begin{equation}\label{transl-rep}
\repgl(u^2)_{c_1\dots c_j c_{j+1}\dots c_N}^{d_1\dots d_j
  d_{j+1}\dots d_N} =
  (-1)^{(|c_{N-1}|+|c_N|)\sum_{k=1}^{N-2}|c_k|}\delta(c_{N-1},d_{1})\delta(c_{N},d_{2})\prod_{k=1}^{N-2}\delta(c_k,d_{k+2}).
\end{equation}
Here, we use the basis in $\chVv$ spanned by the monoms $v_{c_1}\tensor
v_{c_2}\tensor\dots\tensor v_{c_N}$, with $v_{c_j}$ being the basis in each
$\chV_{j}\cong\oC^{m+n|n}$. The representation $\repgl$ is equivalent
to the one in~\cite{ReadSaleur07-1} for any $m,n\geq 0$. In particular, for $m=1$, the equation \eqref{ej-rep-def} gives the representation defined in \eqref{ej-1} and \eqref{ej-2}.

Obviously, the representation $\repgl$ defines a homomorphism of $\JTL{N}(m)$ to (a representation
of) the Brauer algebra. The sector of affine diagrams with $N$ through lines
is spanned by $u^{2j}$ (with the relation $u^N=1$) and the
corresponding image in the Brauer algebra is spanned by `planar'
diagrams, where $2j$ through-lines on the right (of the fundamental rectangle) are going from the right to
the left-top crossing the other lines and thus picking up a sign
factor $(-1)^{\sum_{l=0}^{2j-1}|c_{N-l}|\sum_{k=1}^{N-2j}|c_k|}$ as in~\eqref{transl-rep}.
Modulo multiplication by appropriate even powers of $u$, linearly independent affine diagrams in the ideal of $\JTL{N}(m)$ with number of through lines less than
$N$ contain only arcs and through lines connecting the inner boundary
with the outer one and without `winding' around the annulus (this
guarantees that the corresponding diagram in the Brauer algebra have no intersecting
through lines).  Therefore, any such
diagram, sandwiched between appropriate even powers of $u$, is represented by a matrix with elements obtained by the following rules:
\begin{enumerate}
\item
each through line connecting the $j$th site on the inner boundary with
the $k$th site on the outer is replaced by the multiplier
$\delta(c_j,d_k)$;
\item each arc
connecting the $j$th and $k$th sites on the inner boundary is replaced by the multiplier
$(-1)^{|c_j|}\delta(c_j,c_k)$;
\item each arc
connecting the $j$th and $k$th sites on the outer boundary  is replaced by
$\delta(d_j,d_k)$.
\end{enumerate}

The representation $\repgl$ is therefore non-faithful by trivial reasons --
the kernel of $\repgl$ contains affine diagrams in the ideal $J_0$
(without through lines) that are not planar, i.e. the image $\repgl(J_0)$ has the dimension
$\bigr(\binom{2L}{L}-\binom{2L}{L-1}\bigl)^2$ and is generated by (the
representation of) the diagrams that are in bijection with
the usual Temperley--Lieb diagrams without through lines (which
represented faithfully~\cite{ReadSaleur07-1} for $n\geq1$ or $m>1$ and $n=0$).
We next give a proof that non-planar diagrams in $J_0\subset \JTL{N}(m)$ exhaust the kernel of
$\repgl$, for $m,n>0$ or $n>1$. The case $m=0$ and $n=1$ is known to be
highly non-faithful.

\begin{Thm}
For $m,n>0$, the representation $\repgl:\JTL{N}(m)\to\Endo_{\oC}(\chVv)$ defined
in~\eqref{ej-rep-def}-\eqref{transl-rep} has  kernel spanned by
non-planar affine diagrams with zero through-lines.
\end{Thm}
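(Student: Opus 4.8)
The plan is to prove the statement in two movements: first to reduce the problem, ``up to the degeneracy already identified,'' to an injectivity statement over the quotient $\rJTL{N}(m)$, and then to establish that injectivity by a direct linear-independence argument on the images of diagrams, using the explicit matrix elements \eqref{ej-rep-def}--\eqref{transl-rep}. For the first movement, recall from the discussion preceding the theorem that the rules computing $\repgl$ on a zero-through-line diagram depend only on the connectivity of the arcs and not on how they wind around the annulus; hence $\repgl$ takes equal values on any two diagrams with the same $\psi$-image, and therefore factors as $\repgl=\bar\repgl\circ\psi$ for a representation $\bar\repgl\colon\rJTL{N}(m)\to\Endo_{\oC}(\chVv)$. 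Since the part of $\ker\psi$ supported on the non-planar zero-through-line diagrams is contained in $\ker\repgl$, and has the dimension asserted by the theorem, the statement is equivalent to the injectivity of $\bar\repgl$.

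For the second movement I would fix the standard diagram basis of $\rJTL{N}(m)$---the planar (ordinary Temperley--Lieb) diagrams in the zero-through-line ideal $J_0$ together with, for each $j\geq1$, the $2j$-through-line diagrams carrying their admissible windings---and show that their images under $\repgl$ are linearly independent in $\Endo_{\oC}(\chVv)$; as the number of these diagrams equals $\dim\rJTL{N}(m)$, this forces $\ker\bar\repgl=0$. The natural device is the filtration of $\Endo_{\oC}(\chVv)$ by the number of through-lines. A diagram $D$ with $2j$ through-lines factorizes into an outer half-diagram, a defect bijection-with-winding, and an inner half-diagram, so that $\repgl(D)$ contracts against the inner link state and re-expands along the outer one. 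Starting from the top through-line sector of a hypothetical relation $\sum_D\lambda_D\,\repgl(D)=0$, I would choose color configurations $(\vec c,\vec d)$ on the outer and inner boundaries that assign colors along the through-lines of a chosen top diagram $D_0$ so as to pin down its defect bijection and winding uniquely. Because $m,n>0$ provides at least three colors, one of them fermionic, such configurations exist, and any competing diagram---whether of equal through-line number but different connectivity or winding, or of strictly fewer through-lines (which would be forced to arc together two sites colored differently)---gives a vanishing matrix entry there. This isolates $\lambda_{D_0}=0$, and descending induction on the through-line grading removes every coefficient.

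The genuinely delicate points, and where I expect the real work to lie, are twofold. The zero-through-line sector cannot be separated by through-lines, so one must instead invoke the independence of the planar link-state vectors in $\chVv$---equivalently, the non-vanishing of the Temperley--Lieb Gram determinant at loop weight $m$; this is precisely the input that holds for $m+2n\geq 3$ (in particular for all $m,n>0$) and fails in the excluded case $m=0$, $n=1$, and it is already recorded before the theorem that $\repgl(J_0)$ attains the full dimension $\bigl(\binom{2L}{L}-\binom{2L}{L-1}\bigr)^2$. The second delicate point is the sign bookkeeping: the factors $(-1)^{|c_j|}$ from inner arcs and the winding sign $(-1)^{(|c_{N-1}|+|c_N|)\sum_k|c_k|}$ in $\repgl(u^2)$ must be shown to rescale the isolated nonzero entry only by a unit $\pm1$ rather than to produce cancellations between distinct diagrams. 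I would handle this by arguing that distinct basis diagrams of $\rJTL{N}(m)$ carry distinct connectivity-plus-winding data, which is detected by the support of the matrix---the pattern of enforced color equalities---independently of the signs, so that the signs enter only after each diagram has been separated. The availability of a fermionic color ($n\geq1$) is exactly what makes the extraction configurations flexible enough to realize every required winding while keeping these sign factors under control.
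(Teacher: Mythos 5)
Your first movement (factoring $\repgl$ through $\psi$ and observing that the non-planar zero-through-line diagrams account for the part of the kernel asserted by the theorem) matches the paper. The gap is in the second movement. The paper does not separate diagrams one at a time by choosing color configurations; it sandwiches a hypothetical relation $A=\sum_\mu a_\mu\repgl(\mu)=0$ between projectors, $P_kAP_j$, where $P_j$ projects onto the fermionic subspace at site $j$ and the bosonic subspace at every other site. Since every arc forces equality of the colors at its two endpoints, any diagram in which site $j$ (inner) or site $k$ (outer) is an arc-endpoint is annihilated outright, so only the diagrams $\mu_{j,k}$ carrying a through-line from $j$ to $k$ survive the sandwich. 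Cutting the annulus along that through-line puts these in bijection with ordinary Temperley--Lieb diagrams on $N-1$ sites, and $P_k\repgl(\cdot)P_j$ realizes the faithful representation of $\TL{N-1}(m+n)$ on the purely bosonic $s\ell(m+n)$ chain (Jones), which kills all the remaining coefficients in one stroke. Your plan replaces this reduction by a diagram-by-diagram isolation via chosen colorings, and that is precisely where the unproved work sits: with only $m+2n$ colors you cannot in general color $2j$ through-lines distinctly, and killing every competing diagram with \emph{fewer} through-lines requires that each such diagram have an arc joining two differently colored sites in your chosen configuration --- a condition you assert ("such configurations exist") but never verify, and which is the entire content of the faithfulness statement. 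The descending induction is only as strong as this unestablished extraction lemma.

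There is also a concrete error in your proposed input for the zero-through-line sector. You claim the independence of the planar link-state vectors is ``equivalently'' the non-vanishing of the Temperley--Lieb Gram determinant at loop weight $m$, and that this holds for $m+2n\geq 3$. At $m=1$ (the case of interest) the Gram determinants of TL cell modules \emph{do} vanish for various $j$ --- that is exactly why $\TL{N}(1)$ is non-semisimple --- so the lemma as you state it is false. What is true, and what the paper cites from~\cite{ReadSaleur07-1}, is the faithfulness of the ordinary TL representation on the super spin chain for $n\geq 1$ (or $m>1$, $n=0$), which is a genuinely different statement and not a consequence of Gram nondegeneracy. In summary, the proposal rests on one unexecuted combinatorial step and one misidentified supporting lemma; the projector trick together with the reduction to $\TL{N-1}(m+n)$ is the idea your argument is missing.
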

\begin{proof}
Our proof is essentially an adaptation  of
the original  proof of V. Jones ~\cite{Jones} in the bosonic case ($n=0$) to the super-symmetric case.

We have already shown above that the kernel of $\repgl$ contains
the non-planar diagrams\footnote{actually, it contains a linear
  combination of non-planar and planar diagrams.} from the ideal $J_0\subset\JTL{N}(m)$ and
$\repgl\bigl(J_0\cap\TL{N}(m)\bigr)$ has no additional linear relations among
the planar diagrams due to the faithfullness of $\repgl$ restricted to
the subalgebra $\TL{N}(m)$. The non-planar diagrams in the kernel of $\repgl$ span an ideal in
$J_0$ which we denote as $J_0^n$.
In the following, we will thus consider only diagrams $\mu$ from the
quotient $\rJTL{}=\JTL{N}(m)/J_0^n$ (we use the simplified notation for
the algebra $\rJTL{N}(m)$) by this ideal.

Suppose the $\repgl$ has a non-trivial kernel in $\rJTL{}$. It means
there are linear relations in $\Endo_{\oC}(\chVv)$ among the basis
diagrams $\mu\in \rJTL{}$, i.e. there exists a linear combination
\begin{equation}\label{A-def}
A=\sum_{\mu\in \rJTL{}}a_{\mu}\repgl(\mu)\equiv0,
\end{equation}
 where some
$a_{\mu}\in\oC$ are non-zero for a diagram with
through-lines. Consider a diagram $\mu$ (which we denote as $\mu_{j,k}$) with non-zero
$a_{\mu}$ and containing a through-line
connecting, say, the $j$th site on the inner boundary of the annulus with
the $k$th site on the outer.  Matrix
elements of this diagram are given either by the
ones for $\repgl(u^{2r}$), with an integer $r=|j-k|/2$, in~\eqref{transl-rep}  or, following the rules $1.$-$3.$ given above,
\begin{equation}\label{mat-mu-jk}
\repgl(\mu_{j,k})_{\dots c_j \dots c_N}^{\dots d_k
 \dots d_N} =
 \delta(c_j,d_k)\prod_{\{(j',k')\}}\delta(c_{j'},d_{k'})\prod_{\{(i,l)\}}
 \prod_{\{(i',l')\}}(-1)^{|c_i|}\delta(c_i,c_l) \delta(d_{i'},d_{l'}),
\end{equation}
for appropriate pairs $\{(j',k')\}$, with $j'\ne j$ and $k'\ne k$,
corresponding to end points of all the other through lines, and pairs
$\{(i,l)\}$, $\{(i',l')\}$ corresponding to end points of arcs at the
inner and outer boundaries, respectively.

Consider a projector $P_j$ from
$\Endo_{\oC}(\chVv)$ with the kernel of $P_j$ being the bosonic
(grade-$0$) subspace in the $j$th tensorand $V_j$ and the fermionic
(grade-$1$) subspace in each $V_{j'}$ with $j'\ne j$ (the image of
$P_j$ is thus spanned by $v_{c_1}\tensor v_{c_2}\tensor\dots\tensor
v_{c_j}\tensor\dots\tensor v_{c_N}$ with $m+n+1\leq c_j\leq m+2n$ and
$1\leq c_{j'}\leq m+n$ for $j'\ne j$). Then, we have the equalities
(we recall  the definition~\eqref{A-def})
\begin{equation}\label{eq:PAP}
0=P_k A P_j = \sum_{\mu\in \rJTL{}}a_{\mu}P_k \repgl(\mu)P_j
=\sum_{\mu_{j,k}} a_{\mu_{j,k}} P_k \repgl(\mu_{j,k})P_j
\end{equation}
while
\begin{equation}\label{eq:PAP-ne}
P_k \repgl(\mu_{j,k})P_j \ne 0,
\end{equation}
 where the last sum in~\eqref{eq:PAP} is taken over all possible
affine diagrams having a through-line connecting the $j$th site with
the $k$th site. The last equality in~\eqref{eq:PAP}
and inequality~\eqref{eq:PAP-ne} follow from a direct simple calculation
using~\eqref{mat-mu-jk}.

We note that some of the $a_{\mu_{j,k}}$ in the last sum
in~\eqref{eq:PAP} are non-zero by our construction and the set of
diagrams $\{\mu_{j,k}\}$ is in a bijection with planar diagrams from
$\TL{N-1}$; the bijection is obtained by cutting the annulus along the
through line connecting the $j$th site with the $k$th one. We thus
obtain that $P_k \repgl(\mu_{j,k})P_j$ (after multiplying it with translation operators by $j$ and
$k$ sites on the right and left, respectively) represents a planar
diagram
from $\TL{N-1}(m+n)$ on the ($n$ copies of) $s\ell(m+n)$-invariant
spin-chain of length $N-1$.  This representation is well-known to be
faithful~\cite{Jones}, in contradiction with ~\eqref{eq:PAP}. This finishes
the proof.
\end{proof}

\begin{rem}
As a corrollary to the previous theorem we conclude that the
representation $\repgl$ can be restricted to the quotient-algebra
$\rJTL{N}(m)$ introduced before~\eqref{dim-rJTL} and this
representation is faithful on the alternating super-spin chains with
$s\ell(m+n|n)$ symmetry for $m,n>0$. Faithfullness in the case $m=0$
and $n>1$ can be proved in a similar way. The orthogonal projectors $P_j$ can be
chosen such that they project onto a
one-dimensional subspace in the tensorand $V_j$ and onto the complementary subspace in each
tensorand $V_{j'}$, with $j'\ne j$.
\end{rem}

\section{Virasoro content of JTL simples}\label{app:simples}
\label{AppCharacters}

In this appendix, we gather some character formulas that give the Virasoro content of the JTL simple modules $\IrrJTL{j}{P}$.
We focus first on the case that corresponds to $(j,e^{2iK})=(0\hbox{ mod }3,1)$ that is, a spin multiple of three and no twist. In this case, the structure of submodules is much simpler, and implies a single ladder given on the right of Fig.~\ref{fig:cell-thirdroot}.
For instance, consider
\begin{equation}
F_{3,1}^{(0)}=F_{3,1}-F_{6,1}=\sum_{r=1}^\infty K_{r,3}\bar{K}_{r,9}
\end{equation}
We now have the following identities
\begin{eqnarray}
K_{2k,3}&=&\chi_{2,3k}\nonumber\\
K_{2k-1,3}&=&\chi_{1,3k-3}+\chi_{1,3k}\nonumber\\
K_{2k,9}&=&\chi_{2,3k-6}+\chi_{2,3k}+\chi_{2,3k+6}\nonumber\\
K_{2k-1,9}&=&\chi_{1,3k-9}+\chi_{1,3k-6}+\chi_{1,3k-3}+\chi_{1,3k}+\chi_{1,3k+3}+\chi_{1,3k+6}
\end{eqnarray}
and thus
\begin{eqnarray}
F_{3,1}^{(0)}=\chi_{13}\left(\bar{\chi}_{16}+2\bar{\chi}_{19}+\bar{\chi}_{1,12}\right)+
\chi_{1,6}\left(\bar{\chi}_{13}+2\bar{\chi}_{16}+2\bar{\chi}_{1,9}+2\bar{\chi}_{1,12}+\bar{\chi}_{1,15}\right)
\nonumber\\
+\sum_{k=3}^\infty \chi_{1,3k}\left(\bar{\chi}_{1,3k-9}+2\bar{\chi}_{1,3k-6}+2\bar{\chi}_{1,3k-3}+\ldots+
2\bar{\chi}_{1,3k+6}+\bar{\chi}_{1,3k+9}\right)\nonumber\\
+\chi_{2,3}\bar{\chi}_{2,9}+\chi_{2,6}\left(\bar{\chi}_{26}+\bar{\chi}_{2,12}\right)+
\sum_{k=3}^\infty \chi_{2,3k}\left(\bar{\chi}_{2,3k-6}+\bar{\chi}_{2,3k}+\bar{\chi}_{2,3k+6}\right).
\end{eqnarray}
Note that we have the symmetries
\begin{equation*}
h_{1,3k}=h_{2k+1,3},\qquad\quad
h_{2,3k}=h_{2k,3},
\end{equation*}
so the conformal weights here fill up the third row of the Kac table in Fig.~\ref{Kac-t}.
Similarly, we find
\begin{eqnarray}
F_{3,e^{2i\pi/3}}^{(0)}&=&
\chi_{11}\bar{\chi}_{52}+\chi_{31}\left(\bar{\chi}_{32}+2\bar{\chi}_{52}+\bar{\chi}_{72}\right)+\sum_{k=1}^\infty \chi_{2k,1}\left(\bar{\chi}_{2k-2,2}+\bar{\chi}_{2k+2,2}\right)\nonumber\\
&+&\sum_{k=0}^\infty \chi_{2k+5,1}\left(\bar{\chi}_{2k+1,2}+2\bar{\chi}_{2k+3,2}+2\bar{\chi}_{2k+5,2}+2\bar{\chi}_{2k+7,2}+\bar{\chi}_{2k+9,2}\right)\nonumber\\
&=&\chi_{51}\bar{\chi}_{12}+\left(\chi_{31}+2\chi_{51}+\chi_{71}\right)\bar{\chi}_{32}+\sum_{k=1}^\infty \left(\chi_{2k-2,1}+\chi_{2k+2,1}\right)\bar{\chi}_{2k,2}
\nonumber\\
&+&\sum_{k=0}^\infty \left(\chi_{2k+1,1}+2\chi_{2k+3,1}+2\chi_{2k+5,1}+2\chi_{2k+7,1}+\chi_{2k+9,1}\right)\bar{\chi}_{2k+5,2}
\label{F3q-0}
\end{eqnarray}
and
\begin{eqnarray}
F_{3,e^{4i\pi/3}}^{(0)}&=&
\chi_{12}\bar{\chi}_{51}+\chi_{32}\left(\bar{\chi}_{31}+2\bar{\chi}_{51}+\bar{\chi}_{71}\right)
+\sum_{k=1}^\infty \chi_{2k,2}\left(\bar{\chi}_{2k-2,1}+\bar{\chi}_{2k+2,1}\right)\nonumber\\
&+&\sum_{k=0}^\infty \chi_{2k+5,2}\left(\bar{\chi}_{2k+1,1}+2\bar{\chi}_{2k+3,1}+2\bar{\chi}_{2k+5,1}+2\bar{\chi}_{2k+7,1}+\bar{\chi}_{2k+9,1}\right)
\nonumber\\
&=&\chi_{52}\bar{\chi}_{11}+\left(\chi_{32}+2\chi_{52}+\chi_{72}\right)\bar{\chi}_{31}
+ \sum_{k=1}^\infty \left(\chi_{2k-2,2}+\chi_{2k+2,2}\right)\bar{\chi}_{2k,1}
\nonumber\\
&+& \sum_{k=0}^\infty \left(\chi_{2k+1,2}+2\chi_{2k+3,2}+2\chi_{2k+5,2}+2\chi_{2k+7,2}+\chi_{2k+9,2}\right)\bar{\chi}_{2k+5,1}
\end{eqnarray}
\begin{eqnarray}
F_{4,1}^{(0)}&=&\chi_{12}\bar{\chi}_{7,2}+\chi_{3,2}\left(\bar{\chi}_{5,2}+2\bar{\chi}_{7,2}+\bar{\chi}_{9,2}\right)+\chi_{5,2}\left(\bar{\chi}_{3,2}+2\bar{\chi}_{5,2}+2\bar{\chi}_{7,2}+
2\bar{\chi}_{9,2}+\bar{\chi}_{11,2}\right)\nonumber\\
&+&\sum_{k=0}^\infty \chi_{2k+5,2}\left(\bar{\chi}_{2k+1,2}+2\bar{\chi}_{2k+3,2}+2\bar{\chi}_{2k+5,2}+\ldots
+2\bar{\chi}_{2k+11,2}+\bar{\chi}_{2k+13,2}\right)\nonumber\\
&+&\chi_{22}\bar{\chi}_{62}+\sum_{k=2}^\infty \chi_{2k,2}\left(\bar{\chi}_{2k-4,2}+\bar{\chi}_{2k,2}+\bar{\chi}_{2k+4,2}\right)\nonumber\\
&=&\chi_{7,2}\bar{\chi}_{12}+\left(\chi_{5,2}+2\chi_{7,2}+\chi_{9,2}\right)\bar{\chi}_{3,2}+
\left(\chi_{3,2}+2\chi_{5,2}+2\chi_{7,2}+2\chi_{9,2}+\chi_{11,2}\right)\bar{\chi}_{5,2}\nonumber\\
&+&\sum_{k=0}^\infty\left(\chi_{2k+1,2}+2\chi_{2k+3,2}+2\chi_{2k+5,2}+\ldots
+2\chi_{2k+11,2}+\chi_{2k+13,2}\right) \bar{\chi}_{2k+5,2}\nonumber\\
&+& \chi_{62}\bar{\chi}_{22}+\sum_{k=2}^\infty \left(\chi_{2k-4,2}+\chi_{2k,2}+\chi_{2k+4,2}\right)\bar{\chi}_{2k,2}
\end{eqnarray}
\begin{eqnarray}
F_{5,1}^{(0)}&=&\chi_{11}\bar{\chi}_{7,1}+\chi_{3,1}\left(\bar{\chi}_{5,1}+2\bar{\chi}_{7,1}+\bar{\chi}_{9,1}\right)+\chi_{5,1}\left(\bar{\chi}_{3,1}+2\bar{\chi}_{5,1}+2\bar{\chi}_{7,1}+
2\bar{\chi}_{9,1}+\bar{\chi}_{11,1}\right)\nonumber\\
&+&\sum_{k=0}^\infty \chi_{2k+5,1}\left(\bar{\chi}_{2k+1,1}+2\bar{\chi}_{2k+3,1}+2\bar{\chi}_{2k+5,1}+\ldots
+2\bar{\chi}_{2k+11,1}+\bar{\chi}_{2k+13,1}\right)\nonumber\\
&+&\chi_{21}\bar{\chi}_{61}+\sum_{k=2}^\infty \chi_{2k,1}\left(\bar{\chi}_{2k-4,1}+\bar{\chi}_{2k,1}+\bar{\chi}_{2k+4,1}\right)\nonumber\\
&=&\chi_{7,1}\bar{\chi}_{11}+\left(\chi_{5,1}+2\chi_{7,1}+\chi_{9,1}\right)\bar{\chi}_{3,1}+
\left(\chi_{3,1}+2\chi_{5,1}+2\chi_{7,1}+2\chi_{9,1}+\chi_{11,1}\right)\bar{\chi}_{5,1}\nonumber\\
&+&\sum_{k=0}^\infty\left(\chi_{2k+1,1}+2\chi_{2k+3,1}+2\chi_{2k+5,1}+\ldots
+2\chi_{2k+11,1}+\chi_{2k+13,1}\right) \bar{\chi}_{2k+5,1}\nonumber\\
&+& \chi_{61}\bar{\chi}_{21}+\sum_{k=2}^\infty \left(\chi_{2k-4,1}+\chi_{2k,1}+\chi_{2k+4,1}\right)\bar{\chi}_{2k,1}
\end{eqnarray}
\begin{eqnarray}
F_{6,e^{2i\pi/3}}^{(0)}&=&\chi_{11}\bar{\chi}_{9,2}+\chi_{3,1}\left(\bar{\chi}_{7,2}+2\bar{\chi}_{9,2}+\bar{\chi}_{11,2}\right)+\chi_{5,1}\left(\bar{\chi}_{5,2}+2\bar{\chi}_{7,2}+2\bar{\chi}_{9,2}+
2\bar{\chi}_{11,2}+\bar{\chi}_{13,2}\right)\nonumber\\
&+&\sum_{k=0}^\infty \chi_{2k+7,1}\left(\bar{\chi}_{2k+3,2}+2\bar{\chi}_{2k+5,2}+2\bar{\chi}_{2k+7,2}+\ldots
+2\bar{\chi}_{2k+13,2}+\bar{\chi}_{2k+15,2}\right)\nonumber\\
&+&\chi_{21}\bar{\chi}_{82}+\chi_{41}\left(\bar{\chi}_{62}+\bar{\chi}_{10,2}\right)+\sum_{k=3}^\infty
\chi_{2k,1}\left(\bar{\chi}_{2k-6,2}+\bar{\chi}_{2k-2,2}+\ldots+\bar{\chi}_{2k+6,2}\right)\nonumber\\
&=&\chi_{9,1}\bar{\chi}_{12}+\left(\chi_{7,1}+2\chi_{9,1}+\chi_{11,1}\right)\bar{\chi}_{3,2}+
\left(\chi_{5,1}+2\chi_{7,1}+2\chi_{9,1}+2\chi_{11,1}+\chi_{13,1}\right)\bar{\chi}_{5,2}\nonumber\\
&+&\sum_{k=0}^\infty\left(\chi_{2k+3,1}+2\chi_{2k+5,1}+2\chi_{2k+7,1}+\ldots
+2\chi_{2k+13,1}+\chi_{2k+15,1}\right) \bar{\chi}_{2k+7,2}\nonumber\\
&+&\chi_{81}\bar{\chi}_{22}+\left(\chi_{61}+\chi_{10,1}\right)\bar{\chi}_{42}+\sum_{k=3}^\infty
\left(\chi_{2k-6,1}+\chi_{2k-2,1}+\ldots+\chi_{2k+6,1}\right)\bar{\chi}_{2k,2}
\end{eqnarray}
while $F_{6,e^{4i\pi/3}}^{(0)}$ would be obtained by switching the $1$ and $2$ for the right  character labels.

\section{Free field and vertex operators}\label{appFF}

We recall here basic facts about scalar free fields and Feigin--Fuchs modules. Let $\varphi$ denote a free scalar field with the OPE
\begin{equation}
  \dd\varphi(z)\dd\varphi(w)=\ffrac{1}{(z-w)^2}
\end{equation}
and the mode expansion
\begin{equation}\label{scalar-modes}
  \dd\varphi(z)=\sum_{n\in\oZ}\varphi_n z^{-n-1}.
\end{equation}
The energy--momentum tensor is given by
\begin{equation}\label{eq:the-Virasoro}
  T(z)=\half\,\dd\varphi(z)\dd\varphi(z)
  +\ffrac{\alpha_0}{2}\,\dd^2\varphi(z)
\end{equation}
where we fix two coprime positive integers $p_+$ and $p_-$ and set
\begin{equation}\label{eq:numbers}
  \alpha_-=-\sqrt{\ffrac{2p_+}{p_-}},
  \quad
  \alpha_+=\sqrt{\ffrac{2p_-}{p_+}},\quad
  \alpha_0=\alpha_++\alpha_-.
\end{equation}.

The modes of $\dd\varphi(z)$ span the
Heisenberg algebra and the modes of $T(z)$ span the Virasoro algebra
$\Vir$ with the central charge
\begin{equation}\label{eq:centr-charge}
  c=1-6\ffrac{\bigl(p_+-p_-\bigr)^2}{p_+p_-}.
\end{equation}

The vertex operators are given by $e^{j(r,s)\varphi(z)}$ with $j(r,s)=
\frac{1-s}{2}\,\alpha_- + \frac{1-r}{2}\,\alpha_+$, \ $r,s\in\oZ$.
Equivalently, these vertex operators can be parameterized as
\begin{equation}\label{V-param}
  V_{r,s;n}(z)= e^{\frac{p_-(1-r)-p_+(1-s)+p_+p_-n}{\sqrt{2p_+p_-}}
    \varphi(z)},
  \qquad 1\leq r\leq p_+,\quad 1\leq s\leq p_-,\quad n\,{\in}\,\oZ.
\end{equation}
The conformal dimension of $V_{r,s;n}(z)$ assigned by the
energy--momentum tensor is
\begin{equation}\label{Delta-rs}
  \Delta_{r,s;n}=\ffrac{(p_+ s\!-\!p_- r\!+\!p_+ p_- n)^2
    -(p_+\!-\!p_-)^2}{4p_+p_-}.
\end{equation}
We note that
\begin{equation}\label{id-dim}
  \Delta_{-r,-s;-n}=\Delta_{r,s;n},\qquad
  \Delta_{r+kp_+,s+kp_-;n}=\Delta_{r,s;n},
  \qquad  \Delta_{r,s+kp_-;n}=\Delta_{r,s;n+k}.
\end{equation}

\subsection{Feigin--Fuchs modules}
For $1\leq r\leq p_+$, $1\leq s\leq p_-$, and $n\,{\in}\,\oZ$, let
$\repF_{r,s;n}$ denote the Fock module of the Heisenberg algebra
generated from (the state corresponding to) the vertex operator
$V_{r,s;n}(z)$.  The zero mode~$\varphi_0=\frac{1}{2i\pi}\oint
dz\dd\varphi(z)$ acts in~$\repF_{r,s;n}$ by multiplication with the
number
\begin{equation*}
  \varphi_0\,v=
  \ffrac{p_-(1-r)-p_+(1-s)+p_+p_-n}{
    \sqrt{\mathstrut 2p_+p_-}}\,v,
  \quad
  v\in\repF_{r,s;n}.
\end{equation*}
We write $\repF_{r,s}\equiv\repF_{r,s;0}$.  For convenience of
notation, we identify $\repF_{0,s;n}\equiv\repF_{p_+,s;n+1}$ and
$\repF_{r,0;n}\equiv\repF_{r,p_-;n-1}$.

Let $\rep{Y}_{r,s;n}$ with $1\leq r\leq p_+$, $1\leq s\leq p_-$, and
$n\,{\in}\,\oZ$ denote the Virasoro module that coincides with
$\repF_{r,s;n}$ as a linear space, with the Virasoro algebra action
given by~\eqref{eq:the-Virasoro} (see~\cite{[FF2]}).  As with the
$\repF_{r,s;n}$, we also write $\rep{Y}_{r,s}\equiv\rep{Y}_{r,s;0}$.

 The well-known structure~\cite{[FF]} of $\rep{Y}_{r,s}$ for $1\leq r\leq
p_+\,{-}\,1$ and $1\leq s\leq p_-\,{-}\,1$ is recalled in
Fig.~\ref{fig:embedding}.  We
let $\repJ_{r,s;n}$ denote the irreducible Virasoro module with the
highest weight~$\Delta_{r,s;n}$ (as before, $1\leq r\leq p_+$, $1\leq
s\leq p_-$, and $n\,{\in}\,\oZ$).  Evidently,
$\repJ_{r,s;n}\simeq\repJ_{p_+-r,p_--s;-n}$.  The
$\half(p_+\,{-}\,1)(p_-\,{-}\,1)$ nonisomorphic modules among the
$\repJ_{r,s;0}$ with $1\leq r\leq p_+\,{-}\,1$ and $1\leq s\leq
p_-\,{-}\,1$ are the irreducible modules from the Virasoro $(p_+,p_-)$
minimal model.  We also write $\repJ_{r,s}\equiv\repJ_{r,s;0}$. For
convenience of notation, we identify
$\repJ_{0,s;n}\equiv\repJ_{p_+,s;n+1}$ and
$\repJ_{r,0;n}\equiv\repJ_{r,p_-;n-1}$.

\begin{figure}[tbp]
  \mbox{}\hfill\mbox{}
  \xymatrix
  {{}&\times\ar[dr]\ar@{}|{[r,s;0]}[]+<20pt,20pt>&&\\
    \blacktriangle\ar[ur]\ar[d]\ar[drr]\ar@{}|{\kern-25pt
      \substack{[p_+-r,s;1]}}[]+<-30pt,10pt>
    &&\bullet\ar@{}|{\substack{[r,p_--s;1]\kern-10pt}}[]+<45pt,10pt>\\
    \circ\ar[urr]\ar[drr]\ar@{}|{[r,s;2]}[]+<-45pt,10pt>
    &&\scrBox\ar[u]\ar[d]\ar@{}|{[p_+-r,p_--s;2]}[]+<70pt,10pt>\\
    \blacktriangle\ar[u]\ar[d]\ar[urr]\ar[drr]
    \ar@{}|{\kern-7pt\substack{[p_+-r,s;3]}}[]+<-45pt,10pt>
    &&\bullet\ar@{}|{\substack{[r,p_--s;3]}\kern-7pt}[]+<45pt,10pt>\\
    \circ\ar[urr]\ar@{}|{[r,s;4]}[]+<-45pt,10pt> \ar@{.}[]-<0pt,18pt>
    &&\scrBox\ar[u]\ar@{}|{[p_+-r,p_--s;4]}[]+<70pt,10pt>
    \ar@{.}[]-<0pt,18pt>}
  \mbox{}\hfill\mbox{}
  \caption{Subquotient structure of the
      Feigin--Fuchs module $\rep{Y}_{r,s}$.   The
    notation is as follows.  The cross {\small$\times$} corresponds to
    the subquotient $\repJ_{r,s}$, the filled dots {\small$\bullet$}
    to $\repJ_{r,p_--s;2n+1}$ with $n\in\oN_0$, the triangles
    {\small$\blacktriangle$} to $\repJ_{p_+-r,s;2n+1}$
    with~$n\in\oN_0$, the open dots {\small$\circ$} to
    $\repJ_{r,s;2n}$ with~$n\in\oN$, and the squares {\small$\scrBox$}
    to $\repJ_{p_+-r,p_--s;2n}$ with $n\in\oN$.  The notation $[a,b;n]$ in square brackets is for
    subquotients isomorphic to~$\repJ_{a,b;n}$.  The filled dots
    constitute the socle of $\rep{Y}_{r,s}$.}
  \label{fig:embedding}
\end{figure}
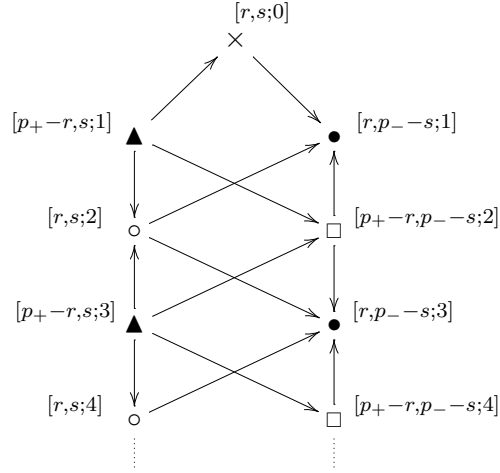

The Fock spaces introduced above constitute a free-field module
\begin{equation}\label{lattice-F}
  \repF
  =\bigoplus_{n\in\oZ}
  \bigoplus_{r=1}^{p_+}\bigoplus_{s=1}^{p_-}\repF_{r,s;n}.
\end{equation}
It can be regarded as (the chiral sector of) the space of states of
the Gaussian Coulomb gas model compactified on the circle of
radius~$\sqrt{2p_+p_-}$. Note that the central charges  we consider correspond to $p_+=x$ and $p_-=x+1$.

\end{document}